\documentclass[format=acmsmall,review=false,screen=true,nonacm=true]{acmart}

\def\BibTeX{{\rm B\kern-.05em{\sc i\kern-.025em b}\kern-.08emT\kern-.1667em\lower.7ex\hbox{E}\kern-.125emX}}

\usepackage[linesnumbered,algoruled,noend]{algorithm2e}
\usepackage{amsfonts}
\usepackage{amsthm}
\usepackage{amsmath}
\usepackage{graphicx}
\usepackage{tikz}
\usetikzlibrary{arrows}
\usetikzlibrary{shapes,backgrounds,shapes.misc}
\usepackage{subcaption}
\usepackage{siunitx}
\usepackage[capitalize]{cleveref}
\usepackage[appendix=inline]{apxproof} 
\newtoggle{TR}
\toggletrue{TR} 

\usepackage{mathtools}
\mathtoolsset{showonlyrefs} 

\newtheoremrep{theorem}{Theorem}[section]
\newtheoremrep{lemma}[theorem]{Lemma}
\newtheoremrep{corollary}[theorem]{Corollary}
\newtheoremrep{assumption}[]{Assumption}


\newcommand\MASty[1]{\mbox{\footnotesize \sffamily{\textup{#1}}}}
\newcommand{\BStable}{\lozenge \MASty{STABLE}_n}

\newcommand{\namedref}{\cref}
\newcommand{\sectionref}[1]{\namedref{#1}}

\newcommand{\figureref}[1]{\namedref{#1}}
\newcommand{\lemmaref}[1]{\namedref{#1}}

\newcommand{\algref}[1]{\namedref{#1}}

\newcommand{\N}{\mathbb{N}}

\newcommand{\R}{\mathbb{R}}

\newcommand{\T}{\mathcal{T}}
\renewcommand{\L}{\mathcal{L}}
\renewcommand{\P}{\mathcal{P}}
\providecommand\G{\mathcal{G}}
\renewcommand\G{\mathcal{G}}
\newcommand\PTG{\mathcal{PT}}
\newcommand\PT{\mathcal{PT}}

\newcommand\Gomega{\mathcal{G}^{\omega}}
\newcommand\PTomega{\mathcal{PT}^{\omega}}

\newcommand\MA{\text{\sffamily{MA}}}

\newcommand\Ker{\text{Ker}}

\newcommand\A{\mathcal{A}}
\providecommand\C{\mathcal{C}}
\renewcommand\C{\mathcal{C}}

\newcommand\Comega{\mathcal{C}^{\omega}}

\newcommand\otau{\hat{\tau}}

\newcommand\V{\mathcal{V}}

\SetKwData{var}{var}
\SetKwData{xDat}{x}

\SetKwData{EDat}{E}
\SetKwData{VDat}{V}
\SetKwData{SDat}{S}
\SetKwData{CDat}{C}
\SetKwData{KDat}{K}
\SetKwData{RDat}{R}

\newcommand{\cl}[1]{\overline{#1}}

\newcommand{\ob}{Ob}
\newcommand{\ho}{HO}

\newcommand{\dunif}{d_{\mathrm{u}}}
\newcommand{\dnonunif}{d_{\mathrm{nu}}}
\newcommand{\IN}{\mathbb{N}}
\newcommand{\IR}{\mathbb{R}}

\usepackage{todonotes}
\newcommand\ACT{\text{\sffamily{ACT}}}
\newcommand\act{\text{\sffamily{act}}}
\newcommand\Sched{\text{\sffamily{Sched}}}
\newcommand\recv{\text{\sffamily{delv}}}
\newcommand\done{\text{\sffamily{done}}}
\newcommand\fail{\text{\sffamily{fail}}}

\newcommand{\rST}[1]{r_{\text{stab},#1}}
\newcommand\true{\text{\sffamily{true}}}
\newcommand\false{\text{\sffamily{false}}}
\newcommand\heartbeat{\text{\sffamily{heartbeat}}}
\newcommand\accusation{\text{\sffamily{accusation}}}
\newcommand\accuse{\text{\sffamily{accuse}}}
\newcommand\nottimelyrec{\text{\sffamily{nottimelyrec}}}
\newcommand\nottimely{\text{\sffamily{nottimely}}}
\newcommand\accusationcounter{\text{\sffamily{accusationcounter}}}
\newcommand\hearedof{\text{\sffamily{heardof}}}
\newcommand\minhearedof{\text{\sffamily{minheardof}}}
\newcommand\oldenough{\text{\sffamily{oldenough}}}
\newcommand\mature{\text{\sffamily{mature}}}
\newcommand\simmaj{\sim_{\geq n-f}}

\newcommand{\myFunc}{\Delta}



\setcopyright{acmlicensed}

\acmDOI{0000001.0000001}


\begin{document}
\title{Topological Characterization of Consensus in Distributed Systems}
\subtitle{Dedicated to the 2018 Dijkstra Prize winners Bowen Alpern and Fred B. Schneider}
\author{Thomas Nowak}
\authornote{Thomas Nowak has been supported by the Universit\'e Paris-Saclay project DEPEC MODE and the ANR project DREAMY (ANR-21-CE48-0003).}
\affiliation{%
  \institution{Universit\'e Paris-Saclay, CNRS, ENS Paris-Saclay}
  \city{Gif-sur-Yvette}
  \country{France}
}
\affiliation{%
  \institution{Institut Universitaire de France}
  \city{Paris}
  \country{France}
}
\email{thomas@thomasnowak.net}

\author{Ulrich Schmid}
\authornote{Ulrich Schmid has been supported by the Austrian Science Fund (FWF) under project ADynNet (P28182), RiSE/SHiNE (S11405), DMAC (P32431), and ByzDEL (P33600).}
\affiliation{%
  \institution{TU Wien}
  \city{Vienna}
  \country{Austria}
}
\email{s@ecs.tuwien.ac.at}

\author{Kyrill Winkler}
\authornote{Kyrill Winkler has been supported by the Austrian Science Fund (FWF) under project ADynNet (P28182) and RiSE/SHiNE (S11405).
When this work was initiated, Kyrill Winkler was with TU Wien.}
\affiliation{%
  \institution{ITK Engineering}
  \country{Austria}
}
\email{kyrill.winkler@itk-engineering.com}

\begin{abstract}
  We provide a complete characterization of both uniform and non-uniform
deterministic consensus solvability in distributed systems with benign process
and communication faults using point-set topology. More specifically, we
non-trivially extend the approach introduced by Alpern and Schneider in 1985,
by introducing novel fault-aware topologies on the space of infinite
executions: the process-view topology, induced by a distance function that
relies on the local view of a given process in an execution, and the minimum
topology, which is induced by a distance function that focuses on the local
view of the process that is the last to distinguish two executions. Consensus
is solvable in a given model if and only if the sets of admissible executions
leading to different decision values is disconnected in these topologies.
By applying our
approach to a wide range of different applications, we provide a topological
explanation of a number of existing algorithms and impossibility results and
develop several new ones, including a general equivalence of the strong and weak validity conditions.

\end{abstract}

\begin{CCSXML}
<ccs2012>
<concept>
<concept_id>10003752.10003809.10010172</concept_id>
<concept_desc>Theory of computation~Distributed algorithms</concept_desc>
<concept_significance>500</concept_significance>
</concept>
</ccs2012>
\end{CCSXML}

\ccsdesc[500]{Theory of computation~Distributed algorithms}

\keywords{Topological characterization;
point-set topology;
consensus;
distributed systems; 
benign faults}

\maketitle

\section{Introduction}
\label{sec:intro}

We provide a complete characterization
of the solvability of deterministic
non-uniform and uniform consensus in distributed systems with benign process and/or
communication failures, 
using point-set topology as introduced in the Dijkstra Prize-winning paper by
Alpern and Schneider \cite{AS84}.
Our results hence precisely delimit the consensus 
solvability/impossibility border in very different distributed systems such as
dynamic networks \cite{KO11:SIGACT} controlled by a message adversary 
\cite{AG13}, synchronous distributed systems with processes that
may crash or commit send and/or receive omission failures \cite{PT86},
or purely asynchronous systems with crash failures \cite{FLP85}, for example.
Whereas we will primarily focus on message-passing architectures in our examples, 
our topological approach also covers shared-memory systems \cite{MRR03:JACM}.

Deterministic consensus, where every process starts with some initial input value 
picked from a finite set $\V$ and has
to irrevocably compute a common output value, is arguably the most well-studied
problem in
distributed computing. Both impossibility results and consensus
algorithm are known for virtually all distributed computing that
have been proposed so far. However, they have been obtained primarily
on a case-by-case basis, using classic combinatorial analysis 
techniques \cite{FR03}. Whereas
there are also some generic characterizations~\cite{MR02,LM95:DC},
i.e., ones that can be applied to different models of computation,
we are not aware of any approach
that allows to precisely characterize the consensus solvability/impossibility
border for arbitrary distributed systems with benign process- and 
communication-failures.

\begin{figure*}
\input{subdivision.tex}
\hspace{2cm}
\begin{tikzpicture}[>=latex']
\node[very thick, draw, circle] (C0) at (0, 0) {$C_0$};

\node[draw, circle] (C11) at (-2, 2) {$C_1$};
\node[draw, circle] (C12) at ( 0, 2) {$C_1'$};
\node[very thick, draw, circle] (C13) at ( 2, 2) {$C_1''$};
\draw[->] (C0) -- (C11);
\draw[->] (C0) -- (C12);
\draw[very thick, ->] (C0) -- (C13);

\node (C21) at (-2.5, 4) {};
\node (C22) at (-2.0, 4) {};
\node (C23) at (-1.5, 4) {};
\draw[->] (C11) -- (C21);
\draw[->] (C11) -- (C22);
\draw[->] (C11) -- (C23);

\node (C24) at (-0.5, 4) {};
\node (C25) at (-0.0, 4) {};
\node (C26) at ( 0.5, 4) {};
\draw[->] (C12) -- (C24);
\draw[->] (C12) -- (C25);
\draw[->] (C12) -- (C26);

\node (C27) at (1.5, 4) {};
\node (C28) at (2.0, 4) {};
\node (C29) at (2.5, 4) {};
\draw[very thick, ->] (C13) -- (C27);
\draw[->] (C13) -- (C28);
\draw[->] (C13) -- (C29);
\end{tikzpicture}
\caption{Comparison of the combinatorial topology approach and the point-set topology approach:
The combinatorial topology approach (left) studies sequences of increasingly
refined spaces in which the objects of interest are simplices (corresponding to
configurations).
The point-set topology approach (right) studies a single space in which the objects of interest are executions (i.e., infinite sequences of configurations).}
\label{fig:combtop}
\end{figure*}

In this paper, we provide such a characterization based on
point-set topology, as introduced
by Alpern and Schneider~\cite{AS84}. Regarding topological methods in distributed computing, one 
has to distinguish point-set topology, which considers the space of infinite
executions of a distributed algorithm, from combinatorial topology, which 
studies the topology of reachable states of prefixes of admissible executions
using simplicial complexes.
\figureref{fig:combtop} illustrates the objects studied in
combinatorial topology vs.\ point-set topology. As of today, 
combinatorial topology has been developed into a quite widely 
applicable tool for the analysis of distributed systems~\cite{HKR13}. 
A celebrated result in this area is the \emph{Asynchronous Computability
Theorem}~\cite{HS99:ACT}, \cite{GRS22:ITCS}, for example, which characterizes solvable tasks
in wait-free asynchronous shared memory systems with crashes.

By contrast, point-set topology has only rarely been used in
distributed computing. The primary objects
are the \emph{infinite} executions of a distributed algorithm~\cite{AS84}. 
By defining a suitable metric 
between two infinite executions $\gamma$ and $\delta$, each considered as the corresponding infinite sequence of global states 
of the algorithm in the respective execution, they can be viewed as 
elements of a topological space. For example, according to the common-prefix
metric $d_{\max}(\gamma,\delta)$, the executions $\gamma$ and $\delta$
are close if the common prefix where
\emph{no} process can distinguish them is long.
A celebrated general result~\cite{AS84} is that closed
and dense sets in the resulting space precisely characterize 
safety and liveness properties, respectively. 

Prior to our paper, however, point-set
topology has only occasionally been used for establishing
impossibility results.
We are only aware of some early work by one of the 
authors of this paper on a generic topological impossibility proof 
for consensus in compact models~\cite{Now10:master}, and a topological 
study of the strongly dependent decision problem~\cite{BR19:ICDCN}.
Lubitch and Moran~\cite{LM95:DC}  introduced a construction for schedulers, which
leads to limit-closed submodels\footnote{Informally, a model is limit-closed
if the limit of a sequence of growing prefixes of admissible executions is admissible.
Note that the wait-free asynchronous model is limit-closed.} 
of classic non-closed distributed computing
models (like asynchronous systems consisting of $|\Pi|=n$ processes, up to which
$t<n-1$ may crash). In a similar spirit, Kuznetsov, Rieutord, and He~\cite{KRH18:PODC}
showed, in the setting of combinatorial topology, how to
reason 
about non-closed models by considering equivalent 
affine tasks that are closed. 
Gafni, Kuznetsov, and Manolescu~\cite{GKM14:PODC} tried to extend the ACT to
also cover some non-compact shared memory models that way.
A similar purpose is served by defining layerings, as introduced by 
Moses and Rajsbaum~\cite{MR02}. Whereas such constructions of closed submodels 
greatly simplify impossibility proofs, they do not lead to a precise characterization 
of consensus solvability in non-closed models: We are not aware
of any proof that there is an equivalent closed submodel for every non-closed
model. 

\medskip
\noindent
\textbf{Contributions.} 
Building on our PODC'19 paper~\cite{NSW19:PODC} devoted
to consensus in dynamic networks under message adversaries~\cite{AG13}, the present paper
provides a complete topological
characterization of both the non-uniform and uniform deterministic consensus
solvability/impossibility border for general distributed systems with 
benign
process and/or communication faults. To achieve
this, we had to add several new topological ideas to the setting 
of Alpern and Schneider~\cite{AS84}, as detailed below, which not only allowed
us to
deal with both closed and non-closed models, but also provided us with
a topological explanation of bivalence~\cite{FLP85} and bipotence~\cite{MR02} 
impossibility proofs. In more detail:

(i) We introduce a simple generic system model for full-information protocols
that covers all distributed system models with benign faults we are aware of.
We define new topologies on the execution space of general distributed
algorithms in this model, which allow us to reason about sequences of local 
views of (correct) processes, rather than about global configuration sequences.
The \emph{$p$-view topology} is defined by a distance function $d_p(\gamma,\delta)$ based on 
the common prefix of $p$'s local views in the executions $\gamma$ and $\delta$.
The uniform and non-uniform minimum topology are induced by the last (correct) 
process to notice a difference between two executions. In the appendix, we introduce process-time
graphs~\cite{BM14:JACM} as a succinct alternative to configuration sequences in executions, and show that they are equivalent w.r.t.\ our topological
reasoning. This is accomplished by instantiating our generic system model as an 
``operational'' system model, based on the widely applicable modeling framework 
introduced by Moses and Rajsbaum~\cite{MR02}.

(ii) We show that consensus can be modeled as a continuous decision 
function $\Delta$ in our topologies, 
which maps an admissible execution to its unique decision value. 
This allows us to prove that consensus is solvable if and only 
if all the decision sets, i.e., the pre-images $\Sigma_v=\Delta^{-1}(v)$
for every decision value $v \in \V$, 
are disconnected from each other. We also provide a universal uniform and non-uniform
consensus algorithm, which rely on this separation. 

(iii) We provide an alternative characterization of uniform and non-uniform 
consensus solvability based on the broadcastability of the decision sets and
their connected components. It applies for the usual situation where every 
vector of values from $\V$ is an allowed assignment of the input values of the 
processes (which is not the case for condition-based consensus \cite{MRR03:JACM}, however). Interestingly, our respective results imply that solving consensus with weak validity
and consensus with strong validity is equivalent in any model with benign faults. 
Moreover, we provide a characterization of consensus
solvability based on the limits of two infinite sequences of 
admissible executions, taken from different decision sets. Consensus
is impossible if there is just one pair of such limits
with distance~0, which actually coincide with the forever bivalent/bipotent
executions constructed in previous proofs~\cite{FLP85,MR02}.

(iv) We apply our topological approach to different distributed computing
models. This way, we provide
a topological explanation of well-known classic results like bivalence proofs
and consensus solvability/impossibility in synchronous systems with general 
omission faults.
Despite the fact that consensus has been thoroughly studied in virtually any
conceivable distributed computing model, we also provide some new results:
We provide a new necessary and sufficient condition for solving condition-based consensus
with strong validity in asynchronous shared-memory systems~\cite{MRR03:JACM},
comprehensively characterize consensus solvability in dynamic networks
with both compact and non-compact message adversaries \cite{AG13}, and
give a novel consensus algorithm that does not rely on an implementation
of the $\Omega$ failure detector for systems with an eventually timely
$f$-source \cite{ADGFT04,HMSZ08:TDSC}.

\medskip
\noindent
\textbf{Paper organization.} In \cref{sec:general:model},
we define the elements
of the space that will be endowed with our new topologies in
\cref{sec:structure:executions}. 
Section~\ref{sec:consensus} introduces the consensus problem in topological
terms and
provides our abstract characterization result for uniform consensus
(Theorem~\ref{thm:char:unif}) and
non-uniform consensus (Theorem~\ref{thm:char:nonunif}), which also
provide universal algorithms.  Alternative characterizations
based on limit exclusion and broadcastability are provided in \cref{sec:fairunfair} and \cref{sec:broadcastability}, respectively. Our topological characterizations are
complemented by Section~\ref{sec:applications}, which is devoted to applications.
Some conclusions in Section~\ref{sec:conclusions} round off our paper.
In \cref{sec:model}, we introduce process-time graphs and an
operationalization of our generic system model for some 
classic distributed computing models.

\section{Related Work}
\label{sec:relwork}

\medskip
\noindent

Besides the few point-set topology papers~\cite{AS84,Now10:master,BR19:ICDCN}
and the closed model constructions~\cite{LM95:DC,MR02,KRH18:PODC,GKM14:PODC} 
already mentioned in \cref{sec:intro}, there is an abundant literature on
consensus algorithms and impossibility proofs. 

Regarding combinatorial topology, it is worth mentioning that 
our study of the indistinguishability relation of prefixes of
executions is closely connected to connectivity properties of the $r$-round
protocol complex. However, in non-limit-closed models, we need to go 
beyond a uniformly bounded prefix length. This is in sharp contrast
to the models usually considered in combinatorial topology~\cite{CFPR19:SSS,ACR20:OPODIS}, which are limit-closed (typically, 
wait-free asynchronous).

A celebrated paper on the impossibility of consensus in asynchronous systems with crash failures
is by Fischer, Lynch, and Paterson~\cite{FLP85}, who also introduced the bivalence proof technique. This impossibility can be avoided by means of 
unreliable failure detectors~\cite{CT96} or condition-based approaches
restricting the allowed inputs \cite{MRR03:JACM}.
Consensus in synchronous systems with Byzantine-faulty processes has
been introduced by Lamport, Shostak, and Pease~\cite{LSP82}. The seminal works by Dolev, Dwork, and Stockmeyer~\cite{DDS87} and Dwork, Lynch, and Stochmeyer~\cite{DLS88} on partially synchronous systems introduced important abstractions like eventual stabilization and eventually bounded message delays, and provided a characterization of consensus solvability under various combinations of synchrony and failure models.
Consensus in systems with weak timely links and crash failures was considered~\cite{ADGFT04,HMSZ08:TDSC}. Algorithms for consensus in systems with general omission process faults were provided by Perry and Toueg~\cite{PT86}.

Perhaps one of the earliest characterizations of consensus solvability in
synchronous distributed systems prone to communication errors is the seminal work by Santoro and Widmayer~\cite{SW89}, where
it was shown that consensus is impossible if up to $n-1$ messages may be lost in
each round.
This classic result was refined by several authors~\cite{SWK09, CBS09} and, more recently,  by
Coulouma, Godard, and Peters~\cite{CGP15}, where a property
of an equivalence relation on the sets of communication graphs was found that
captures exactly the source of consensus impossibility.
Following Afek and Gafni~\cite{AG13}, such distributed systems are nowadays known as dynamic networks, 
where the per-round directed communication graphs are controlled by a message adversary. Whereas the paper by Coulouma, Godard, and Peters~\cite{CGP15} and follow-up work \cite{WPRSS23:ITCS}
studied oblivious message adversaries, where the communication graphs 
are picked arbitrarily from a set of candidate graphs, more recent papers~\cite{BRSSW18:TCS,WSS19:DC} 
studied eventually stabilizing message adversaries, which guarantee that some rounds with ``good'' communication graphs will eventually be generated. Note that oblivious message adversaries
are limit-closed, which is not the case for message adversaries like the eventually stabilizing ones.
Raynal and Stainer explored the relation between message adversaries and failure detectors~\cite{RS13:PODC}.

The first characterization of consensus solvability under general message adversaries was provided 
by Fevat and Godard~\cite{FG11}, albeit only for systems that consist of two processes.
A bivalence argument was used there to show that certain communication patterns,
namely, a ``fair'' or a special pair of ``unfair'' communication patterns (see \cref{def:fairunfair} for more information), must be excluded by the message adversary for consensus to become solvable. 

\section{Generic System Model}\label{sec:general:model}

We consider distributed message passing or shared memory
systems made up of a set of $n$ deterministic processes $\Pi$
with unique identifiers, taken 
from $[n]=\{1,\dots,n\}$ for simplicity. 
We denote individual processes by letters $p$, $q$, etc.

For our characterization of consensus solvability, we restrict
our attention to \emph{full-information executions}, in which
processes continuously relay all the information they gathered to
all other processes, and eventually apply some local decision function.
The exchanged information includes the process's initial value, but also, more
importantly, a record of all events (message receptions, shared memory
readings, object invocations, \dots) witnessed by
the process.
As such, our general system model is hence applicable whenever no constraints are
placed on the size of the local memory and the size of values to be
communicated (e.g., message/shared-register size).
In particular, it is applicable to classical synchronous and asynchronous
message-passing and shared-memory models, with benign
process and communication faults. 
In \cref{sec:model}, we will also provide a topologically equivalent
``operationalization'' of our
generic system model built on top of process-time graphs~\cite{BM14:JACM},
based the modeling framework introduced by Moses and Rajsbaum~\cite{MR02}.

Formally, a (full-information)
execution is a sequence of (full-information) configurations.
For every process $p\in\Pi$, there is an equivalence relation~$\sim_p$ on the
set~$\C$ of configurations---the
{$p$-indistinguishability relation}---indicating whether process~$p$ can
locally distinguish two configurations, i.e., if it
has the same \emph{view} $V_p(C)=V_p(D)$ in~$C$ and~$D$.
In this case we write $C\sim_p D$.
Note that two configurations
that are indistinguishable for all processes need not be equal.
In fact, configurations usually include some state of the communication media
that is not accessible to any process.

In addition to the indistinguishability relations, we assume the existence of a
function $\ob:\C\to2^\Pi$ that specifies the set of \emph{obedient} processes in a
given configuration. Obedient processes must follow the algorithm and satisfy
the (consensus) specification;
usually, $\ob(C)$ is the set of non-faulty processes.
Again, this information is usually not accessible to the processes.
We make the restriction that disobedient processes cannot
recover and become obedient again, 
i.e., that $\ob(C) \supseteq \ob(C')$ if~$C'$ is reachable from~$C$.
We extend the obedience function to the set $\Sigma\subseteq\C^\omega$ of
\emph{admissible executions} in a given model by
setting 
$\ob:\Sigma \to 2^\Pi$,
$\ob(\gamma) = \bigcap_{t\geq 0} \ob(C^t)$
where $\gamma = (C^t)_{t\geq 0}$.
Here, $t \in \IN_0 = \IN \cup \{0\}$ denotes a notion
of \emph{global} time that is not accessible to the processes.
Consequently, a process is obedient in an 
execution if it is obedient in all of its
configurations.
We further make the restriction that there is at least one obedient process
in every execution, i.e., that $\ob(\gamma)\neq \emptyset$ for all
$\gamma\in\Sigma$. Moreover, we assume that $\ob(C)=\Pi$ for every
initial configuration, in order to make input value assignments (see below)
well-defined for all processes.

We also assume that every process has the possibility to weakly count the steps it
has taken.
Formally, we assume the existence of weak clock functions
$\chi_p:\C\to\IN_0$
such that for every execution 
$\delta = (D^t)_{t\geq0}\in\Sigma$
and every configuration $C\in\C$,
the relation $C\sim_p D^t$ implies $t\geq \chi_p(C)$.
Additionally, we assume that $\chi_p(D^t)\to\infty$ as $t\to\infty$ for every
execution $\delta\in\Sigma$ and every obedient process $p\in\ob(\delta)$.
$\chi_p$ hence ensures
that a configuration $D^t$ where $p$ has some specific view $V_p(D^t)=V_p(C)$ 
cannot occur before time $t=\chi_p(C)$ in any execution $\delta$.
Our weak clock functions hence allow to model lockstep synchronous
rounds by choosing
$\chi(D^t)=t$ for any execution $\delta = (D^t)_{t\geq0}\in\Sigma$,
but are also suitable for modeling non-lockstep, even asynchronous, executions
(see \cref{sec:cc}).

For the discussion of decision problems, we need to introduce the notion of
input values, which will also be called initial values in the sequel.
Since we limit ourselves to the consensus problem, we need not distinguish
between the sets of input values and output values.
We thus just assume the existence of a finite set~$\V$ of potential input values, and require
that the potential output values are also in $\V$.
Furthermore, the initial configuration $I=I(\gamma)$ of any execution $\gamma$ 
is assumed to contain an input value $I_p \in \V$ for every process $p\in \Pi$.
This information is locally accessible to the processes, i.e., each process can access its own initial value (and those it has heard from).
We assume that there is a unique initial configuration for every input-value assignment of the processes.

A \emph{decision algorithm} is a collection of functions 
$\Delta_p:\C \to \V\cup\{\perp\}$ 
such that
$\Delta_p(C) = \Delta_p(D)$ if $C \sim_p D$
and
$\Delta_p(C') = \Delta_p(C)$ 
if~$C'$ is reachable
from~$C$
and 
$\Delta_p(C)\neq \perp$, where $\perp\not\in \V$ 
represents the fact that $p$ has not decided yet.
That is, decisions depend on local information only and are irrevocable.
Every process~$p$ thus has at most one decision value in an execution.
We can extend the decision function to executions by setting
$\Delta_p:\Sigma\to\V\cup\{\perp\}$, $\Delta_p(\gamma) = \lim_{t\to\infty} \Delta_p(C^t)$ where $\gamma = (C^t)_{t\geq0}$.
We say that~$p$ has decided value $v\neq\perp$ in configuration~$C$ or
execution~$\gamma$ if $\Delta_p(C)=v$ or $\Delta_p(\gamma)=v$, respectively.

We will consider both non-uniform and uniform consensus 
with either weak or strong validity as our decision tasks, 
which are defined as follows:

\begin{definition}[Non-uniform and uniform consensus]\label{def:consensus}
A \emph{non-uniform consensus} algorithm $\A$ is a decision
algorithm that ensures
the following properties in all of its admissible executions: 
\begin{enumerate}
\item[(T)] Eventually, every obedient process must irrevocably decide.
(Termination)
\item[(A)] If two obedient processes have decided, then their decision values
are equal. (Agreement)
\item[(V)] If the initial values of processes are all equal to~$v$, then~$v$ is the only possible decision value. (Validity)
\end{enumerate}
In a \emph{strong consensus} algorithm $\A$, weak validity (V) is replaced by
\begin{enumerate}
\item[(SV)] The decision value must be the input value of some process. (Strong Validity)
\end{enumerate}
A \emph{uniform consensus} algorithm $\A$ must ensure (T), (V) or (SV), and
\begin{enumerate}
\item[(UA)] If two processes have decided, then their decision values
are equal. (Uniform Agreement)
\end{enumerate}
\end{definition}
Note that we will primarily focus on consensus with weak validity, which
is the usual meaning of the term consensus unless otherwise noted. 

By Termination, Agreement, and the fact that every execution has at least one
obedient process, for every consensus algorithm, we can define the consensus
decision function
$\Delta : \Sigma \to \V$ by setting $\Delta(\gamma) = \Delta_p(\gamma)$
where~$p$ is any
process that is obedient in execution~$\gamma$, i.e., $p\in \ob(\gamma)$.
Recall that the initial value of process $p$ in the execution $\gamma$ is denoted
$I_p(\gamma)$ or just $I_p$ if $\gamma$ is clear from the context.

To illustrate\footnote{We chose this simplistic illustrating example in order
not to obfuscate the essentials. See \cref{sec:applications}
for more realistic
examples.} the difference between uniform and non-uniform consensus,
as well as to motivate the two topologies serving to characterize their solvability,
consider the example of two synchronous non-communicating processes.
The set of processes is $\Pi = \{1,2\}$ and the set of possible values is
$\V = \{0,1\}$.
Processes proceed in lock-step synchronous rounds, but do not communicate.
Thus, the only information a process has access to is its own initial value
and the current time.
The set of executions~$\Sigma$ and the obedience function~$\ob$ are defined
such that one of the processes eventually becomes disobedient in every
execution, but not both processes.
In this model, it is trivial to solve non-uniform consensus by immediately
deciding on one's own initial value, but uniform consensus is impossible.

\section{Topological Structure of Full-Information Executions}
\label{sec:structure:executions}

In this section, we will
endow the various sets introduced in Section~\ref{sec:general:model}
with suitable topologies.
We first recall briefly the basic topological notions that are needed for our
exposition.
For a more thorough introduction, however, the reader is advised to refer to a
textbook~\cite{Munkres}.

A topology on a set $X$ is a family $\T$ of subsets of $X$
such that $\emptyset \in \T$, $X \in \T$, and $\T$ contains all arbitrary
unions as well as all finite intersections of its members.
We call $X$ endowed with $\T$, often written as $(X, \T)$, a topological
space and the members of $\T$ open sets. 
The complement of an open set is called closed and
sets that are both open and closed, such as $\emptyset$ and $X$ itself, are called clopen.
A topological space is disconnected,
if it contains a nontrivial clopen set, which means that it it can
be partitioned into two disjoint open sets. It is connected
if it is not disconnected.

A function from space $X$ to space $Y$ is continuous if the pre-image of every
open set in $Y$ is open in $X$.
Given a space $(X, \T)$, $Y \subseteq X$ is called a
subspace of $X$ if $Y$ is equipped with the subspace topology
$\{ Y \cap U \mid U \in \T \}$.
Given $A \subseteq X$, the closure of $A$ is the intersection of all closed
sets containing $A$.
For a space $X$, if $A \subseteq X$, we call $x$ a limit point of~$A$
if it belongs to the closure of $A \setminus \{ x \}$.
It can be shown that the closure of $A$ is the union of~$A$ with all
limit points of~$A$.
Space $X$ is called compact if every family of open sets that covers~$X$
contains a finite sub-family that covers~$X$.

If~$X$ is a nonempty set, then
we call any function $d:X\times X\to \R_+$ a \emph{distance function} on~$X$.
Define $\T_d \subseteq 2^X$ by setting
$U\in\T_d$ if and only if
for all $x\in U$ there exists some $\varepsilon > 0$ such that
$B_\varepsilon(x) = 
\{ y \in X \mid d(x,y) < \varepsilon \}
\subseteq
U$.

Many topological spaces are defined by metrics, i.e., symmetric, positive definite
distance functions for which the triangle inequality
$d(x,y) \leq d(x,z)+d(z,y)$ holds for any $x,y,z \in X$.
For a distance function to define a (potentially non-metrizable) topology
though, no additional assumptions are necessary:

\begin{lemma}\label{lem:pseudosemi}
If~$d$ is a distance function on~$X$, then $\T_d$ is a topology on~$X$.
\end{lemma}
\begin{proof}
Firstly, we show that~$\T_d$ is closed under unions.
So let $\mathcal{U}\subseteq \T_d$.
We will show that $\bigcup \mathcal{U}\in \T_d$.
Let $x\in \bigcup\mathcal{U}$.
Then, by definition of the set union, there exists some $U\in\mathcal{U}$
such that $x\in U$.
But since $U\in\T_d$, there exists some $\varepsilon>0$ such that
\begin{equation}
B_\varepsilon(x) \subseteq U \subseteq \bigcup \mathcal{U}
\enspace,
\end{equation}
which shows that $\bigcup\mathcal{U}\in\T_d$.

Secondly, we show that~$\T_d$ is closed under finite intersections.
Let $U_1,U_2,\dots,U_k\in\T_d$.
We will show that $\bigcap_{\ell=1}^k U_\ell \in \T_d$.
Let $x\in \bigcap_{\ell=1}^k U_\ell$.
Then, by definition of the set intersection, $x\in U_\ell$ for all
$1\leq \ell \leq k$.
Because all~$U_\ell$ are in~$\T_d$, there exist 
$\varepsilon_1, \varepsilon_2,\dots,\varepsilon_k > 0$
such that
$B_{\varepsilon_\ell}(x) \subseteq U_\ell$
for all $1\leq \ell\leq k$.
If we set $\varepsilon = \min \{\varepsilon_1, \varepsilon_2,\dots,\varepsilon_k\}$,
then $\varepsilon > 0$.
Since we have $B_\gamma(x) \subseteq B_\delta(x)$ whenever $\gamma\leq\delta$,
we also have 
\begin{equation}
B_\varepsilon(x)
\subseteq
B_{\varepsilon_\ell}(x)
\subseteq
U_\ell
\end{equation}
for all $1\leq \ell \leq k$.
But this shows that $B_\varepsilon(x)\subseteq \bigcap_{\ell=1}^k U_\ell$,
which means that $\bigcap_{\ell=1}^k U_\ell \in \T_d$.

Since it is easy to check that $\emptyset, X\in \T_d$ as well, $\T_d$ is indeed
a topology.
\end{proof}

We will henceforth refer to~$\T_d$ as the topology induced by~$d$.

An execution is a sequence of configurations, i.e., an element of the 
product space~$\C^\omega$. Since our primary object of study are executions,
we will endow this space with a topology as follows:
The product topology, which is a distinguished
topology on any product space $\Pi_{\iota\in I}
X_\iota$ of topological spaces, is defined as the coarsest topology such that
all projection maps $\pi_i:\Pi_{\iota\in I}X_\iota\to X_i$ (where $\pi_i$ extracts the
$i$-th element of the sequence) are continuous.
Recall that a topology $\T'$ is coarser than a topology~$\T$ for
the same space if every open set $U \in \T'$ is also open in $\T$.

It turns out that the product topology on the space~$\C^\omega$ is induced
by a distance function, whose form is known in a special case that covers
our needs:

\begin{lemma}\label{lem:pseudosemi:product}
Let~$d$ be a distance function on~$X$ that only takes the values~$0$ or~$1$.
Then the product topology $\T^\omega$ of $X^\omega$, where every copy of~$X$ is endowed
with the topology induced by~$d$, is induced by the distance function 
\begin{equation}
X^\omega \times X^\omega \to \R
\quad,\quad
(\gamma,\delta)
\mapsto
2^{-\inf\{t\geq0\mid d(C^t,D^t) > 0\}}
\label{eq:defminmetric}
\end{equation}
where $\gamma = (C^t)_{t\geq0}$ and $\delta = (D^t)_{t\geq0}$.
\end{lemma}
\begin{proof}
We first show that all projections $\pi^t : X^\omega \to X$ are continuous
when endowing~$X^\omega$ with the product topology~$\T^\omega$:
Let $U\subseteq X$ be open and $C \in U$, i.e., $d(C,D) = 0$ implies $D\in U$.
Let $\gamma = (C^t)_{t\geq0} \in (\pi^t)^{-1}[U]$ and set $\varepsilon = 2^{-t}$.
Then, 
\begin{equation}
\begin{split}
B_{\varepsilon}(\gamma)
& =
\big\{ \delta = (D^t)_{t\geq0} \in X^\omega \mid \forall 0\leq s\leq t\colon d(C^s, D^s) = 0 \big\}
\\ & \subseteq
\big\{ \delta = (D^t)_{t\geq0} \in X^\omega \mid d(C^t,D^t) = 0 \big\}
\\ & 
=
(\pi^t)^{-1}\big[ \{D\in X \mid d(C^t,D) =0\} \big]
\subseteq
(\pi^t)^{-1}[ U ],
\end{split}
\end{equation}
where the last inclusion follows from the openness of~$U$. Since $(\pi^t)^{-1}[
U ]$ is hence open in $\T^\omega$, the continuity of~$\pi^t$ follows.

Let now~$\T_0$ be an arbitrary topology on~$X^\omega$ for which all projections~$\pi^t$
are continuous. We will show that $\T^\omega \subseteq \T_0$, which reveals that $\T^\omega$
is the coarsest topology with continuous projections, i.e., the 
product topology of $X^\omega$ where every copy of $X$ is endowed by $\T_d$. This will establish our lemma.

So let $E \in \T^\omega$ and take any $\gamma=(C^t)_{t\geq0}\in E$.
There exists some $\varepsilon>0$ such that $B_{\varepsilon}(\gamma) \subseteq E$.
Choose $t\in\N_0$ such that $2^{-t} \leq \varepsilon$, and
set
\begin{equation}
\begin{split}
F & = 
\bigl(\prod_{s=0}^t B_1(C^s)\bigr) \times X^\omega
=
\bigcap_{s=0}^t (\pi^s)^{-1}\big[B_1(C^s) \big] 
\\ & \subseteq 
\big\{ \delta = (D^t)_{t\geq0} \in X^\omega \mid \forall 0\leq s\leq t\colon d(C^s,D^s) = 0 \big\}
= B_{\varepsilon}(\gamma)\enspace.
\end{split}
\end{equation}
Then, $F$ is open with respect to~$\T_0$ as a finite intersection of open sets:
After all, every $(\pi^s)^{-1}\big[B_1(C^s) \big]$ is open by the continuity
of the projection $\pi^s$. 
But since $F \subseteq B_{\varepsilon}(\gamma)\subseteq E$, this
shows that~$E$ contains a $\T_0$-open neighborhood
for each of its points, i.e., $E \in \T_0$.
\end{proof}



\subsection{Process-view distance function for executions}
\label{sec:p:distance}

In previous work on point-set topology in distributed computing~\cite{AS84,Now10:master}, the set of
configurations~$\C$ of some fixed algorithm $\A$ was endowed with the discrete
topology, where every subset $U \subseteq \C$ is open. The discrete topology
is induced by the discrete metric $d_{\max}(C,D)=1$ if $C\neq D$ and $0$
otherwise (for configurations $C,D\in\C$).
Moreover, $\Comega$ was endowed with the corresponding
product topology, which is induced by the \emph{common-prefix
metric}
\begin{equation}\label{eq:commonprefixmetric}
d_{\max}:\Sigma \times \Sigma \to \IR_+
\quad,\quad
d_{\max}(\gamma,\delta) 
=2^{-\inf\{t\geq0\mid C^t \neq D^t\}}
\enspace,
\end{equation}
where $\gamma = (C^t)_{t\geq0}$ and $\delta = (D^t)_{t\geq0}$,
according to \lemmaref{lem:pseudosemi:product}. Informally, $d_{\max}(\gamma,\delta)$  decreases
with the length of the common prefix where \emph{no} process can distinguish
$\gamma$ and $\delta$.

By contrast, we define the \emph{$p$-view distance function}~$d_p$
on the set~$\C$ of configurations for every process $p\in\Pi$ by
\begin{equation}
d_p(C,D)
=
\begin{cases}
0 & \text{if } C \sim_p D \text{ and } p\in\ob(C)\cap\ob(D)\text{, or } C=D\\
1 & \text{otherwise}
\enspace.
\end{cases}
\end{equation}

Extending this distance function from configurations to executions,
we define the \emph{$p$-view distance function} by
\begin{equation}
d_p:\Sigma \times \Sigma \to \IR_+
\quad,\quad
d_p(\gamma,\delta)
=
2^{-\inf\{t\geq0\mid d_p(C^t,D^t) > 0\}}\label{eq:Pviewpseudometric}
\end{equation}
where $\gamma = (C^t)_{t\geq0}$ and $\delta = (D^t)_{t\geq0}$.

\begin{figure}
\begin{tikzpicture}[>=latex']
\draw[fill=blue!70!white] (-1.0,0) rectangle ++(-0.4,-0.4);
\draw[fill=blue!70!white] (-1.4,0) rectangle ++(-0.4,-0.4);
\draw[fill=blue!70!white] (-1.8,0) rectangle ++(-0.4,-0.4);

\node at (-2.8, -0.2) {$\gamma^0$};

\node at (-1.2, -0.8) {$\gamma^0_3$};
\node at (-1.6, -0.8) {$\gamma^0_2$};
\node at (-2.0, -0.8) {$\gamma^0_1$};

\draw[fill=blue!70!white] (+1.0,0) rectangle ++(0.4,-0.4);
\draw[fill=blue!70!white] (+1.4,0) rectangle ++(0.4,-0.4);
\draw[fill=yellow!20!white] (+1.8,0) rectangle ++(0.4,-0.4);

\node at (+2.8, -0.2) {$\delta^0$};

\node at (+1.2, -0.8) {$\delta^0_1$};
\node at (+1.6, -0.8) {$\delta^0_2$};
\node at (+2.0, -0.8) {$\delta^0_3$};

\draw[fill=yellow!20!white] (-1.0,1.5) rectangle ++(-0.4,-0.4);
\draw[fill=blue!70!white] (-1.4,1.5) rectangle ++(-0.4,-0.4);
\draw[fill=blue!70!white] (-1.8,1.5) rectangle ++(-0.4,-0.4);

\node at (-2.8, 1.3) {$\gamma^1$};

\draw[fill=blue!70!white] (+1.0,1.5) rectangle ++(0.4,-0.4);
\draw[fill=yellow!20!white] (+1.4,1.5) rectangle ++(0.4,-0.4);
\draw[fill=yellow!20!white] (+1.8,1.5) rectangle ++(0.4,-0.4);

\node at (+2.8, 1.3) {$\delta^1$};

\draw[fill=yellow!20!white] (-1.0,3.0) rectangle ++(-0.4,-0.4);
\draw[fill=yellow!20!white] (-1.4,3.0) rectangle ++(-0.4,-0.4);
\draw[fill=yellow!20!white] (-1.8,3.0) rectangle ++(-0.4,-0.4);

\node at (-2.8, 2.8) {$\gamma^2$};

\draw[fill=yellow!20!white] (+1.0,3.0) rectangle ++(0.4,-0.4);
\draw[fill=yellow!20!white] (+1.4,3.0) rectangle ++(0.4,-0.4);
\draw[fill=yellow!20!white] (+1.8,3.0) rectangle ++(0.4,-0.4);

\node at (+2.8, 2.8) {$\delta^2$};

\draw[->] (-1.6, 0.0) -- (-1.6, 1.1);
\draw[->] (-1.6, 1.5) -- (-1.6, 2.6);

\draw[->] (+1.6, 0.0) -- (+1.6, 1.1);
\draw[->] (+1.6, 1.5) -- (+1.6, 2.6);
\end{tikzpicture}
\caption{Comparison of the $p$-view and common-prefix metric.
The first three configurations of each of the two executions~$\gamma$
and~$\delta$ with three processes
and two different possible local states (dark blue and
light yellow) are depicted.
We have $d_{\max}(\gamma,\delta) = d_3(\gamma,\delta) = 1$ and
$d_2(\gamma,\delta) = 1/2$.}
\label{fig:tops}
\end{figure}

Figure~\ref{fig:tops} illustrates the distance function~$d_p$, and
\cref{lem:Pseudometricproperties} reveals that it defines a pseudometric:

\begin{lemma}[Pseudometric $d_p$]\label{lem:Pseudometricproperties}
The $p$-view distance function $d_p$ is a pseudometric, i.e., it satisfies:
\begin{align*}
d_p(\gamma,\gamma) =  0 & \\
d_p(\gamma,\delta) =  d_p(\delta,\gamma) & \qquad\mbox{(symmetry)}\\
d_p(\beta,\delta) \leq d_p(\beta,\gamma) + d_p(\gamma,\delta) & \qquad\mbox{(triangle inequality)}
\end{align*}
\end{lemma}
\begin{proof}
We have $d_p(\gamma,\gamma)=0$ since $d_p(C^t,C^t)=0$ for all $t\geq 0$ where $\gamma = (C^t)_{t\geq0}$.
Symmetry follows immediately from the definition.
As for the triangle inequality,
write $\beta = (B^t)_{t\geq0}$, $\gamma = (C^t)_{t\geq0}$, and $\delta = (D^t)_{t\geq0}$.
We have:
\begin{equation}
\begin{split}
\max\{ d_p(\beta,\gamma) , d_p(\gamma,\delta) \}
& =
2^{-\inf\{t\geq0 \mid d_p(B^t,C^t) > 0\ \vee\ d_p(C^t,D^t)>0 \}}
\end{split}
\end{equation}
Since $d_p(B^t,D^t) > 0 \implies d_p(B^t,C^t)>0\ \vee\ d_p(C^t,D^t)>0$, it follows that
\begin{equation}
\inf\{t\geq0 \mid d_p(B^t,D^t) > 0 \}
\geq
\inf\{t\geq0 \mid d_p(B^t,C^t) > 0\ \vee\ d_p(C^t,D^t)>0 \}
\end{equation}
and thus
\begin{equation}
d_p(\beta,\delta) \leq \max\{ d_p(\beta,\gamma) , d_p(\gamma,\delta) \}
\enspace,
\end{equation}
which concludes the proof.
\end{proof}

\subsection{Uniform topology for executions}
\label{sec:uniftopology}

The \emph{uniform minimum topology} (abbreviated \emph{uniform topology}) on the set~$\Sigma$ of executions is induced by the distance function
\begin{equation}
\dunif(\gamma, \delta)
=
\min_{p\in\Pi} d_p(\gamma,\delta)
\enspace.\label{eq:dunif}
\end{equation}

Note that $\dunif$ does not necessarily satisfy the triangle inequality (nor definiteness): There may
be executions with $d_{p}(\beta,\gamma)=0$ and $d_{q}(\gamma,\delta)=0$
but $d_{r}(\beta,\delta)>0$ for all $r\in \Pi$. Hence, the topology
on $\Comega$ induced by $\dunif$ lacks many of the convenient (separation)
properties of metric spaces, but will turn out to be sufficient for
the characterization of the solvability
of uniform consensus (see Theorem~\ref{thm:char:unif}).

The next lemma shows that the decision function of an algorithm that solves
uniform consensus is always continuous with respect to
the uniform topology.

\begin{lemma}\label{lem:cont:unif:consensus}
Let $\Delta:\Sigma\to\V$ be the consensus decision function of a
uniform consensus algorithm.
Then, $\Delta$ is continuous with respect to 
the uniform distance function~$\dunif$.
\end{lemma} 
\begin{proof}
Let $v\in\V$ and let $\Sigma_v = \Delta^{-1}[\{v\}]$ be its inverse image under
the decision function~$\Delta$.
We will show that for all executions $\gamma\in\Sigma_v$ there exists a
time~$T$ such that $B_{2^{-T}}(\gamma)\subseteq \Sigma_v$, proving
that~$\Sigma_v$ is open.
Since the singleton sets~$\{v\}$ form a base of the discrete topology on~$\V$,
continuity follows.

Let~$\gamma\in\Sigma_v$.
Let~$T$ be a time greater than both the latest decision time of the processes
in $\ob(\gamma)$ and the latest time any process becomes disobedient in
execution $\gamma = (C^t)_{t\geq0}$.
By the Termination property and the fact that disobedient processes cannot
become obedient again, we have $T < \infty$.
Because~$T$ is larger than the latest time a process becomes disobedient, we
have $\ob(\gamma) = \ob(C^T)$.

Using the notation $\gamma = (C^t)_{t\geq0}$ and $\delta = (D^t)_{t\geq 0}$,
we have:
\begin{equation}
\begin{split}
B_{2^{-T}}(\gamma)
& =  
\big\{
\delta\in\Sigma \mid 
\dunif(\gamma, \delta) < 2^{-T}
\big\}
\\ & =  
\big\{
\delta\in\Sigma \mid 
\exists p\in\Pi\colon
d_p(C^t, D^t) < 2^{-T} 
\big\}
\\ &=
\big\{
\delta\in\Sigma \mid 
\exists p\in\Pi\ 
\forall t\leq T\colon
C^t \sim_p D^t
\wedge
p\in \ob(C^t) \cap \ob(D^t)
\big\}
\\& =
\big\{
\delta\in\Sigma \mid 
\exists p\in\Pi\colon 
C^T \sim_p D^T
\wedge
p\in \ob(C^T) \cap \ob(D^T)
\big\}
\end{split}
\end{equation}
If $\delta\in B_{2^{-T}}(\gamma)$,
then $C^T \sim_p D^T$ for some $p\in \ob(C^T)\cap\ob(D^T)$.
Since~$p$ has decided $\Delta(\gamma)$ at time~$T$ in execution~$\gamma$
and~$p$ is obedient until time~$T$ in execution~$\delta$,
process~$p$ has also decided $\Delta(\gamma)$ at time~$T$ in
execution~$\delta$.
By Uniform Agreement and Termination, all processes in $\ob(\delta)$ decide
$\Delta(\gamma)=v$ as well.
In other words $B_{2^{-T}}(\gamma)\subseteq \Sigma_v$,
which concludes the proof.
\end{proof}

For an illustration in our non-communicating two-process example,
denote by~$\gamma^{(T)}$ the execution in which process~$1$
has initial value~$0$, process~$2$ has initial value~$1$, and process~$1$
becomes disobedient at time~$T$.
Similarly, denote by~$\delta^{(U)}$ the execution with the same initial values
and in which process~$2$ becomes disobedient at time~$U$.
Since there is no means of communication between the two processes,
by Validity, each obedient process necessarily has to eventually decide on its
own initial value, i.e.,
$\Delta(\gamma^{(T)}) = 1$
and
$\Delta(\delta^{(T)}) = 0$.
The uniform distance between these executions is equal to
$\dunif(\gamma^{(T)}, \delta^{(U)}) = 2^{-\max\{T,U\}}$.
Thus, every $\varepsilon$-neighborhood of~$\gamma^{(T)}$ contains
execution~$\delta^{(U)}$ if $U$ is chosen large enough to ensure
$2^{-U} < \varepsilon$.
The set of $1$-deciding executions is thus not open in the uniform topology.
But this means that the algorithm's decision function~$\Delta$ cannot be
continuous. \cref{lem:cont:unif:consensus} hence implies that there is no 
uniform consensus algorithm in the non-communicating two-process model 
(which is also confirmed by the more realistic application example in \cref{sec:generalomissions}).

\subsection{Non-uniform topology for executions}
\label{sec:mintopology}

Whereas the $p$-view distance function given by \cref{eq:Pviewpseudometric}
is also adequate for non-uniform consensus, this is not the case for the
uniform distance function as defined in \cref{eq:dunif}.
The appropriate
\emph{non-uniform minimum topology} (abbreviated \emph{non-uniform topology}) 
on the set~$\Sigma$ of executions is induced by the distance function
\begin{equation}
\dnonunif(\gamma, \delta)
=
\begin{cases}
\min_{p\in\ob(\gamma)\cap\ob(\delta)} d_p(\gamma,\delta)
&
\text{if } \ob(\gamma)\cap\ob(\delta) \neq \emptyset
\\
1
&
\text{if } \ob(\gamma)\cap\ob(\delta) = \emptyset
\enspace.
\end{cases}\label{eq:dnonunif}
\end{equation}

Like for $\dunif$, neither definiteness nor the triangle inequality need to be
satisfied by $\dnonunif$. The resulting non-uniform topology is finer than the uniform topology, however, since
the minimum is taken over the smaller set
$\ob(\gamma)\cap\ob(\delta)\subseteq \Pi$, which means that
$\dunif(\gamma,\delta) \leq \dnonunif(\gamma,\delta)$.
In particular, this implies that every decision function that is continuous with
respect to the uniform topology is also continuous with respect to the 
non-uniform topology.
Of course, this also follows from Lemma~\ref{lem:cont:unif:consensus}
and the fact that every uniform consensus algorithm also solves
non-uniform consensus.

The following \cref{lem:cont:nonunif:consensus} is the analog of \cref{lem:cont:unif:consensus}:

\begin{lemma}\label{lem:cont:nonunif:consensus}
Let $\Delta:\Sigma\to\V$ be the consensus decision function of a
non-uniform consensus algorithm.
Then, $\Delta$ is continuous with respect to
the non-uniform distance function~$\dnonunif$.
\end{lemma}
\begin{proof}
We again prove that every inverse image $\Sigma_v = \Delta^{-1}[\{v\}]$ of a value $v\in\V$ is open.

Let $\gamma\in\Sigma_v$.
Let~$T$ be the latest decision time of the processes in $\ob(\gamma)$ in
execution~$\gamma$.
By the Termination property, we have $T < \infty$.
Using the notation $\gamma = (C^t)_{t\geq0}$ and $\delta = (D^t)_{t\geq 0}$,
we have:
\begin{equation}
\begin{split}
B_{2^{-T}}(\gamma)
& =  
\big\{
\delta\in\Sigma \mid 
\dnonunif(\gamma, \delta) < 2^{-T}
\big\}
\\ & =  
\big\{
\delta\in\Sigma \mid 
\exists p \in \ob(\gamma) \cap \ob(\delta)\colon
d_p(\gamma, \delta) < 2^{-T}
\big\}
\\ & =  
\big\{
\delta\in\Sigma \mid 
\exists p \in \ob(\gamma) \cap \ob(\delta)\colon
\forall t\leq T\colon
C^t \sim_p D^t
\big\}
\end{split}
\end{equation}
If $\delta\in B_{2^{-T}}(\gamma)$, then $C^T \sim_p D^T$ for some $p\in \ob(\gamma)\cap\ob(\delta)$.
Denote by~$T_p$ the decision time of process~$p$ in~$\gamma$.
Since $T_p \leq T$, we also have $C^{T_p} \sim_p D^{T_p}$
But this means that process~$p$ decides value $\Delta(\gamma)$ at time~$T_p$
in both executions~$\gamma$ and~$\delta$, hence $\Delta(\delta) = \Delta(\gamma)=v$ and $B_{2^{-T}}(\gamma)\subseteq \Sigma_v$.
\end{proof}

For an illustration in the non-communicating two-process example used in
\cref{sec:uniftopology}, note that
the trivial algorithm that immediately decides on its initial value
satisfies $\Delta(\gamma^{(T)}) = 1$ and $\Delta(\delta^{(U)}) = 0$.
The algorithm does solve non-uniform consensus, since it is guaranteed that
one of the processes eventually becomes disobedient.
In contrast to the uniform distance function, the non-uniform distance
function satisfies
$\dnonunif(\gamma^{(T)}, \delta^{(U)}) = 1$
since
$\ob(\gamma^{(T)}) \cap \ob(\delta^{(U)}) = \emptyset$.
This means that the minimum distance between any $0$-deciding and any $1$-deciding
execution is at least~$1$.
It is hence possible to separate the two sets of executions by sets that
are open in the non-uniform topology, so consensus is solvable here,
according to the considerations in the following section. Again, this is
confirmed by the more realistic application example in \cref{sec:generalomissions}.

\section{General Consensus Characterization for Full-Information Executions}
\label{sec:consensus}

In this section, we will provide our main topological conditions for uniform
and non-uniform consensus solvability.

\begin{definition}[$v$-valent execution]\label{def:vvalentexec}
We call an execution $\gamma_v \in \Sigma$, for $v\in\V$, $v$-valent, if it
starts from an initial configuration $I$ where all processes $p\in\Pi$ have the
same initial value $I_p=v$.
\end{definition}

\begin{theorem}[Characterization of uniform consensus]\label{thm:char:unif}
Uniform consensus is solvable if and only if there exists
a partition of the set~$\Sigma$ of admissible executions
into sets $\Sigma_v$, $v\in\V$, such that
the following holds:
\begin{enumerate}
\item Every $\Sigma_v$ is a clopen set in~$\Sigma$ with respect to the
uniform topology induced by~$\dunif$.
\item If execution $\gamma\in\Sigma$ is $v$-valent, then $\gamma \in \Sigma_v$.
\end{enumerate}
\end{theorem}
\begin{proof}
($\Rightarrow$):
Define $\Sigma_v = \Delta^{-1}(v)$, where~$\Delta$ is the decision function
of a uniform consensus algorithm.
This is a partition of~$\Sigma$ by Termination, and
Validity implies property~(2).
It thus only remains to show openness of the~$\Sigma_v$ (which
immediately implies clopenness, as $\Sigma\setminus\Sigma_v = \bigcup_{v\neq w \in \V} \Sigma_w$ must be open), which follows from the continuity of $\Delta: \Sigma \to \V$,
since every singleton set~$\{v\}$ is open in the discrete
topology.

($\Leftarrow$):
We define a uniform consensus algorithm by defining the decision functions
$\Delta_p:\C\to\V\cup\{\perp\}$ as
\begin{equation}
\Delta_p(C)
=
\begin{cases}
v & \text{if } 
\left\{ \delta\in\Sigma \mid \exists t\colon C \sim_p D^t \right\} 
\subseteq \Sigma_v, \\
\perp & \text{otherwise},
\end{cases}
\end{equation}
where we use the notation $\delta = (D^t)_{t\geq0}$.
The function~$\Delta$ is well defined since the sets~$\Sigma_v$ are 
pairwise disjoint.

We first show Termination of the resulting algorithm.
Let $\gamma\in\Sigma$, let $v\in\V$ such that $\gamma\in\Sigma_v$,
and let $p\in\ob(\gamma)$.
Since~$\Sigma_v$ is open with respect to the uniform topology, there exists
some $\varepsilon>0$ such that
$\left\{ \delta\in \Sigma \mid \dunif(\gamma,\delta) < \varepsilon \right\}
\subseteq
\Sigma_v$.
By definition of~$\dunif$, we have 
$\dunif(\gamma,\delta) \leq d_p(\gamma,\delta)$ and hence
$\left\{ \delta\in \Sigma \mid d_p(\gamma,\delta) < \varepsilon \right\}
\subseteq
\left\{ \delta\in \Sigma \mid \dunif(\gamma,\delta) < \varepsilon \right\}
\subseteq
\Sigma_v$.

Writing $\gamma = (C^t)_{t\geq0}$,
let~$T$ be the smallest integer such that $2^{-\chi_p(C^t)} \leq \varepsilon$ for all $t\geq T$.
Such a~$T$ exists since $\chi_p(C^t)\to\infty$ as $t\to\infty$.
Then, for every $t\geq T$, we have
$\left\{ \delta\in \Sigma \mid \exists s\colon C^t \sim_p D^s \right\}
\subseteq
\left\{ \delta\in \Sigma \mid d_p(\gamma,\delta) < 2^{-\chi_p(C^t)} \right\}
\subseteq
\Sigma_v$.
In particular, $\Delta_p(C^t) = v$ for all $t\geq T$, i.e., process~$p$
decides value~$v$ in execution~$\gamma$.

We next show Uniform Agreement.
For the sake of a contradiction, assume that process~$q$ decides value $w\neq v$
in configuration $C$ in execution $\gamma\in\Sigma_v$.
But then, by definition of the function~$\Delta_q$, we have
$\gamma \in \left\{ \delta\in\Sigma \mid \exists t\colon C \sim_q D^t \right\}
\subseteq \Sigma_w$.
But this is impossible since $\Sigma_v\cap\Sigma_w = \emptyset$.

Validity immediately follows from property~(2).
\end{proof}

\begin{theorem}[Characterization of non-uniform consensus]\label{thm:char:nonunif}
Non-uniform consensus is solvable if and only if there exists
a partition of the set~$\Sigma$ of admissible executions
into sets $\Sigma_v$, $v\in\V$, such that
the following holds:
\begin{enumerate}
\item Every $\Sigma_v$ is a clopen set in~$\Sigma$ with respect to the
non-uniform topology induced by~$\dnonunif$.
\item If execution $\gamma\in\Sigma$ is $v$-valent, then $\gamma \in \Sigma_v$.
\end{enumerate}
\end{theorem}
\begin{proof}
The proof is similar to that of Theorem~\ref{thm:char:unif}, except that the
definition of~$\Delta_p$ is 
\begin{equation}
\Delta_p(C)
=
\begin{cases}
v & \text{if } 
\left\{ \delta\in\Sigma \mid \exists t\colon C \sim_p D^t \wedge p\in\ob(\delta) \right\} 
\subseteq \Sigma_v, \\
\perp & \text{otherwise}
\enspace,
\end{cases}
\end{equation}
i.e., we just have to add the constraint that $p\in\ob(\delta)$ to the executions
considered in the proof.
\end{proof}

If~$\Sigma$ has only finitely many connected components, i.e., only finitely many maximal connected sets,
then every connected component is necessarily clopen.
Consequentely,
these characterizations give rise to the following meta-procedure for
determining whether consensus is solvable and constructing an algorithm if
it is.
It requires knowledge of the connected components
of the space~$\Sigma$ of admissible executions with respect to the appropriate topology:
\begin{enumerate}
\item Initially, start with an empty set~$\Sigma_v$ for every value~$v\in\V$.
\item Add to~$\Sigma_v$ the connected components of~$\Sigma$ that contain an execution with a $v$-valent initial configuration.
\item Add any remaining connected component of~$\Sigma$ to an arbitrarily
chosen set~$\Sigma_v$.
\item If the sets~$\Sigma_v$ are pairwise disjoint, then consensus is solvable.
In this case, the sets~$\Sigma_v$ determine a consensus algorithm via the
universal algorithm given in the proofs of \cref{thm:char:unif} and
\cref{thm:char:nonunif}.
If the~$\Sigma_v$ are not pairwise disjoint, then consensus is not solvable.
\end{enumerate}

Obviously, our solution algorithms need to know the decision sets 
$\Sigma_v$, $v\in\V$.
As they usually contain uncountable many infinite executions,
the question of how to obtain them in practice appears. In 
\cref{sec:applications}, we will provide several instances
of \emph{labeling algorithms}, which can be used here.
They are based on labeling prefixes of executions, so can
in principle even be computed incrementally by the processes
on-the-fly during the executions.

\section{Limit-based Consensus Characterization}
\label{sec:fairunfair}

It is possible to shed some additional light on our general consensus characterization
by considering limit points. In particular,
\cref{cor:consensusseparation} will show that consensus 
is impossible if and only if certain limit points in the appropriate topologies 
are admissible.

\begin{definition}[Distance of sets]\label{def:distancesets}
For $A,B \subseteq \Comega$ with distance function $d$, let $d(A,B)=\inf\{d(\alpha,\beta)\mid \mbox{$\alpha\in A$, $\beta\in B$}\}$.
\end{definition}

Before we state our general results, we illustrate the underlying
idea in a slightly restricted setting, namely when
the underlying space $\Sigma$ of configuration sequences is contained in a compact set $K\subseteq\Comega$.
Whereas one cannot assume this in
general, it can be safely assumed in settings where the operationalization
of our system model based on process-time graphs, as described in \cref{sec:model}, 
applies: Since the set of all process-time graphs $\PTomega$
turns out to be compact and the transition function $\otau: \PTomega \to \Comega$ is continuous, according to \cref{lem:tau:is:cont}, we can consider the compact set
$K=\otau(\PTomega)$ instead of $\Comega$.
In this case, it is not difficult to show that $d_p(A,B)=0$ if and only if there is a sequence of executions~$\alpha_k$ in~$A$ and a sequence of executions~$\beta_k$ in $B$ such that both sequences converge to the same limit with respect to~$d_p$.

This distance-based characterization 
allows us to distinguish 3 cases that cause $d_p(A,B)=0$: 
(i) If $\hat{\alpha} \in A\cap B \neq \emptyset$, one can choose the sequences defined
by $\alpha_k=b_k=\hat{\alpha}=\hat{\beta}$, $k\geq 1$. (ii) If $A\cap B = \emptyset$ and $\hat{\alpha}=\hat{\beta}$, there
is a ``fair'' execution \cite{FG11} as the common limit.
(iii) If $A\cap B = \emptyset$ and $\hat{\alpha}\neq \hat{\beta}$, there is a pair of
``unfair'' executions \cite{FG11} acting as limits, which have distance 0 (and are
hence also common limits w.r.t.\ the distance function $d_p$). We note, however, that
due to the non-definiteness of the pseudometric $d_p$ (recall 
\cref{lem:Pseudometricproperties}) and the resulting 
non-uniqueness of limits in the $p$-view topology, (iii) are
actually two instances of (ii). \cref{cor:consensusimpfair} below will reveal
that consensus is solvable if and only if no decision set $\Sigma_v$ contains any 
fair or unfair execution w.r.t any $\Sigma_w$, $v\neq w$. 

Unfortunately, generalizing the above 
distance-based characterizaion from  $p$-view-topologies to 
the uniform and non-uniform topologies is not possible: Albeit every 
convergent infinite sequence $(\alpha^t)$ w.r.t. $\dunif$ (\cref{eq:dunif})
resp.\ $\dnonunif$ (\cref{eq:dnonunif}) also contains a convergent subsequence 
w.r.t.\ some (obedient) 
$d_p$ by the pigeonhole principle, one might observe a different $d_{p'}$ for the 
convergent subsequence of $(\beta^t)$. In this case, not even $d_p(\hat{\alpha},\hat{\beta})=0$
or $d_{p'}(\hat{\alpha},\hat{\beta})=0$ would guarantee $\dunif(A,B)=0$
resp.\ $\dnonunif(A,B)=0$, as the triangle inequality does not hold in these 
topologies. 
%
%

On the other hand, $\dunif(A,B)=0$ resp.\ $\dnonunif(A,B)=0$ is trivially guaranteed if 
it is the case that $\hat{\alpha} \in B$ or $\hat{\beta} \in A$: If, say, $\hat{\alpha} 
\in B$, one can choose the constant sequence $(\beta^t)=(\hat{\alpha})\in B^\omega$, 
which obviously converges to $\hat{\alpha}$ in any $p$-view topology, 
including the particular $d_p$ obtained for the convergent subsequence of $(\alpha^t)\to\hat{\alpha}$ 
by the abovementioned pigeonhole argument. Consequently, 
$d_p(A,B)=0$ and hence also 
$\dunif(A,B)=0$ resp.\ $\dnonunif(A,B)=0$. This implies the following ``if-part''
of our distance-based characterization, which even holds for non-compact $\Comega$:

\begin{lemma}[General zero-distance condition]\label{lem:zerodistancecondition}
Let $A,B$ be arbitrary subsets of $\Comega$ with distance function $d$. 
If there are infinite sequences $(\alpha_k) \in A^\omega$
and $(\beta_k) \in B^\omega$ of executions, as well as $\hat{\alpha},\hat{\beta}\in \Comega$
with $\alpha_k\to \hat{\alpha}$ and $\beta_k\to \hat{\beta}$ with $d(\hat{\alpha},\hat{\beta})=0$
and $\hat{\alpha} \in B$ or $\hat{\beta} \in A$, then d(A,B)=0.
\end{lemma}

In order to obtain the general limit-based consensus characterization stated
in \cref{cor:consensusseparation} below, we will not use set distances directly, however, but 
rather the following Separation \cref{lem:separation} from \cite{Munkres}:

\begin{lemma}[Separation Lemma {\cite[Lemma~23.12]{Munkres}}]\label{lem:separation}
If $Y$ is a subspace of $X$, a separation of $Y$ is
a pair of disjoint nonempty sets $A$ and $B$ whose union is $Y$, neither of
which contains a limit point of the other. The space $Y$ is connected
if and only if there exists no separation of $Y$. Moreover, $A$ and $B$ of
a separation of $Y$ are clopen in $Y$.
\end{lemma}
\begin{proof}
The closure of a set $A$ in $Y$ is $(\cl{A}\cap Y)$, where $\cl{A}$ denotes the closure
in $X$. To show that $Y$ is not connected implies a separation, assume that $A, B$ are 
closed and open in $Y=A\cup B$, so $A=(\cl{A}\cap Y)$. Consequently,
$\cl{A}\cap B = \cl{A}\cap(Y-A) = \cl{A}\cap Y - \cl{A}\cap A = \cl{A}\cap Y - A = \emptyset$. Since $\cl{A}$ is
the union of $A$ and its limit points, none of the latter is in $B$. An analogous argument shows that none of
the limit points of $B$ can be in $A$.

Conversely, if $Y=A\cup B$ for disjoint non-empty sets $A$, $B$ which do not contain limit
points of each other, then $\cl{A}\cap B=\emptyset$ and $A \cap \cl{B} = \emptyset$.
From the equivalence above, we get $\cl{A}\cap Y = A$ and $\cl{B}\cap Y =B$, so
both $A$ and $B$ are closed in $Y$ and, as each others complement, also open in $Y$ as well.
\end{proof}

Applying \cref{lem:separation} to the findings of \cref{thm:char:unif} resp.\ \cref{thm:char:unif},
the following general consensus characterization can be proved:

\begin{theorem}[Separation-based consensus characterization]\label{cor:consensusseparation}
Uniform resp.\ non-uniform consensus is solvable in a model if and only if
there exists a partition of the set of admissible executions $\Sigma$ into decision sets $\Sigma_v,v\in\V$, such that
the following holds:
\begin{enumerate}
\item No $\Sigma_v$ contains a limit point of any other $\Sigma_w$ w.r.t.\ the uniform resp.\
non-uniform topology in~$\Comega$.
\item Every $v$-valent admissible execution $\gamma_v$ satisfies $\gamma_v\in \Sigma_v$.
\end{enumerate}
If consensus is not solvable, then $\dunif(\Sigma_v,\Sigma_w)=0$ resp.\ $\dnonunif(\Sigma_v,\Sigma_w)=0$ for some $w\neq v$.
\end{theorem}
\begin{proof}
($\Leftarrow$)
We need to prove that if (1) and (2) in the statement of our theorem hold,
then consensus is solvable by means of the algorithm given in 
\cref{thm:char:unif} resp.\ \cref{thm:char:nonunif}. This only
requires showing that all of the finitely many $\Sigma_v$, $v\in \V$, 
are clopen in $\Sigma$, which immediately follows from \cref{lem:separation} 
since $\Sigma_v$ and $\Sigma\setminus\Sigma_v$ form a separation of $\Sigma$.

($\Rightarrow$)
We prove the contrapositive, by showing that if (1) and (2) do not hold,
then either some $\Sigma_v$ is not closed or $\Sigma_v \cap \Sigma_w \neq 0$, 
which does not allow to solve
consensus by \cref{thm:char:unif} resp.\ \cref{thm:char:nonunif}.
If, say, $A=\Sigma_v$ contains any limit point of $B=\Sigma_w$ for
$v\neq w$, this means that there is a sequence of executions 
$(\beta_k) \in B^\omega$ with limit $\beta_k \to \beta$ and
some $\alpha \in A \subseteq \Sigma$ with $\dunif(\alpha,\beta)=0$ resp.\ 
$\dnonunif(\alpha,\beta)=0$. According to \cref{lem:zerodistancecondition}, 
we have $\dunif(A,B)=0$ resp.\ $\dnonunif(A,B)=0$ in
this case. If $\alpha \not\in B$, then $B=\Sigma_w$ 
is not closed, if $\alpha \in B$, then $A \cap B \neq \emptyset$,
which provides the required contradiction in either case.
\end{proof}

Note that \cref{cor:consensusseparation} immediately implies the following properties
of the distances of the decision sets in the case consensus is solvable
in a model:

\begin{corollary}[General decision set distances]\label{cor:setdistance:min}
If uniform resp.\ non-uniform consensus is solvable in a model, it may nevertheless
be the case that $\dunif(\Sigma_v,\Sigma_w)= 0$ resp.\ $\dnonunif(\Sigma_v,\Sigma_w)= 0$ 
for some $v, w\neq v$. On the other hand, if $\dunif(\Sigma_v,\Sigma_w)>0$ resp.\ 
$\dnonunif(\Sigma_v,\Sigma_w)>0$ for all $v,w\neq v$, then 
uniform resp.\ non-uniform consensus is solvable.
\end{corollary}

Our characterization \cref{cor:consensusseparation} can also be expressed
via the exclusion of fair/unfair executions \cite{FG11}:

\begin{definition}[Fair and unfair executions]\label{def:fairunfair}
Consider two executions $\rho, \rho' \in \Comega$
of some consensus algorithm with decision sets $\Sigma_v$, $v\in\V$, in any appropriate topology:
\begin{itemize}
\item $\rho$ is called \emph{fair}, if for some $v,w\neq v 
\in \V$ there are convergent sequences $(\alpha_k) \in \Sigma_v$
and $(\beta_k) \in \Sigma_w$ with
$\alpha_k\to \rho$ and $\beta_k\to \rho$.
\item $\rho$, $\rho'$ are called a pair of \emph{unfair} executions,  
if for some $v,w\neq v 
\in \V$ there are convergent sequences $(\alpha_k) \in \Sigma_v$
with $\alpha_k\to \rho$ and $(\beta_k) \in \Sigma_w$ with $\beta_k\to \rho'$
and $\rho$ and $\rho'$ have distance 0.
\end{itemize}
\end{definition}

From \cref{cor:consensusseparation}, we immediately obtain:

\begin{corollary}[Fair/unfair consensus characterization]\label{cor:consensusimpfair}
Condition (1) in \cref{cor:consensusseparation} is equivalent to requireing
that the decision sets $\Sigma_v$, $\Sigma_w$ for $w\neq v$ neither contain
any fair execution nor any pair $\rho,\rho'$ of unfair executions.
\end{corollary}

An illustration of our limit-based characterizations is provided by Figure~\ref{fig:noncompactMA}.
Note carefully that, in the uniform case, a fair/unfair execution
$\rho$ where some process $p$ becomes disobedient in round $t$ implies that the
same happens in all $\alpha\in B_{2^{-t}}(\rho) \cap \Sigma_v$ 
and $\beta\in B_{2^{-t}}(\rho) \cap \Sigma_w$. On the other hand, if
$p$ does not become disobedient in $\rho$, it may still be the case that $p$
becomes disobedient in every $\alpha_k$ in the sequence converging to $\rho$, 
at some time $t_k$ with $\lim_{k\to\infty} t_k=\infty$. In
the non-uniform case, neither of these possibilities exists:
$p$ cannot be disobedient in the limit $\rho$, and any $\alpha_k$ where $p$ 
is not obedient
is also excluded as its distance to any other sequence is 1.

%

\section{Consensus Characterization in Terms of Broadcastability}
\label{sec:broadcastability}

We will now develop another characterization of consensus solvability, 
with rests on the broadcastability of the decision sets $\Sigma_v \subseteq \Sigma$
and their connected components $\Sigma_{\gamma} \subseteq \Sigma_v$. It will explain topologically why the existence of a broadcaster is mandatory for solving
the ``standard version'' of consensus, where any assignment of inputs from $\V$ is permitted. We start with some definitions needed for formalizing this condition:

\begin{definition}[Heard-of sets]\label{def:HO}
For every process~$p\in\Pi$, there is a function $\ho_p:\C\to2^\Pi$ that maps a configuration~$C\in\C$ to the set of processes~$\ho_p(C)$ that~$p$ has (transitively) heard of in~$C$. Its extension to execution $\gamma=(C^t)_{t\geq 0}$ 
is defined as $\ho_p(\gamma) = \bigcup_{t\geq0} \ho_p(C^t)$.

Heard-of sets have the following obvious properties: For executions $\gamma = (C^t)_{t\geq0}$, $\delta = (D^t)_{t\geq0}$ and all $t\geq 0$,
\begin{enumerate}
\item[(i)] $p\in\ho_p(C^t)$, and $\ho_p(C^t) = \ho_p(D^t)$ if $C^t \sim_p D^t$,
\item[(ii)] $\ho_p(C^t)\subseteq \ho_p(C^{t+1})$,
\item[(iii)] for all $x\in\Pi$, if $x\in \ho_q(C^t) \cap \ho_q(D^t)$ and $C^t \sim_q D^t$, then $I_x(\gamma) = I_x(\delta)$ (where~$I_p(\gamma)$ denotes the initial value of process~$p$ in execution~$\gamma$).
\end{enumerate}
\end{definition}

The independent arbitrary input assignment condition stated in \cref{def:independentinputassumption} secures that, for every execution $\gamma$ with initial value assignment $I(\gamma)$, there is a
an isomorphic execution $\delta$ w.r.t.\ the HO sets of all processes 
that starts from an arbitrary other initial value assignment $I(\delta)$.

\begin{definition}[Independent arbitrary input assignment condition]\label{def:independentinputassumption}
Let $I:\Pi\to\V$ be some assignment of initial values to the processes, and
$\Sigma^{(I)}\subseteq \Sigma$ be the set of admissible executions with that initial value assignment. We say that $\Sigma$ satisfies the \emph{independent input
assignment condition}, if and only if 
for any two assignments~$I$ and~$J$, we have $\Sigma^{(I)} \cong \Sigma^{(J)}$,
that is, there is a bijective mapping $f_{I,J}:\Sigma^{(I)}\to\Sigma^{(J)}$ such that
for all $\gamma = (C^t)_{t\geq 0} \in \Sigma^{(I)}$ and $\delta = (D^t)_{t\geq 0} \in \Sigma^{(I)}$, writing $f_{I,J}(\gamma) = (C_f^t)_{t\geq 0}$ and $f_{I,J}(\delta) = (D_f^t)_{t\geq 0}$, the following holds for all $t\geq 0$ and all $p\in\Pi$:
\begin{enumerate}
\item $\ob(C^t) = \ob(C_f^t)$ 
\item $C^t \sim_p D^t$ if and only if $C_f^t \sim_p D_f^t$
\item $\ho_p(C^t) = \ho_p(C_f^t)$
\item $C^t \sim_p C_f^t$ if $I_q = J_q$ for all $q\in \ho_p(C^t)$
\end{enumerate}
We say that $\Sigma$ satisfies the \emph{independent arbitrary input
assignment condition}, if it satisfies the independent input assignment
condition for every choice of $I:\Pi\to\V$.
\end{definition}

In the main results  of this section (\cref{thm:equivuniform} resp.\ \cref{thm:equivnonuniform}), we will not only provide a necessary and sufficient condition for solving this variant of uniform resp.\ non-uniform consensus based on broadcastability, but also establish the
general equivalence of weak validity (V) and strong validity (SV) (recall
\cref{def:consensus}). For binary consensus, i.e., $|\V|=2$, this is a well-known 
fact \cite[Ex.~5.1]{AW04}, for larger input sets, it was, to the best of 
our knowledge, not known yet.

Since the concise but quite technical proofs of \cref{thm:equivuniform} and \cref{thm:equivnonuniform} somehow obfuscate the actual cause of this equivalence
(and the way we actually discovered it), we first provide an alternative
explanation based on the broadcastability of connected components in the
following \cref{sec:bcconnectedcomponents}, which also allows us to establish
some basic results needed in \cref{sec:closed}.

\subsection{Broadcastability of connected components}
\label{sec:bcconnectedcomponents}

\cref{lem:broadcastableCCs} below reveals that
if consensus (with weak validity)
and independent arbitrary inputs is solvable, then every connected
component of~$\Sigma$ needs to be broadcastable.

\begin{definition}[Broadcastability]\label{def:broadcastability}
We call a subset $A \subseteq \Sigma$ of admissible executions \emph{broadcastable} 
by the broadcaster $p\in \Pi$, if, in every execution $\gamma \in A$, every obedient process $q\in\ob(\gamma)$ eventually hears from process~$p$, i.e., $p\in\ho_q(\gamma)$, and hence knows $I_p(\gamma)$.
\end{definition}

\begin{lemma}[Broadcastable connected components]\label{lem:broadcastableCCs}
A connected component $\Sigma_\gamma$ of a set of admissible executions 
$\Sigma$ for uniform resp.\ non-uniform 
consensus with independent arbitrary input assignments that is not
broadcastable for some process contains $w$-valent executions for every
$w\in \V$. In order to solve uniform resp.\ non-uniform consensus with
  independent arbitrary input assignments, every connected
  component must hence be broadcastable by some process, and lead to the
same decision value in each of its executions.
\end{lemma}
\begin{proof}
To prove the first part of our lemma,
we consider the finite sequence of executions $\gamma=\alpha_0,\alpha_1,\dots,\alpha_n=\gamma_w$
obtained from $\gamma$ by changing the initial values of the processes $1,\dots,n$ in $I(\gamma)$ to an arbitrary but fixed $w$, one by one (it is here where we need the arbitrary
input assignment assumption). We show by induction that $\alpha_p \in \Sigma_{\gamma}$ for every $p\in \{0,\dots,n\}$, which proves our claim since $\alpha_n=\gamma_w$. 

The induction basis $p=0$ is trivial, so suppose $\alpha_{p-1}\in \Sigma_{\gamma}$ according to the
induction hypothesis. If it happens that $I_p(\alpha_{p-1})=I_{p}(\gamma)=w$ already, nothing needs to be done and we just set $\alpha_{p}=\alpha_{p-1} \in \Sigma_{\gamma}$. Otherwise, $\alpha_{p}$ is $\alpha_{p-1}$ with the initial value $I_{p}(\alpha_{p})$ changed to $w$. Now suppose for a contradiction that $\alpha_{p} \in\Sigma_{\alpha_{p}} \neq \Sigma_{\gamma}$. 

Since $\Sigma_{\gamma}$ is not broadcastable by any process, hence also not by $p$, there is some
execution $\eta\in \Sigma_{\gamma}$ with $\eta = (C^t)_{t\geq0}$ and a process $q\neq p$ with $q \in \ob(C^t)$ and the initial value $I_p(\eta)$ not in $q$'s view $V_{q}(C^t)$ for every $t\geq 0$.
Thanks to the independent input assignment property \cref{def:independentinputassumption}, there is 
also an execution $\delta=f_{I(\eta),I'}(\eta) \in \Sigma_{\alpha_{p}}$ that
matches $\eta$, i.e., is the same as $\eta$ except that $I(\delta)=I'$ with $I'_q=I_q(\eta)$ for $p \neq q \in \Pi$ but possibly $I'_p\neq I_p(\eta)$.
It follows that $d_{q}(\eta,\delta)=0$ with $q\in\ob(\eta)\cap\ob(\delta)$ and hence  $\dunif(\eta,\delta)=0$ resp.\ $\dnonunif(\eta,\delta)=0$. Consequently, $\delta \in \Sigma_{\gamma}$
and hence $\Sigma_{\alpha_{p}}=\Sigma_{\gamma}$, which provides the required contradiction and
completes the induction step.

For the second part of our lemma, assume for a contradiction that there is a non-broadcastable 
connected component $\Sigma_\gamma$ in the decision set $\Sigma_v$ containing all 
the $v$-valent executions $\gamma_v$. By our previous result, it would also contain some $w$-valent execution $\gamma_w$, $w\neq v$. Consequently, $\Sigma_v \cap \Sigma_w \neq \emptyset$, which
makes consensus impossibly by Theorem~\ref{thm:char:unif} resp.\ \cref{thm:char:nonunif}.
That every $\delta \in \Sigma_\gamma$ leads to the same decision value $\Delta(\delta)=\Delta(\gamma)$
follows from the continuity of the decision function and the connectedness of $\Sigma_\gamma$.
\end{proof}

In addition, \cref{lem:broadcastablediameter} below reveals that \emph{any} connected broadcastable set has a diameter strictly smaller than~$1$.

\begin{definition}[Diameter of a set]\label{def:diameterset}
For $A \subseteq \Comega$, depending on the distance function $d$ that induces
the appropriate topology,
define $A$'s diameter as $d(A)=\sup\{d(\gamma,\delta)\mid \mbox{$\gamma, \delta\in A$}\}$.
\end{definition}

\begin{lemmarep}[Diameter of broadcastable connected sets]\label{lem:broadcastablediameter}
If a connected set $A\subseteq \Sigma$ 
of admissible executions is broadcastable by some process $p$, then
$\dunif(A) \leq d_{p}(A)\leq 1/2$, as well as $\dnonunif(A) \leq 1/2$, i.e., $p$'s initial value satisfies $I_p(\gamma)=I_p(\delta)$ for all $\gamma,\delta \in A$. 
\end{lemmarep}
\begin{proof}
  Our proof below for $d_p(A)\leq 1/2$ translates literally to any $d \in \{d_p, \dunif, \dnonunif\}$;
  the statement $\dunif(A) \leq d_{p}(A)$ follows from
the definition in \cref{eq:dunif}.

Broadcastability by $p$ implies that, for any $\gamma\in A$ with $\gamma=(C^t)_{t\geq0}$, every process $q$ has $I_p(\gamma)$ in its local view 
$V_{q}(C^{T(\gamma)})$ for some $0<T(\gamma)<\infty$ or is not obedient any more. 
Abbreviating $t=T(\gamma)$, consider any $\delta\in B_{2^{-t}}(\gamma) \cap A$
with $\delta = (D^t)_{t\geq0}$.
By definition of $B_{2^{-t}}(\gamma)$, there must be some process $q \in
\ob(D^t)\cap\ob(C^t)$ with $V_{q}(D^{t})=V_{q}(C^{t})$. \cref{def:HO}.(iii)
thus guarantees $I_p(\delta)=I_p(\gamma)$. 

We show
now that this argument can be continued to reach every $\delta\in A$. For a contradiction, suppose that this is not the case and let $U(\gamma)$ be the union of the balls
recursively defined as follows: $U_0(\gamma)=\{\gamma\}$, for $m>0$,
$U_m(\gamma) = \bigcup_{\delta \in U_{m-1}(\gamma)} (B_{2^{-T(\delta)}}(\delta)
\cap A)$,  and finally $U(\gamma)=\bigcup_{m\geq 0} U_m(\gamma)$.
As a union of open balls intersected with $A$, which are all open in $A$, both
$U_m(\gamma)$ for every $m > 0$ and $U(\gamma)$ is hence open in $A$.
For every $\delta \in A\setminus U(\gamma)$, $U(\delta)$ is also open in~$A$, and so is $V(\gamma)=\bigcup_{\delta \in A\setminus U(\gamma)}U(\delta)$. However, the open sets $U(\gamma)$ and $V(\gamma)$ must satisfy $U(\gamma) \cap V(\gamma) = \emptyset$ (as
they would be the same otherwise) and $U(\gamma)\cup V(\gamma)=A$, hence $A$ cannot be connected. 
\end{proof}

Together, \cref{lem:broadcastableCCs} and \cref{lem:broadcastablediameter} imply:

\begin{corollary}[Broadcastable $\Sigma_{\gamma}$]\label{cor:broadcastablePS}
If uniform resp.\ non-uniform consensus with
  independent arbitrary input assignments is solvable, 
then every connected component $\Sigma_{\gamma}\subseteq \Sigma$ must be broadcastable by some process $p$. In every execution $\gamma'\in\Sigma_{\gamma}$, the broadcaster $p$ has the same initial value $I_p(\gamma')$, and the decision value is the same $\Delta(\gamma')=\Delta(\gamma)$.
\end{corollary}


To emphasize the key role of the consequences of \cref{cor:broadcastablePS} for
the equivalence of weak validity (V) and strong validity (SV), where in (SV) the consensus decision value must be the initial value of some process, we first observe that
the transition from (V) to (SV) in our \cref{thm:char:unif} resp.\ \cref{thm:char:nonunif} just requires the replacement 
of condition 2., i.e., \emph{``If execution $\gamma\in\Sigma$ is $v$-valent, then $\gamma \in \Sigma_v$''}, by 
\emph{``If execution $\gamma\in\Sigma_v$, then there is a process $p$ with initial
initial value $I_p(\gamma)=v$''.} This change would result in strong versions
of our theorems, since the above modification is in fact transparent
for the proofs of \cref{thm:char:unif} and \cref{thm:char:nonunif}. Note also
that both versions are equivalent for $v$-valent executions.
Similarly, to obtain a strong version of our meta-procedure, step (3)
\emph{``Add any remaining connected component of~$\Sigma$ to an arbitrarily chosen set~$\Sigma_v$''} must be replaced by \emph{``Add every remaining connected 
component $\Sigma_\gamma \subseteq \Sigma$, where execution $\gamma \in \Sigma_\gamma$
is arbitrary, to any set~$\Sigma_v$, where $v$ is the initial value $I_b(\gamma)=v$ of a process $b$ that is a broadcaster in every execution $\gamma'\in\Sigma_\gamma$''.} 

The crucial role of \cref{cor:broadcastablePS} is that it makes this modification \emph{always} possible also in the case of multi-valued consensus (in the case of binary consensus, it is obvious), as it reveals that if weak consensus is solvable, then
every connected component $\Sigma_\gamma$ must have at least one common broadcaster $b=b(\gamma')=b(\Sigma_\gamma)$ that has the same initial value $I_b(\gamma')=I_b(\gamma)= I_b(\Sigma_\gamma)$ in all executions $\gamma'\in\Sigma_\gamma$. Consequently, if decision sets resp.\ a meta-procedure
exists that allows to solve consensus with weak validity according to
\cref{thm:char:unif} and \cref{thm:char:nonunif}, one can always
reshuffle the connected components to form \emph{strong} 
decision sets, which use the initial value of some broadcaster
for assigning a connected component to a decision set:

\begin{definition}[Strong decision sets]\label{def:strong} Let $\Sigma$ be the
set of admissible executions of any (weak or strong) consensus algorithm with independent arbitrary input assignments. A \emph{strong broadcaster
decision set} $\Sigma_v^p$ for broadcaster $p\in \Pi$ and decision value
$v\in \V$ resp.\ a \emph{strong decision set} $\Sigma_v$ for $v\in \V$
satisfies
\begin{equation}
\Sigma_v^p = \bigcup_{\substack{\gamma\in\Sigma\\ b(\Sigma_\gamma)=p \\ I_{p}(\gamma)=v}} \Sigma_\gamma \qquad\mbox{resp.}\qquad \Sigma_v = \bigcup_{p \in \Pi}  \Sigma_v^p 
\label{eq:Sigmavp}.
\end{equation}
\end{definition}
Note that strong decision sets need not be unique, as some connected component
$\Sigma_\gamma$ might have several broadcasters, any of which could be used for
determining its decision value $v$. The canonical choice to make it uniquely
defined is to take the lexically smallest $p=b(\Sigma_\gamma)$ among
all broadcasters $p' \geq p$ in $\Sigma_\gamma$. In the rest of our paper,
all strong decision sets will be canonical.

\medskip

Since the canonical strong decision sets that can be formed via the
abovementioned reshuffling are easily shown to
satisfy the strong versions of \cref{thm:char:unif} and
\cref{thm:char:nonunif}, one obtains the broadcast-based
characterization of consensus stated in \cref{thm:charbroadcastability}.
Rather than proving it by formalizing the reasoning sketched above,
however, we will rely on the general equivalence results
\cref{thm:equivuniform} resp.\ \cref{thm:equivnonuniform}
developed in in the following \cref{sec:bcgeneral}. This way,
the somewhat tedious and non-constructive reshuffling of connected
components involved in the direct proof can be replaced by an
explicit construction of the canonical strong decision sets,
which utilizes binary consensus.

\begin{theorem}[Consensus characterization via broadcastability]\label{thm:charbroadcastability}
A model allows to solve uniform resp.\ non-uniform consensus with independent
arbitrary input assignments if and only if it guarantees that
(i) every connected component $\Sigma_\gamma$ of the set~$\Sigma$ of admissible executions is broadcastable for some process $p=b(\Sigma_\gamma)$ starting
with the same input $I_p(\gamma')$ in $\gamma' \in \Sigma_{\gamma}$, and
(ii) that the strong broadcaster decision sets $\Sigma_v^p$, $p\in \Pi$, $v\in \V$, as specified in \cref{eq:Sigmavp}, are clopen 
in $\Sigma$ in the uniform topology resp.\ the non-uniform topology.
\end{theorem}
\begin{proof}
  The broadcastability of the connected components follows from
  \cref{cor:broadcastablePS}, the clopenness of the strong broadcaster
  decision sets will be established in the proofs of
  \cref{thm:equivuniform} resp.\ \cref{thm:equivnonuniform}.
\end{proof}

\subsection{General broadcast-based characterization}
\label{sec:bcgeneral}

We will now provide our general broadcast-based characterization for uniform
and non-uniform consensus with arbitrary and independent input assignments
according to \cref{def:independentinputassumption}. In a nutshell, it uses
a reduction to (the solvability of) binary consensus, where weak and strong validity are
trivially equivalent, for explicitly constructing the canonical strong
broadcaster decision sets.

Let $\hat{\Sigma} \subseteq \Sigma$ denote the set of admissible executions 
of a multi-valued consensus algorithm starting from a single initial value
assignment~$\hat{I}:\Pi \to \V$ (any will do, the choice is arbitrary).

\begin{definition}[Uniform/non-uniform broadcastability]\label{def:unifnonunifbroadcastability}
We say that~$\hat{\Sigma}$ is \emph{uniformly resp.\ non-uniformly broadcastable} if there exist sets $\hat{\Sigma}_p\subseteq \hat{\Sigma}$ for $p\in\Pi$ such that:
\begin{enumerate}
\item The sets~$\hat{\Sigma}_p$ are pairwise disjoint and $\bigcup_{p\in\Pi} \hat{\Sigma}_p = \hat{\Sigma}$.
\item Every~$\hat{\Sigma}_p$ is $\dunif$-clopen resp.\ $\dnonunif$-clopen in~$\hat{\Sigma}$.
\item Every $\hat{\Sigma}_p$ is broadcastable by $p$ but not by any lexically smaller $p'<p$, i.e., every obedient process $q\in\ob(\gamma)$ satisfies $p\in\ho_q(\gamma)$ for every $\gamma \in \hat{\Sigma}_p$.
\end{enumerate}
\end{definition}

\begin{theorem}\label{thm:binary:to:broadcast}
If uniform resp.\ non-uniform binary consensus with arbitrary and independent input assignments is solvable, then~$\hat{\Sigma}$ is uniformly resp.\ non-uniformly broadcastable.
\end{theorem}
\begin{proof}
By \cref{thm:char:unif} resp.~\cref{thm:char:nonunif} restricted to $|\V|=2$, there exists a clopen partition $(\Sigma_0,\Sigma_1)$ of~$\Sigma$ such that~$\Sigma_0$ includes all $0$-valent executions and~$\Sigma_1$ includes all $1$-valent executions.

For $0\leq p\leq n$, let~$I_p$ be the initial value assignment in which all processes~$q\leq p$ have initial value~$1$ and all processes $q > p$ have initial value~$0$.
The assignment~$I_0$ is the all-$0$ assignment and~$I_n$ is the all-$1$ assignment.
According to \cref{def:independentinputassumption}, there is an isomorphism $g_p = f_{I_{p-1},I_p}: \Sigma^{(I_{p-1})} \to \Sigma^{(I_p)}$ for every $1\leq p\leq n$, as well as an isomorphism $h = f_{I_n,\hat{I}}:\Sigma^{(I_n)}\to\hat{\Sigma}$.

We now inductively define the set $\Sigma_{p,q}$, for $1\leq p\leq n$ and $1\leq q\leq p$, which consists of all 1-deciding executions starting
from $I_p$ where $q$ is the lexically smallest broadcaster. Note that 
both $p$ and $p'$ might be broadcasters in $\gamma \in \Sigma_{p,q}$, 
provided $p < p'$.
\begin{itemize}
\item[(i)] $\Sigma_{1,1} = \Sigma^{(I_1)}_1$, the set of $1$-deciding executions when starting with initial value assignment~$I_1$.
\item[(ii)] For $2\leq p\leq n$ and $1\leq q\leq p-1$, we set $\Sigma_{p,q} = g_p[\Sigma_{p-1,q}]$.
\item[(iii)] For $2\leq p\leq n$ and $q=p$, we set $\Sigma_{p,p} = \Sigma^{(I_p)}_1 \setminus \bigcup_{q=1}^{p-1} \Sigma_{p,q}$.
\end{itemize}
A trivial induction reveals that, for every $1\leq p\leq n$,
$\Sigma_{p,q} \subseteq \Sigma^{(I_p)}$, and that the sets~$\Sigma_{p,q}$
are pairwise disjoint since all the~$g_p$ are bijective.
Furthermore, since the decision sets~$\Sigma^{(I_p)}_1$ are clopen
in~$\Sigma^{(I_p)}$ and the~$g_p$ are homeomorphisms, every~$\Sigma_{p,q}$ is
clopen in~$\Sigma^{(I_p)}$.

We now prove, by induction on~$p$, that every $1 \leq q \leq p$ is the lexically
smallest broadcaster in every execution $\gamma\in\Sigma_{p,q}$, i.e.,
that $q\in\ho_r(\gamma)$ for every $r\in\ob(\gamma)$, and that there is no
smaller $q'$ with this property.
We start with the base case $p=q=1$, which is obviously the lexically smallest.
Let~$\gamma \in\Sigma^{(I_1)}_1$ and $r\in\ob(\gamma)$.
Assuming by contradiction that $1\not\in \ho_r(\gamma)$, we get $\Delta_r(\gamma) = \Delta_r(g_1^{-1}(\gamma)) = 0$ by \cref{def:independentinputassumption}.(4)
and Validity (V). This contradicts $\gamma \in\Sigma^{(I_1)}_1$, however.
Now let $2\leq p\leq n$.
For all $1\leq q\leq p-1$ and all $\gamma \in \Sigma_{p,q}$, we have that $q\in\ho_r(\gamma) = \ho_r(g_q^{-1}(\gamma))$ for all $r\in\ob(\gamma) = \ob(g_q^{-1}(\gamma))$ is the lexically smallest broadcaster by the induction hypothesis. 
For $q = p$, assuming by contradiction that $p\not\in \ho_r(\gamma)$ for $\gamma\in\Sigma_{p,p}$ and $r\in\ob(\gamma)$, we get $\Delta_r(\gamma) = \Delta_r(g_p^{-1}(\gamma)) = 0$ by  \cref{def:independentinputassumption}.(4) and the fact that (iii) implies $\Sigma^{(I_{p-1})}_1 \subseteq \bigcup_{q=1}^{p-1} \Sigma_{p-1,q}$
since  $\Sigma_{p-1,p-1} = \Sigma^{(I_{p-1})}_1 \setminus \bigcup_{q=1}^{p-2} \Sigma_{p-1,q}$; the latter also guarantees that there is no lexically smaller
broadcaster. This completes our induction proof.

We finally set $\hat{\Sigma}_p = h[\Sigma_{n,p}]$ for $1\leq p\leq n$ and show that the result satisfies uniform broadcastability according to
\cref{def:unifnonunifbroadcastability}:
(1)
Pairwise disjointness of the~$\hat{\Sigma}_p$ follows from pairwise disjointness of the~$\Sigma_{n,p}$.
The fact that $\bigcup_{p=1}^n \hat{\Sigma}_p = \hat{\Sigma}$ follows from the definition of~$\Sigma_{n,n}$ and the fact that $\Sigma^{(I_n)}_1 = \Sigma^{(I_n)}$ by Validity.
(2)
Clopenness of the~$\hat{\Sigma}_p$ follows from clopenness of the~$\Sigma_{n,p}$ and the fact that~$h$ is a homeomorphism.
(3)
For every $\gamma\in\hat{\Sigma}_p$ and $q\in\ob(\gamma)$, we have $p\in \ho_q(\gamma) = \ho_q(h^{-1}(\gamma))$.
This concludes the proof.
\end{proof}

With this result, we can prove the following equivalences \cref{thm:equivuniform} 
resp.\ \cref{thm:equivnonuniform} for uniform and non-uniform consensus:

\begin{theorem}\label{thm:equivuniform}
For a set of admissible executions $\Sigma$ where uniform consensus with arbitrary and independent input assignments is solvable, the following statements are equivalent:
\begin{enumerate}
\item Uniform binary consensus is solvable.
\item Foy any input assignment $\hat{I}:\Pi \to \V$, the subset
  of admissible executions $\hat{\Sigma} \subseteq \Sigma$ using $\hat{I}$
is uniformly broadcastable.
\item Strong uniform consensus is solvable for any set~$\V$ of initial values.
\item Weak uniform consensus is solvable for any set~$\V$ of initial values.
\end{enumerate}
\end{theorem}
\begin{proof}
The implications (3)$\Rightarrow$(4)$\Rightarrow$(1) are trivial.
The implication (1)$\Rightarrow$(2) follows from Theorem~\ref{thm:binary:to:broadcast}.
To prove the implication (2)$\Rightarrow$(3), we give an algorithm that solves strong consensus, akin to those used in the proofs of Theorem~\ref{thm:char:unif}.

Let~$\hat{\Sigma}$ be broadcastable and let~$\hat{\Sigma}_p$ be sets as in the definition of broadcastability.
For every initial value assignment $I:\Pi\to\V$, let $g_I = f_{\hat{I},I}:\hat{\Sigma}\to \Sigma^{(I)}$ be the corresponding isomorphism.
For $p\in\Pi$ and $v\in\V$, we define the canonical strong broadcaster decision
sets
\begin{equation}
\Sigma_v^p
=
\bigcup_{\substack{I:\Pi\to\V\\I_p=v}} g_I[\hat{\Sigma}_p]
\qquad
\text{and}
\qquad
\Sigma_v
=
\bigcup_{p\in\Pi} \Sigma_v^p
\enspace. \label{eq:strongsetsuniform}
\end{equation}
The sets~$\Sigma_v^p$ are $\dunif$-open in~$\Sigma$:
For any $\gamma\in\Sigma_v^p$, let~$T$ be a time at which, (i) in execution~$\gamma$, all processes have heard from~$p$ and (ii) $B_{2^{-T}}(g_I^{-1}(\gamma)) \subseteq \hat{\Sigma}_p$ in~$\hat{\Sigma}$ for all $I:\Pi\to\V$ with $I_p = v$, and choose the neighborhood
\begin{equation}
\begin{split}
\mathcal{N}
& =
\left\{
\delta\in\Sigma \mid \dunif(\gamma,\delta) < 2^{-T}
\right\}
\\ & =
\left\{
\delta\in\Sigma \mid \exists q\in\Pi\colon C^T \sim_q D^T 
\right\}
\\ & =
\left\{
\delta\in\Sigma \mid \exists q\in\Pi\colon C^T \sim_q D^T \wedge p\in \ho_q(C^T) = \ho_q(D^T)
\right\}
\\ & \subseteq
\left\{
\delta\in\Sigma \mid I_p(\gamma) = I_p(\delta) = v
\right\}
\subseteq
\bigcup_{\substack{I:\Pi\to\V\\I_p = v}}
g_I[\hat{\Sigma}]
\end{split}
\end{equation}
where we use the notation $\gamma = (C^t)_{t\geq0}$ and $\delta = (D^t)_{t\geq0}$.
By assumption~(ii) on the choice of~$T$, for every $I:\Pi\to\V$ with $I_p = v$, we have 
\begin{equation}
\begin{split}
\mathcal{N} \cap g_I[\hat{\Sigma}]
& =
\left\{
\delta \in g_I[\hat{\Sigma}] \mid \dunif(\gamma, \delta) < 2^{-T}
\right\}
\\ & =
\left\{
\delta \in g_I[\hat{\Sigma}] \mid \dunif\left(g_I^{-1}(\gamma), g_I^{-1}(\delta)\right) < 2^{-T}
\right\}
\\ & =
\left\{
g_I(\delta)\mid \delta\in\hat{\Sigma} \wedge \dunif\left(g_I^{-1}(\gamma), \delta\right) < 2^{-T}
\right\}
\\ & \subseteq
\left\{
g_I(\delta)\mid \delta\in\hat{\Sigma}_p
\right\}
=
g_I[\hat{\Sigma}_p]
\enspace.
\end{split}
\end{equation}
Combining the last two equations, we get
\begin{equation}
\begin{split}
\mathcal{N}
=
\bigcup_{\substack{I:\Pi\to\V\\I_p = v}}
\left(\mathcal{N} \cap g_I[\hat{\Sigma}] \right)
\subseteq
\bigcup_{\substack{I:\Pi\to\V\\I_p = v}}
g_I[\hat{\Sigma}_p]
=
\Sigma_v^p
\enspace.
\end{split}
\end{equation}
The sets~$\Sigma_v^p$, as well as the sets~$\Sigma_v$ as unions of the~$\Sigma_v^p$, are thus $\dunif$-open in~$\Sigma$.
The~$\Sigma_v$ are pairwise disjoint since the~$\hat{\Sigma}_p$ are.
We further have $\Sigma = \bigcup_{v\in\V} \Sigma_v$.

We now define the strong consensus algorithm.
For every configuration $C\in\C$, we set
\begin{equation}
\Delta_q(C)
=
\begin{cases}
v & \text{if }
\{\delta\in\Sigma \mid \exists t\colon C\sim_q D^t \}
\subseteq
\Sigma_{v}
\\
\perp & \text{otherwise}
\end{cases}\label{eq:decisionfunctionuniform}
\end{equation}
The function~$\Delta_q$ is well-defined since the sets~$\Sigma_v$ are pairwise disjoint.

We first show Termination.
Let $\gamma \in \Sigma$, let $I:\Pi\to\V$ be the initial value assignment of~$\gamma$, and let $q\in\ob(\gamma)$.
Since~$\Sigma_v$ is $\dunif$-open in~$\Sigma$, there exists some $\varepsilon >
0$ such that $\{\delta\in\Sigma\mid d_q(\gamma,\delta) < \varepsilon\} = \{\delta\in\Sigma\mid \dunif(\gamma,\delta) <
\varepsilon\}\subseteq \Sigma_v$.
Letting~$T$ be the smallest integer such that $2^{-\chi_q(C^t)} \leq \varepsilon$ for all $t\geq T$, we get $\Delta_q(C^t) = v$ for all $t\geq T$, just like in the proof of Theorem~\ref{thm:char:unif}.

To show Uniform Agreement, assume by contradiction that process~$q$ decides a value $w\neq v$ in configuration~$C$ in execution $\gamma\in\Sigma_v$.
Then, by definition of~$\Delta_q$, we have $\gamma\in\{\delta\in\Sigma \mid \exists t\colon C \sim_q D^t\}\subseteq \Sigma_w$.
But this is impossible since $\Sigma_v \cap \Sigma_w = \emptyset$.

We finish the proof by showing Strong Validity.
Let $\gamma \in \Sigma_v$.
Then, by definition, there exists a $p\in\Pi$ and an $I:\Pi\to\V$ with $I_p = v$ such that $\gamma\in g_I[\hat{\Sigma}_p] \subseteq \Sigma^{(I)}$.
But then, in particular, $I_p(\gamma) = I_p = v$.
\end{proof}

\begin{theorem}\label{thm:equivnonuniform}
For a set of admissible executions $\Sigma$ where non-uniform consensus with arbitrary and independent input assignments is solvable, the following statements are equivalent:
\begin{enumerate}
\item Non-uniform binary consensus is solvable.
\item Foy any input assignment $\hat{I}:\Pi \to \V$, the subset
  of admissible executions $\hat{\Sigma} \subseteq \Sigma$ using $\hat{I}$
  is uniformly broadcastable.
\item Strong non-uniform consensus is solvable for any set~$\V$ of initial values.
\item Weak non-uniform consensus is solvable for any set~$\V$ of initial values.
\end{enumerate}
\end{theorem}
\begin{proof}
The proof is similar to that of \cref{thm:equivuniform}.

The implications (3)$\Rightarrow$(4)$\Rightarrow$(1) are trivial.
The implication (1)$\Rightarrow$(2) follows from Theorem~\ref{thm:binary:to:broadcast}.
To prove the implication (2)$\Rightarrow$(3), we give an algorithm that solves strong consensus, akin to those used in the proofs of Theorem~\ref{thm:char:nonunif}.

Let~$\hat{\Sigma}$ be broadcastable and let~$\hat{\Sigma}_p$ be sets as in the definition of broadcastability.
For an initial value assignment $I:\Pi\to\V$, let $g_I = f_{\hat{I},I}:\hat{\Sigma}\to \Sigma^{(I)}$ be the isomorphism.
For $p\in\Pi$ and $v\in\V$, we define the canonical strong broadcaster decision sets
\begin{equation}
\Sigma_v^p
=
\bigcup_{\substack{I:\Pi\to\V\\I_p=v}} g_I[\hat{\Sigma}_p]
\qquad
\text{and}
\qquad
\Sigma_v
=
\bigcup_{p\in\Pi} \Sigma_v^p
\enspace. \label{eq:strongsetsnonuniform}
\end{equation}
The sets~$\Sigma_v^p$ are $\dnonunif$-open in~$\Sigma$:
For any $\gamma\in\Sigma_v^p$, let~$T$ be a time at which, (i) in execution~$\gamma$, all processes have heard from~$p$ and (ii) $B_{2^{-T}}(g_I^{-1}(\gamma)) \subseteq \hat{\Sigma}_p$ in~$\hat{\Sigma}$ for all $I:\Pi\to\V$ with $I_p = v$, and choose the neighborhood
\begin{equation}
\begin{split}
\mathcal{N}
& =
\left\{
\delta\in\Sigma \mid \dnonunif(\gamma,\delta) < 2^{-T}
\right\}
\\ & =
\left\{
\delta\in\Sigma \mid \exists q\in\Pi\colon C^T \sim_q D^T \wedge q\in\ob(\gamma)\cap\ob(\delta)
\right\}
\\ & \subseteq
\left\{
\delta\in\Sigma \mid \exists q\in\Pi\colon C^T \sim_q D^T \wedge p\in \ho_q(C^T) = \ho_q(D^T)
\right\}
\\ & \subseteq
\left\{
\delta\in\Sigma \mid I_p(\gamma) = I_p(\delta) = v
\right\}
\subseteq
\bigcup_{\substack{I:\Pi\to\V\\I_p = v}}
g_I[\hat{\Sigma}]
\end{split}
\end{equation}
where we use the notation $\gamma = (C^t)_{t\geq0}$ and $\delta = (D^t)_{t\geq0}$.
By assumption~(ii) on the choice of~$T$, for every $I:\Pi\to\V$ with $I_p = v$, we have 
\begin{equation}
\begin{split}
\mathcal{N} \cap g_I[\hat{\Sigma}]
& =
\left\{
\delta \in g_I[\hat{\Sigma}] \mid \dnonunif(\gamma, \delta) < 2^{-T}
\right\}
\\ & =
\left\{
\delta \in g_I[\hat{\Sigma}] \mid \dnonunif\left(g_I^{-1}(\gamma), g_I^{-1}(\delta)\right) < 2^{-T}
\right\}
\\ & =
\left\{
g_I(\delta)\mid \delta\in\hat{\Sigma} \wedge \dnonunif\left(g_I^{-1}(\gamma), \delta\right) < 2^{-T}
\right\}
\\ & \subseteq
\left\{
g_I(\delta)\mid \delta\in\hat{\Sigma}_p
\right\}
=
g_I[\hat{\Sigma}_p]
\enspace.
\end{split}
\end{equation}
Combining the last two equations, we get
\begin{equation}
\begin{split}
\mathcal{N}
=
\bigcup_{\substack{I:\Pi\to\V\\I_p = v}}
\left(\mathcal{N} \cap g_I[\hat{\Sigma}] \right)
\subseteq
\bigcup_{\substack{I:\Pi\to\V\\I_p = v}}
g_I[\hat{\Sigma}_p]
=
\Sigma_v^p
\enspace.
\end{split}
\end{equation}
The sets~$\Sigma_v^p$, as well as the sets~$\Sigma_v$ as unions of the~$\Sigma_v^p$, are thus $\dnonunif$-open in~$\Sigma$.
The~$\Sigma_v$ are pairwise disjoint since the~$\hat{\Sigma}_p$ are.
We further have $\Sigma = \bigcup_{v\in\V} \Sigma_v$.

We now define the strong consensus algorithm.
For every configuration $C\in\C$, we set
\begin{equation}
\Delta_q(C)
=
\begin{cases}
v & \text{if }
\{\delta\in\Sigma \mid \exists t\colon C\sim_q D^t \wedge q\in \ob(\delta) \}
\subseteq
\Sigma_{v}
\\
\perp & \text{otherwise}
\end{cases} \label{eq:decisionfunctionnonuniform}
\end{equation}
The function~$\Delta_q$ is well-defined since the sets~$\Sigma_v$ are pairwise disjoint.

We first show Termination.
Let $\gamma \in \Sigma$, let $I:\Pi\to\V$ be the initial value assignment of~$\gamma$, and let $q\in\ob(\gamma)$.
Since~$\Sigma_v$ is $\dnonunif$-open in~$\Sigma$, there exists some $\varepsilon >
0$ such that $\{\delta\in\Sigma\mid d_q(\gamma,\delta) < \varepsilon \wedge q\in\ob(\delta)\} = \{\delta\in\Sigma\mid \dnonunif(\gamma,\delta) <
\varepsilon\}\subseteq \Sigma_v$.
Letting~$T$ be the smallest integer such that $2^{-\chi_p(C^t)} \leq \varepsilon$ for all $t\geq T$, we get $\Delta_p(C^t) = v$ for all $t\geq T$.

To show Agreement, assume by contradiction that process~$q$ decides a value $w\neq v$ in configuration~$C$ in execution $\gamma\in\Sigma_v$.
Then, by definition of~$\Delta_q$, we have $\gamma\in\{\delta\in\Sigma \mid \exists t\colon C \sim_q D^t \wedge q\in\ob(\delta)\}\subseteq \Sigma_w$.
But this is impossible since $\Sigma_v \cap \Sigma_w = \emptyset$.

We finish the proof by showing Strong Validity.
Let $\gamma \in \Sigma_v$.
Then, by definition, there exists a $p\in\Pi$ and an $I:\Pi\to\V$ with $I_p = v$ such that $\gamma\in g_I[\hat{\Sigma}_p] \subseteq \Sigma^{(I)}$.
But then, in particular, $I_p(\gamma) = I_p = v$.
\end{proof}

We conclude this section by pointing that the practical utility of the
equivalence of consensus with weak and strong validity established
in \cref{thm:equivuniform} and \cref{thm:equivnonuniform} is somewhat
limited: Since the solution algorithms depend on the a priori knowledge of the decision sets, they do not give a clue on how to develop 
a strong consensus algorithm from a weak consensus algorithm in a given
model. In fact, determining and agreeing upon a broadcaster in executions 
that are not $v$-valent is a very hard problem.

\section{Applications}
\label{sec:applications}

In this section, we will apply our topological
characterizations of consensus solvability to several
different examples. Apart from providing a topological explanation
of bivalence proofs (\cref{sec:bivalence}) and folklore results for
synchronous consensus under general omission faults
(\cref{sec:generalomissions}), we will provide a novel characterization of
condition-based asynchronous
consensus~\cite{MRR03:JACM} with strong validity (\cref{sec:conditionconsensus}),
a complete characterization
of consensus solvability for dynamic networks with both closed (\cref{sec:closed})
and non-closed (\cref{sec:nonclosed}) message adversaries, and
a consensus algorithm for asynchronous
systems with weak timely links that does not rely on an
implementation of the $\Omega$ failure detector (\cref{sec:WTL}).

\subsection{Bivalence-based impossibilities}
\label{sec:bivalence}

Our topological results shed some new light on the now standard technique of
bivalence-based impossibility proofs introduced in the celebrated FLP paper~\cite{FLP85},
which have been generalized~\cite{MR02} and used in many different contexts:
Our
results reveal that the forever bivalent executions constructed inductively in 
bivalence proofs \cite{SW89,SWK09,BRSSW18:TCS,WSS19:DC}
are just the common limit of two infinite sequence of executions $\alpha_0,\alpha_1,\dots$
in the $0$-decision set $\Sigma_0$ and  
$\beta_0,\beta_1,\dots$ in the $1$-decision set $\Sigma_1$.

More specifically, what is common to these proofs is that one shows that, for any
consensus algorithm, there is an admissible forever bivalent execution~$\gamma$. This is usually done inductively,
by showing that there is a bivalent initial configuration and that, given a bivalent
configuration $C^{t-1}$ at the end of round $t-1$, there is a 1-round extension
leading to a bivalent configuration $C^t$ at the end of round $t$.
By definition, bivalence of $C^t$ means that there are two admissible executions $\alpha_t$ with 
decision value~$0$ and $\beta_t$ with decision value~$1$ starting out from $C^t$, i.e.,
having a common prefix that leads to $C^t$. Consequently, their distance satisfies $\dnonunif(\alpha_t,\gamma) < 2^{-t}$ and $\dnonunif(\beta_t,\gamma) < 2^{-t}$.
But then closedness of~$\Sigma_0$ and~$\Sigma_1$ implies that $\gamma\in\Sigma_0\cap \Sigma_1$, a contradiction to their disjointness.

By construction, the $(t-1)$-prefixes of $\alpha_t$ and $\alpha_{t-1}$ are the same for
all $t$, which implies that they converge to a limit $\hat{\alpha}$ (and analogously for $\hat{\beta}$), see Figure~\ref{fig:noncompactMA} for an illustration. 
Therefore, these executions match Definition~\ref{def:fairunfair}, and Corollary~\ref{cor:consensusimpfair} implies that the stipulated consensus algorithm cannot be correct.
A specific example is the lossy-link impossibility~\cite{SW89}, i.e., the impossibility of consensus under an oblivious
message adversary for $n=2$ that may choose any graph out of the set $\{\leftarrow,\leftrightarrow,\rightarrow\}$, and the impossibility of solving consensus with vertex-stable source components with insufficient stability interval~\cite{BRSSW18:TCS,WSS19:DC}. 
In the case of the oblivious
lossy-link message adversary using the reduced set $\{\leftarrow,\rightarrow\}$ considered by Coulouma, Godard, and Peters~\cite{CGP15}, consensus is solvable and there is no forever bivalent execution.
Indeed, there exists a consensus algorithm where all configurations reached after the first round are already univalent, see~\cref{sec:closed}.

\subsection{Consensus in synchronous systems with general omission process faults}
\label{sec:generalomissions}

As a more elaborate example of systems where the solvability of non-uniform and uniform
consensus may be different (which also cover the simple running 
examples used in \cref{sec:structure:executions}), we take synchronous systems with up to $f$
general omission process faults~\cite{PT86}. 
For $n\geq f+1$, non-uniform consensus can be solved in 
$f+1$ rounds, whereas solving uniform consensus requires $n \geq 2f+1$. 

The impossibility proof of uniform consensus for $n \leq 2f$ uses a standard
partitioning argument, splitting $\Pi$ into a set~$P$ of processes with $\lvert P\rvert=f$ and~$Q$
with $\lvert Q\rvert =n-f\leq f$. One considers an admissible execution 
$\alpha_0$ where all processes $p\in \Pi$
start with $I_p=0$, the ones in~$P$ are correct, and the ones in~$Q$ are initially
mute; the decision value of the processes in~$P$ must be~$0$ by validity. Similarly,
$\alpha_1$ starts from $I_p=1$, all processes in~$Q$ are correct and the ones in~$P$
are initially mute; the decision value is hence~$1$. For another execution $\alpha$,
where the processes in~$Q$ are correct and the ones in~$P$ are general omission faulty, in the
sense that every $p\in P$ does not send and receive any message to/from~$Q$, one
observes $\alpha\sim_{p}\alpha_0$, i.e., $d_p(\alpha, \alpha_0) < 2^{-t}$ for all $t\geq 0$
and all $p\in P$. Similarly, $\alpha\sim_{q}\alpha_1$ for 
every $q\in Q$. Hence,~$p$ and~$q$ decide on different values in $\alpha$. 

Topologically, this is equivalent to $\dunif(\alpha,\alpha_0)=0$ as well as 
$\dunif(\alpha,\alpha_1)=0$, which implies 
$\alpha\in \Sigma_0$ as well as $\alpha\in \Sigma_1$. Consequently, $\Sigma_0$ and $\Sigma_1$ cannot be
disjoint, as needed for uniform consensus solvability. Clearly, for $n\geq 2f+1$, this argument
is no longer applicable. And indeed, algorithms like the one proposed by Parvedy and Raynal~\cite{PR03:IPDPS} can be 
used for solving uniform consensus.


If one revisits the topological equivalent of the above partitioning argument for $n \leq 2f$
in the \emph{non-uniform} case, it turns out that still $\dnonunif(\alpha,\alpha_0)=0$,
but $\dnonunif(\alpha,\alpha_1)=1$ as all processes in~$Q$ are faulty. Consequently, 
$\alpha \not\in \Sigma_1$. So $\Sigma_0$ and $\Sigma_1$ could partition the space of
admissible executions. And indeed, non-uniform consensus can be solved in $f+1$ rounds 
here. 
In order to demonstrate this by means of our \cref{thm:char:nonunif}, we will sketch how the required decision
sets $\Sigma_v$ can be constructed. We will do so by means of a simple \emph{labeling algorithm}, 
which assigns a decision value $v \in \V$ to every admissible execution~$\gamma$. Note that synchronous
systems are particularly easy to model in our setting, since we can use the number of rounds
as our global time $t$.

Clearly, every process that omits to send its state in some round to a (still) correct processor is 
revealed to every other (still) correct processor at the next round at the latest.
This implies that every correct process $p$ seen by \emph{some} correct process $q$
by the end of the $(f+1)$-round prefix $\gamma|_{f+1}$ in the admissible execution $\gamma$ has also been seen by every other 
correct process during $\gamma|_{f+1}$ as well, since one would need a chain of 
$f+1$ \emph{different} faulty processes for propagating $p$'s state to $q$ otherwise.
Thus, $p$ must have managed to broadcast its initial value
$I_p(\gamma)$ to all correct 
processes during $\gamma|_{f+1}$.

Consequently, if $\gamma|_{f+1} \sim \rho|_{f+1}$, where $\sim$ denotes the transitive
closure (over all processes $p\in \Pi$) of the indistinguishability relation $\sim_p$ for prefixes, they must have the same
set of broadcasters. Our labeling algorithm hence just assigns to $\gamma$
the initial value $I_p$ of the, say, lexically smallest broadcaster $p$ in 
$\gamma|_{f+1}$. The resulting decision sets are trivially open since, for every
$\gamma \in \Sigma_v$, we have $B_{2^{-(f+1)}}(\gamma) \subseteq \Sigma_v$ as well.
The generic non-uniform consensus algorithm from \cref{thm:char:nonunif}
resp.\ \cref{thm:equivuniform} can hence be used for
solving weak resp.\ strong consensus.

\subsection{Asynchronous condition-based consensus}
\label{sec:conditionconsensus}

As an example of asynchronous consensus in shared-memory systems, we consider
the condition-based approach presented by Mostefaoui, Rajsbaum, and Raynal~\cite{MRR03:JACM}. In order to circumvent the FLP impossibility~\cite{FLP85} of 
consensus in the presence of process crashes, the authors considered
restrictions of the vectors of allowed initial values $I(\gamma)$ for
the admissible executions $\gamma \in \Sigma$ of
the $n$ processes in the system. To ensure compatibility with the
notation used in the original paper~\cite{MRR03:JACM}, we will write $I[1],\dots,I[n]$ 
instead of $I_1,\dots,I_n$ for the initial value assignment of
a given admissible execution in this section. For a set $C \subseteq \V^n$ 
of allowed input vectors (called a \emph{condition}) that is a priori known
to all processes, the authors asked for properties $C$
must satisfy such that uniform consensus can be solved in the presence 
of up to $f$ crashes. Note carefully that this is an instance of consensus
where the \emph{arbitrary} input assumption does not apply, albeit the
independent input assumption (recall \cref{def:independentinputassumption}) 
is needed.

Two such properties were identified in~\cite{MRR03:JACM}: (i) the more practical \emph{$f$-acceptability} property, which consists of ``elements'' that can be directly utilized in 
a generic solution algorithm, and
(ii) the more abstract \emph{$f$-legality} condition. Moreover, two different variants of consensus
were considered: (a) non-safe consensus, which only needs to terminate when
the initial values are indeed from $C$, and (b) safe consensus, where the
processes must also terminate for arbitrary inputs in well-behaved (in
particular, fault-free) executions. Interestingly, it turned out
that (i) and (ii), as well as (a) and (b), are equivalent, and
that either variant of consensus can be solved in the presence of up to $f$
crashes if and only if $C$ is $f$-legal or/and $f$-acceptable \cite[Thm.~5.7]{MRR03:JACM} .

The generic non-safe solution algorithm for an $f$-acceptable condition 
$C$ is extremely simple: It only uses one round, where process $p_i$
first writes its initial value $I[i]$ into its entry $V[i]$ of a snapshot 
object $V$ that is initialized to $V[*]=\bot$, and then performs snapshot 
reads that provide its current local view $V_i$ until it finds at least 
$n-f$ non-$\bot$ entries in $V_i$. The latter condition terminates the round, at the end of
which $p_i$ uses the ``elements'' making up $f$-acceptability for 
computing the decision value from its final view $V_i$. Note that a $\bot$ entry 
in $V_i[j]$ can be due to a crash of $p_j$ or just a consequence of the fact that $p_j$ 
has just been slow compared to the at least $n-f$ other processes that managed to provide
non-$\bot$ entries. To make this algorithm compatible with our setting, 
where all executions are infinite, we just add infinitely many empty rounds 
(where no process changes its state or reads/writes $V$). Moreover, we
consider all processes to be obedient and just make at most $f$ of them very
slow when needed, which allows us to directly use our uniform topology.

\medskip

The definition of $f$-legality is based on an undirected 
graph $H(C,f)$, whose vertices are the vectors in $C$ and where there is an
edge $(I1,I2)$ if and only if the Hamming distance between $I1 \in C$ and $I2 \in C$ is
at most $f$. The graph $H(C,f)$ can be expanded into a graph $Gin(C,f)$ of all
the views $V_i$ possibly obtained by any process $p_i$ in the above algorithm: For every
$I\in C$, $Gin(C,f)$ contains all the vertices that are obtained by replacing 
up to $f$ entries of $I$ by $\bot$.  Two vertices $J1,J2 \in Gin(C,f)$ are
connected by an undirected edge if $J1[i]\neq\bot \Rightarrow J1[i]=J2[i]$ 
for every $1\leq i \leq n$, or vice versa. It is not difficult to see
that $I1, I2 \in H(C,f)$ are connected by an edge if and only if the same vertices 
$I1, I2 \in Gin(C,f)$ are connected by a path.

A condition $C$ is $f$-legal if, for each
connected component $G_1,\dots,G_x$ of $Gin(C,f)$, all the vertices in 
the component have at least one input value $v$ in common \cite[Def.~5.2]{MRR03:JACM}. This property translates
to the corresponding connected components $H_1,\dots,H_x$ of $H(C,f)$
as: all vertices in a component must have at least one entry with input value $v$ in
common, and $v$ appears in $f+1$ entries in every vertex. In fact, without
the latter, $v$ would disappear from the view $J$ in $Gin(C,f)$ where 
the at most $f$ entries holding $v$ in $I \in H(C,f)$ are replaced by $\bot$.

The setting for condition-based consensus in \cite{MRR03:JACM} differs from
the one underlying our topological results in the previous sections in two
aspects: (1) It uses a validity condition that is stronger
than our strong validity (SV), as
it does not allow processes to decide on the
initial value of an initially dead process. (2) It does not allow arbitrary
input assignments, which is a pivotal assumption in all our broadcasting-based
characterizations in \cref{sec:broadcastability}. And indeed, as it will
turn out, we do not usually have a \emph{common} broadcaster $p$ in the
connected components of a decision set $\Sigma_v$ here.

In \cref{thm:conditionbasedconsensus} below, we will characterize the solvability of condition-based consensus with strong validity (SV) using our topological approach.
To model (SV), the original $f$-legality condition must be 
weakened to \emph{$f$-quasilegality}: Rather
than assuming that all input assignments $I$ in a connected component $G_i$ in
$Gin(C,f)$, i.e., the vertices also lying in the corresponding connected component
$H_i$ in $H(C,f)$, must have a value $v$ in common that appears in at least
$f+1$ entries in $I$, $f$-quasilegality only requires a common value $v$.

For our proof, 
we exploit the very simple structure of the set of admissible
executions $\Sigma$ of the generic condition-based consensus algorithm, 
and the close relation between $\Sigma$ and $Gin(C,f)$. In fact, $Gin(C,f)$ 
is a graph on all the possible views of the processes (at the end of the 
first round) in any execution. More specifically, for the admissible execution 
$\alpha=\alpha(I) \in \Sigma$ starting from the initial value assignment $I$,
the configuration $\alpha^1=(J_1,¸\dots, J_n)$ after round~1
satisfies $J_j \in G_i \subseteq
Gin(C,f)$ for every $p_j \in \Pi$ and $J_j=\bot$ otherwise. Herein,
$G_i$ is the connected component in $Gin(C,f)$ that contains $I$. This holds
since every $J_j$ is obtained from $I$ by replacing at most $f$ entries with 
$\bot$ in $Gin(C,f)$. Note carefully that every process can hence unambigously 
identify the connected component $G_i$ the current execution belongs to, as 
it only needs to check in which connected component its local view lies. Recall 
that it is assumed that every process knows $C$ and hence $H(C,f)$ and 
$Gin(C,f)$ a priori.

\begin{theorem}[Condition-based consensus characterization]\label{thm:conditionbasedconsensus}
  In the asynchronous shared memory system with at most~$f$ crash faults,
  condition-based consensus with strong validity (SV) can be solved for condition $C$ if and only if $C$ is $f$-quasilegal, in the sense that all the vertices
  in a connected component of $H(C,f)$ have a value $v$ in common.
\end{theorem}
\begin{proof}
We first prove that if $C$ is $f$-quasilegal, then strong consensus is solvable.
With $v_i \in \V$ denoting the common value a priori chosen for the connected 
component $H_i$ (and hence $G_i$), we define the decision sets as $\Sigma_{v_i} = 
\{\gamma | \gamma^1 \in G_i\}$, where $\gamma^1 = D^1$ for $\gamma = (D^t)_{t\geq 0}$. 
By construction, $\Sigma_{v}$ and $\Sigma_{w}$
are disjoint for $w\neq v$. Since our topology is discrete, as the finiteness 
of $Gin(C,f)$ implies that there are only finitely many different admissible executions 
in $\Sigma$, all decision sets (and their connected components) are clopen in $\Sigma$.
Applying the algorithm given in Theorem~\ref{thm:equivuniform}
hence allows to solve consensus. 

On the other hand, to show that consensus cannot be solved if $C$ is not 
$f$-quasilegal, suppose for a contradiction that there is a correct strong 
consensus algorithm without it. We first prove that all executions starting from
an input value assignment $I \in G_i$ in a connected component $G_i \subseteq
Gin(C,f)$, which necessarily also contains all the possible views of all processes 
in $G_i$, lie in the same connected component in $\Sigma$. 
To prove this, it suffices to show by induction that, for any two executions 
$\gamma=\gamma(I)$ and $\delta=\delta(I')$ with
$I, I' \in G_i$, there is a finite sequence of executions $\gamma=\alpha_0,\alpha_1,\dots,\alpha_{k+1}=\delta$ such that, for every $0\leq j < k+1$, $\alpha_j \in G_i$ and $\alpha_j \sim_{q_j} \alpha_{j+1}$ 
for some process $q_j$. This implies $\dunif(\alpha_j,
\alpha_{j+1})=0$ and hence also $\dunif(\gamma,\delta)=0$ as needed.

Since $G_i$ is a connected
component containing $I, I'$, there must be a chain of $k\geq 2$ different initial value assignments $I_0=I_1=I, I_1, \dots, I_{k}=I_{k+1}=I'$ in $G_i$ where $I_\ell$ and $I_{\ell+1}$, $1\leq \ell \leq k-1$, are connected by an edge in $H(C,f)$ (and hence by a path in $G_i$). Moreover, there must be processes $p_1,\dots,p_{k-1}$ such that $I_\ell[p_\ell]\neq I_{\ell+1}[p_\ell]$. For the induction basis $\ell=1$, we choose $\alpha_1=\alpha_1(I_1)$ to be 
any execution where some process $q_0$ has the same view in $\alpha_0^1$ and in $\alpha_1^1$,
and process $q_1$ has the same view $J_1$ in $\alpha_1^1$ and in $\alpha_2^1$, so $\alpha_0 \sim_{q_0} \alpha_1 \sim_{q_1} \alpha_2$. This choice of $\alpha_1$ is possible, since $\alpha_0$ and $\alpha_1$ start from the same $I$, and since $I_1$ and $I_2$
have a Hamming distance between $1$ and $f$ and can hence have a common view $J_1$ with $\bot$ for all processes $q$, including $q_1$, where $I_1[q] \neq I_2[q]$. Note that it is
here where we need the independent (but not arbitrary!) input assignment property \cref{def:independentinputassumption}. For the induction
step, assume that we have already constructed $\alpha_{\ell}$ for $\ell\geq 1$. For $\alpha_{\ell+1}$, we choose an execution where $q_\ell$ has the same view $J_\ell$ in $\alpha_\ell$ and $\alpha_{\ell+1}$ (necessarily with $J_\ell[q_\ell]=\bot$), and $q_{\ell+1}$ has the same view $J_{\ell+1}$ in $\alpha_{\ell+1}$ and $\alpha_{\ell+2}$ (necessarily with $J_{\ell+1}[q_{\ell+1}]=\bot$, unless $\ell+1=k$ already, in which case both $\alpha_{\ell+1}$ and $\alpha_{\ell+2}$ start from $I'$), which leads to $\alpha_\ell \sim_{q_\ell} \alpha_{\ell+1} \sim_{q_{\ell+1}} \alpha_{\ell+2}$ and completes our induction proof.

Since $C$ is not $f$-quasilegal by assumption, there must be a 
connected component $G_i \subseteq Gin(C,f)$ 
that contains initial configurations $I$ and $I'$, such that $I'$ does
not contain any value present in $I$. In order not to violate strong
validity, no executions $\gamma=\gamma(I)$ and $\delta=\delta(I')$ may lie in
the same decision set. However, we have just shown that they lie in the
same connected component in $\Sigma$, which provides the required contradiction.
\end{proof}

\subsection{Dynamic networks with limit-closed message adversaries}
\label{sec:closed}

In this section, we will consider consensus with independent and arbitrary input assignments in dynamic networks under message adversaries \cite{AG13} that are \emph{limit-closed}~\cite{WSM19:OPODIS}, in the sense
that every convergent sequence of executions 
$\alpha_0,\alpha_1,\dots$ with $\alpha_k \in \Sigma$ for every $i$ has a limit $\alpha \in \Sigma$.
An illustration is shown in Figure~\ref{fig:compactMA},
where the purple dots represent a sequence of executions $\alpha_i$ taken from
the connected component $\Sigma_{\gamma_0}$ and $\times$ the limit point 
$\alpha$ at the boundary.
The most prominent examples of limit-closed message adversaries are oblivious ones~\cite{SW89,CGP15,WPRSS23:ITCS}.

We recall that dynamic networks consist of a set of $n$ lock-step
synchronous fault-free processes, which execute a deterministic
consensus algorithm
that broadcasts its entire local state via message-passing in each of the
communication-closed rounds $1,2, \dots$ A message adversary determines
which process $q$ receives the message broadcast by a process $p$ in some
round $t$, via the directed round-$t$ communication graph $\G^t$. Together
with the initial configuration $C^0$ of all the processes, the particular sequence
of communication graphs $\G^1,\G^2, \dots$, which is called communication
pattern, uniquely determines an execution. For example, an oblivious message
adversary is defined by a set~$\mathbf{D}$ of allowed communication graphs
and picks every $\G^t$ arbitrarily from this set.

Since all processes are obedient here, we will only
consider the uniform topology in the sequel.
The set of all process-time graphs $\PTomega$ is compact and the 
transition function $\otau: \PTomega \to \Comega$ is continuous, according 
to \cref{lem:tau:is:cont}, so taking $\otau(\PTomega)$ 
results in a set of configuration sequences that is indeed compact. 
Note that limit-closed message adversaries are hence sometimes refered to 
as \emph{compact} message adversaries.

The following consensus characterization holds even for general message adversaries:

\begin{corollary}[Consensus characterization for general MAs]\label{thm:consensusallMA}
Consensus with independent arbitrary input assignments is solvable under
a general message adversary if and only if (i) all connected components of the set~$\Sigma$ of admissible executions are broadcastable for some process, and (ii) the strong broadcaster decision 
sets $\Sigma_v^p$, $p\in \Pi$, $v\in \V$, given in \cref{eq:Sigmavp}, are closed in $\Sigma$.
\end{corollary}
\begin{proof}
Since there are only finitely many $\Sigma_v^p$, $p\in \Pi$, $v\in \V$, closedness is
equivalent to clopenness here. Hence, \cref{thm:charbroadcastability} can be applied.
\end{proof}

\tikzset{cross/.style={cross out, draw=black, minimum size=2*(#1-\pgflinewidth), inner sep=0pt, outer sep=0pt},
cross/.default={1pt}}

\begin{figure}[ht]
\centering  
\begin{tikzpicture}[scale=0.4]
\def\firstcircle{(0,0) ellipse (3cm and 2cm )}
\def\secondcircle{(0,5) ellipse (3.5cm and 2.5cm)}
\def\thirdcircle{(6,0) ellipse (2.5cm and 1.5cm)}
\def\fourthcircle{(6,3.8) ellipse (2cm and 1.5cm)}
\draw[blue!40!red, ultra thick] \firstcircle node[text=black] {$\Sigma_{\gamma_0}$};
\draw[blue!40!red, ultra thick] \secondcircle node[text=black]  {$\Sigma_{\gamma_0'}$};
\draw[green!50!blue, ultra thick] \thirdcircle node[text=black]  {$\Sigma_{\gamma_1}$};
\draw[green!50!blue, ultra thick] \fourthcircle node[text=black]  {$\Sigma_{\gamma_1'}$};
\begin{scope}[rotate=-20]
\foreach \i in {0,...,5} {
\draw[fill,color=blue!40!red] ({2.2-2*1.3^(-\i)},5.5) circle (0.1);
}
\draw ({2.2-2*1.3^(-5)},5.5) node[cross=4pt,black, ultra thick]{};
\end{scope}
\end{tikzpicture}%
\caption{Examples of two connected components of the decision sets $\Sigma_0=\Sigma_{\gamma_0}\cup \Sigma_{\gamma_0'}$ and
$\Sigma_1=\Sigma_{\gamma_1}\cup \Sigma_{\gamma_1'}$ for consensus under 
a limit-closed message adversary.  
contain all their limit points (marked by $\times$) and have a distance $>0$ by {cor:closeddecsetscompact}.}\label{fig:compactMA}
\end{figure}

We will start our considerations for limit-closed message adversaries
by exploring the structure of the strong decision sets (recall
\cref{def:strong}) of correct consensus algorithms, see \cref{fig:compactMA} for an
illustration.

\begin{corollary}[Strong decision sets for limit-closed MAs]\label{cor:closeddecsetscompact}
For every correct consensus algorithm for a limit-closed message adversary, 
both the strong broadcaster decision sets $\Sigma_v^p$, $p\in \Pi$, $v\in\V$,
and the strong decision sets $\Sigma_v$, $v\in \V$, are disjoint, compact and clopen in $\Sigma$. Moreover, there is some $d>0$ such that $\dunif(\Sigma_v^p,\Sigma_w^q)\geq d > 0$ 
for any $(v,p),(v,p)\neq (w,q) \in \Pi\times \V$, as well as $\dunif(\Sigma_v,\Sigma_w)\geq d > 0$ 
for every $v, v \neq w \in \V$.

In addition, every connected component $\Sigma_{\gamma} \subseteq \Sigma$ is closed and compact,
and for every $\gamma,\delta$ with $\Sigma_{\gamma}\neq \Sigma_{\delta}$, it holds that 
$\dunif(\Sigma_{\gamma},\Sigma_{\delta})>0$. 
\end{corollary}
\begin{proof}
According to \cref{thm:charbroadcastability}, all strong broadcaster decision sets $\Sigma_v^p$ 
are clopen, and hence closed, in $\Sigma$. Since $\Sigma$ is compact for a limit-closed 
message adversary, it follows that every $\Sigma_v^p$ is also compact. \cref{cor:setdistance:min} thus implies $\dunif(\Sigma_v^p,\Sigma_w^q)>0$. Since there are only
finitely many $\Sigma_v^p$, there is hence some $d>0$ that guarantees $\dunif(\Sigma_v^p,\Sigma_w^q)\geq d > 0$ for every $(v,p)\neq (w,q) \in \Pi\times \V$. As $\Sigma_v=\bigcup_{p\in\Pi} \Sigma_v^p$ is
a finite union, the respective results for the strong decision sets follow immediately as well.

Since every connected component $\Sigma_{\gamma}$ of $\Sigma$ that contains $\gamma$ is closed in 
$\Sigma$, as the closure of a connected subspace is also connected~\cite[Lem.~23.4]{Munkres} and a connected component is maximal, $\Sigma_{\gamma}$ is also compact, and 
$\dunif(\Sigma_{\gamma},\Sigma_{\delta})>0$ follows from \cref{cor:setdistance:min}. 
\end{proof}

Unfortunately, \cref{cor:closeddecsetscompact} does \emph{not} allow us to also infer some
minimum distance $d>0$ also for the connected components in general. 
It does hold true, however, if there are only finitely many
connected components. The latter is ensured, in particular, when $\Sigma$ is  
\emph{locally connected}, in the sense that every open set $U(\delta) \subseteq \Sigma$ 
containing $\delta$ also contains some \emph{connected} open set $V(\delta)$: According
to \cite[Thm.~25.3]{Munkres}, all connected components of $\Sigma$ are also open in this 
case. Hence, $\Sigma=\bigcup_{\gamma \in \Sigma} \Sigma_{\gamma}$ is an open covering of $\Sigma$,
and since $\Sigma$ is compact, there is a finite sub-covering $\Sigma=\Sigma_{\gamma^1}' \cup \dots \cup \Sigma_{\gamma^m}'$. Every $\Sigma_{\gamma}$ must hence be equal to one of $\Sigma_{\gamma^1}, \dots, \Sigma_{\gamma^m}$, as connected components are either disjoint or identical.

Unfortunately, however, most limit-closed message adversaries do not guarantee local 
connectedness. In the case of oblivious message adversaries, in particular, it has
been argued~\cite{RSWP23:arxiv} that isolated ``islands'' are created in the evolution 
of the protocol complex, which further develop like the original protocol complex (that
must be not connected for consensus to be solvable).
This phenomenon may be viewed as the result of the ``self-similarity'' that is inherent in the
communication patterns created by such message adversaries, which is not compatible with
local connectedness.

In general, for limit-closed message adversaries that induce infinitely many
connected components in $\Sigma$, one cannot infer openness (and hence 
clopenness) of the connected components: Consider a decision set $\Sigma_v$
that consists of infinitely many connected components. Whereas any connected
component $\Sigma_{\gamma}$ is closed, the set of all remaining connected
components $\Sigma_v\setminus \Sigma_{\gamma}$ need not be closed. It may hence
be possible to pick a sequence of executions $(\alpha_k)$ from 
$\Sigma_v\setminus \Sigma_{\gamma}$ that converges to a limit $\alpha$, and
a sequence $(\beta_k)$ from $\Sigma_{\gamma}$ that converges to $\beta \in 
\Sigma_{\gamma}$, satisfying $\dunif(\alpha,\beta)=0$. By \cref{lem:zerodistancecondition},
this implies $\dunif(\Sigma_{\gamma},\Sigma_v\setminus \Sigma_{\gamma})=0$.
It is important to note, however, that this can only happen for connected components 
in the \emph{same} strong broadcaster decision set $\Sigma_v^p$, as $\dunif(\Sigma_v^p,\Sigma_w^q) \geq d >0$ prohibits a common limit across different decision sets. Consequently, consensus
solvability is not per se impaired by infinitely many connected components, as
\cref{thm:consensusallMA} has shown.

\medskip

We will now make the characterization of \cref{thm:consensusallMA} for limit-closed
message adversaries more operational,
by introducing the $\varepsilon$-approximation of connected components and strong
broadcaster decision sets, typically for some $\varepsilon = 2^{-t}$, 
$t\geq 0$. Informally, it provides the executions that have a $t$-prefix that
cannot be transitively distinguished by some process. Since the number of different 
possible $t$-prefixes is finite, it can be constructed iteratively using finitely 
many iterations:

\begin{definition}[$\varepsilon$-approximations]\label{def:epsilonPSv}
Let $\gamma\in \Sigma$ be an admissible execution. In the minimum topology, 
we iteratively define $\Sigma_{\gamma}^\varepsilon$, for $\varepsilon>0$, as follows:
$\Sigma_{\gamma}^\varepsilon[0]=\{\gamma\}$; for 
$\ell>0$, $\Sigma_{\gamma}^\varepsilon[\ell] = \bigcup_{\alpha \in \Sigma_{\gamma}^\varepsilon[\ell-1]}(B_{\varepsilon}(\alpha) \cap \Sigma)$; and $\Sigma_{\gamma}^\varepsilon=\Sigma_{\gamma}^\varepsilon[m]$ where $m<\infty$ is such
that $\Sigma_{\gamma}^\varepsilon[m]=\Sigma_{\gamma}^\varepsilon[m+1]$. 

For $p\in\Pi$, $v\in\V$, the $\varepsilon$-approximation $\Sigma_v^{p,\varepsilon}$ is defined as $\Sigma_v^{p,\varepsilon}=\bigcup_{\Sigma_\gamma \subseteq \Sigma_v^p}\Sigma_{\gamma}^\varepsilon$.
\end{definition}

Note carefully that $\Sigma_{\gamma}^\varepsilon$ is generally different (in fact, 
larger) than the covering of $\Sigma_{\gamma}$ with $\varepsilon$-balls defined by
$\bigcup_{\delta\in\Sigma_\gamma} B_\varepsilon(\delta) \cap \Sigma$. 
Our $\varepsilon$-approximations satisfy the following properties 
(that actually hold for general message adversaries):

\begin{lemma}[Properties of $\varepsilon$-approximations of connected components]\label{lem:propertiesapproxclosed}
For every $\varepsilon>0$ and every $\gamma, \delta \in \Sigma$, $\varepsilon$-approximations 
have the following properties:
\begin{enumerate}
\item[(i)] $\Sigma_{\gamma}^{\varepsilon'} \subseteq \Sigma_{\gamma}^\varepsilon$ for every $0<\varepsilon'\leq\varepsilon$.
\item[(ii)] $\Sigma_{\gamma}^\varepsilon \cap \Sigma_{\delta}^\varepsilon\neq \emptyset$ implies $\Sigma_{\gamma}^\varepsilon = \Sigma_{\delta}^\varepsilon$.
\item[(iii)] $\Sigma_{\gamma} \subseteq \Sigma_{\gamma}^\varepsilon$.
\end{enumerate}
\end{lemma}
\begin{proof}
To prove (i), it suffices to mention $B_{\varepsilon'}(\alpha) \subseteq B_{\varepsilon}(\alpha)$.
As for (ii), if $\alpha \in \Sigma_{\gamma}^\varepsilon \cap \Sigma_{\delta}^\varepsilon\neq \emptyset$, the iterative construction of $\Sigma_{\gamma}^\varepsilon$ would reach $\alpha$, which would cause it
to also include the whole $\Sigma_{\delta}^\varepsilon$, as the latter also reaches $\alpha$. If (iii) would not hold, $\Sigma_{\gamma}$ could be separated into disjoint open sets, which contradicts its connectivity.
\end{proof}

Obviously, properties (i) and (iii) of the $\varepsilon$-approximation of connected components 
also extend to arbitrary unions of those, and hence to strong broadcaster decision sets. In fact,
for limit-closed message adversaries, provided $\varepsilon$ is chosen sufficiently small,
we get the following result:

\begin{lemma}[$\varepsilon$-approximation of strong broadcaster decision sets]\label{lem:closedepsilonapprox}
For a limit-closed message adversary that allows to solve consensus, there is some 
$\varepsilon>0$ such that, for any $0<\varepsilon'\leq \varepsilon$, it holds that 
$\dunif(\Sigma_v^{p,\varepsilon'}, \Sigma_w^{q,\varepsilon'})>0$ for any $(v,p),(v,p)\neq (w,q) \in \Pi\times \V$. 
\end{lemma}
\begin{proof}
According to Corollary~\ref{cor:closeddecsetscompact}, there is some $d>0$ such that
$\dunif(\Sigma_v^p,\Sigma_w^q)\geq d > 0$. By the extension of 
Lemma~\ref{lem:propertiesapproxclosed}.(iii) to strong broadcaster decision sets,
for any $\varepsilon>0$, $\Sigma_v^p \subseteq \Sigma_v^{p,\varepsilon}$ and $\Sigma_w^q \subseteq \Sigma_w^{q,\varepsilon}$. Therefore, setting $\varepsilon< d/2$ secures $\dunif(\Sigma_v^{p,\varepsilon},\Sigma_w^{q,\varepsilon}) > 0$. By the extension of Lemma~\ref{lem:propertiesapproxclosed}.(i) to strong broadcaster decision sets, we hence also get $\dunif(\Sigma_v^{p,\varepsilon'},\Sigma_w^{q,\varepsilon'}) > 0$. 
\end{proof}

\begin{corollary}[Matching $\varepsilon$-approximation]\label{cor:matchingapproximationclosed}
  For a limit-closed message adversary that allows to solve consensus,
  if $\varepsilon>0$ is chosen in accordance with Lemma~\ref{lem:closedepsilonapprox},
then $\Sigma_v^{p,\varepsilon} = \Sigma_v^p$ for every $p\in\Pi$, $v\in\V$.
\end{corollary}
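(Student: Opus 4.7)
The plan is to combine properties (iii) and (iv) of \lemmaref{lem:propertiesapproxclosed} with the separation statement of \lemmaref{lem:closedepsilonapprox} to force equality of the connected component $PS_z$ and its $\varepsilon$-approximation $PS_z^\varepsilon$. The inclusion $PS_z \subseteq PS_z^\varepsilon$ is already given by \lemmaref{lem:propertiesapproxclosed}(iv), so the entire task reduces to establishing the reverse inclusion $PS_z^\varepsilon \subseteq PS_z$.

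For the reverse inclusion, I would argue by contradiction: suppose there exists some $a \in PS_z^\varepsilon$ with $a \notin PS_z$. Let $PS_a$ denote the connected component of $a$ in $PS$; by assumption $PS_a \neq PS_z$. Then \lemmaref{lem:propertiesapproxclosed}(iv) applied to $a$ gives $a \in PS_a \subseteq PS_a^\varepsilon$, so $a$ lies in the intersection $PS_z^\varepsilon \cap PS_a^\varepsilon$. By \lemmaref{lem:propertiesapproxclosed}(iii), this non-emptiness forces $PS_z^\varepsilon = PS_a^\varepsilon$.

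Now comes the clinching step. Since $\varepsilon$ was chosen according to \lemmaref{lem:closedepsilonapprox} and $PS_z \neq PS_a$, that lemma yields $d_{\min}(PS_z^\varepsilon, PS_a^\varepsilon) > 0$. On the other hand, the two $\varepsilon$-approximations have just been shown to coincide as nonempty sets, so their distance is trivially $0$ (witnessed, for instance, by the pair $(a,a)$). This contradiction rules out the existence of such an $a$ and completes the argument.

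I do not anticipate any significant obstacle: the statement is essentially a bookkeeping consequence of the machinery already assembled. The only subtlety worth flagging is making sure that the choice of $\varepsilon$ in \corollaryref{cor:matchingapproximationclosed} is uniform enough to separate the finitely many distinct connected components simultaneously; by \corollaryref{cor:closeddecsetscompact} there are only finitely many such components, so one takes $\varepsilon$ smaller than the minimum of the finitely many bounds provided by \lemmaref{lem:closedepsilonapprox}, and this common $\varepsilon$ then works for every $z \in PS$.
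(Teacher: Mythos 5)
Your argument is correct and matches the derivation the paper intends (the paper states the corollary without proof as an immediate consequence of Lemma~\ref{lem:closedepsilonapprox} and Lemma~\ref{lem:propertiesapproxclosed}(iii)--(iv)): the inclusion $PS_z\subseteq PS_z^\varepsilon$ is property~(iv), and any point of $PS_z^\varepsilon\setminus PS_z$ would force $PS_z^\varepsilon=PS_a^\varepsilon$ by~(iii), contradicting the positive separation of the two approximations. Your closing remark about choosing a single $\varepsilon$ below the minimum of the finitely many pairwise bounds (using the finiteness of the components from Corollary~\ref{cor:closeddecsetscompact}) is exactly the uniformity point needed to make the quantifiers work, so the proposal is complete.
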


\begin{theorem}[Operational consensus characterization for limit-closed MAs]\label{thm:closedsufficiency}
A limit-closed message adversary allows to solve consensus if and only if there is some 
$\varepsilon>0$ such that (i) every $\Sigma_{\gamma}^{\varepsilon}$, $\Sigma_\gamma\subseteq\Sigma$, is broadcastable 
for some process, and (ii) every $\Sigma_v^{p,\varepsilon}$, $p\in\Pi$, $v\in\V$, is closed in $\Sigma$.
\end{theorem}
\begin{proof}
Our theorem follows from \cref{thm:consensusallMA} in conjunction with Corollary~\ref{cor:matchingapproximationclosed}.
\end{proof}

\cref{thm:closedsufficiency} implies that if consensus is solvable, then, for every $0<\varepsilon'\leq \varepsilon$, the universal algorithm from \cref{thm:equivuniform} applied to the strong decision sets can be used for actually solving it. And indeed, the consensus algorithm given by Winkler, Schmid, and Moses~\cite[Alg.~1]{WSM19:OPODIS} can be viewed as an instantiation of this fact.

Moreover, \cref{cor:matchingapproximationclosed} implies that 
checking the broadcastability of all the executions in 
$\Sigma_v^{p,\varepsilon}$ can be done by checking the
broadcastability of \emph{finite} prefixes. More specifically, like the decision function $\Delta$ of consensus, the function $T(\alpha)$ that gives the round by which every process in $\alpha\in \Sigma$ has the initial value $I_p(\alpha)$ of the broadcaster $p$ in its view is locally constant for a sufficiently small neighborhood, namely,  $B_{2^{-T(\alpha)}}(\alpha)$, and is hence continuous in any of
our topologies. Since $\Sigma_v^p=\Sigma_v^{p,\varepsilon}$ is compact, $T(\alpha)$ is in fact uniformly continuous and hence attains its maximum $\hat{T}$ in $\Sigma_v^{p,\varepsilon}$. It hence suffices to
check broadcastability in the $t$-prefixes of $\Sigma_v^{p,\varepsilon}$ for $t=\max\{\lfloor \log_2(1/\varepsilon)\rfloor,\hat{T}\}$ in Theorem~\ref{thm:closedsufficiency}.

In~\cite{WSM19:OPODIS}, this has been
translated into the following non-topological formulation
(where $\MA$ corresponds to $\Sigma$, $[\sigma|_r]$ is the set of $r$-prefixes 
of the executions in $\Sigma_\sigma^{2^{-r}}$ in the uniform topology, and 
$\Ker(x)$ is the set of broadcasters in the prefix $x$):

\begin{theorem}[{\cite[Thm.~1]{WSM19:OPODIS}}]
Consensus is solvable under a limit-closed message adversary $\MA$ if and only if
for each $\sigma \in \MA$ there is a round $r$ such that
$\bigcap_{x \in [ \sigma|_r ]} \Ker(x) \neq \emptyset$.
\label{thm:characterization}
\end{theorem}

\subsection{Dynamic networks with non-limit closed message adversaries}
\label{sec:nonclosed}

In this section, we consider consensus with independent and arbitrary input assignments
under message adversaries that are not limit-closed~\cite{FG11,Pfl18:master,WSS19:DC}. A simple example would be a message
adversary, which guarantees that there is some \emph{finite} round $r$ where the
communication graph $\G^r$ is a clique. The communication pattern
where $r=\infty$, i.e., the limiting case $r \to \infty$ (where
the clique graph never happens) is forbidden.

As already mentioned in \cref{sec:closed}, our consensus characterization 
\cref{thm:consensusallMA} also applies here, as does the generic one
in \cref{thm:equivuniform}, of course. Moreover, they can be combined 
with our limit-based characterization \cref{cor:consensusseparation}
and \cref{cor:consensusimpfair}.

What does not work here, however, are our $\varepsilon$-approximations
according to Definition~\ref{def:epsilonPSv}, and everything built on top of it: 
Even if $\varepsilon$ would be made arbitrarily 
small, Lemma~\ref{lem:closedepsilonapprox} does not hold. An illustration is shown in Figure~\ref{fig:noncompactMA}. It is apparent that adding a ball $B_{\varepsilon}(\alpha)$ in the iterative construction of some $\Sigma_{\gamma}^\varepsilon$, where $\dunif(\alpha,\rho)<\varepsilon$ for some forbidden limit sequence $\rho$, inevitably lets the construction grow into some  $\Sigma_{\delta}^\varepsilon$ lying in a different strong broadcaster decision set. Whereas this could be avoided by adapting $\varepsilon$ when coming close to $r$, the resulting approximation would not provide any advantage over directly using our characterization \cref{thm:consensusallMA}.

\begin{figure}[ht]
\centering  
\begin{tikzpicture}[scale=0.8]
\pgfmathsetseed{2}
\draw [blue!40!red, dashed, ultra thick, shift={(0,0)}] 
plot [smooth cycle, tension=1, domain=0:320, samples=18] (\x:{4/3+rand/3}) 
node[left=0cm,above=0.2cm,text=black] {$\Sigma_{\gamma_{0}}\qquad$} ;
\draw [green!50!blue, dashed, ultra thick, shift={(3.1,0)}] 
plot [smooth cycle, tension=1, domain=0:320, samples=18] (\x:{4/3+rand/3}) 
node[left=0cm,above=0.2cm,text=black] {$\Sigma_{\gamma_{1}}\qquad$} ;
\draw [blue!40!red, dashed, ultra thick, shift={(0,3.1)}] 
plot [smooth cycle, tension=1, domain=0:320, samples=18] (\x:{4/3+rand/3}) 
node[left=0cm,above=0.2cm,text=black] {$\Sigma_{\gamma_{0}'}\qquad$} ;
\draw [green!50!blue, dashed, ultra thick, shift={(3.1,3.1)}] 
plot [smooth cycle, tension=1, domain=0:320, samples=18] (\x:{4/3+rand/3}) 
node[left=0cm,above=0.2cm,text=black] {$\Sigma_{\gamma_{1}'}\qquad$} ;
\foreach \i in {0,...,5} {
\draw[fill,color=green!50!blue] ({1.39+1.5^(-\i)},0.3) circle (0.05);
\draw[fill,color=blue!40!red] ({1.39+1.5^(-5)-1.5^(-\i)},0.3) circle (0.05);
}
\draw ({1.36+1.5^(-5)},0.3) node[cross=4pt,black, ultra thick]{};
\end{tikzpicture}%
\caption{Examples of two connected components of the decision sets $\Sigma_0=\Sigma_{\gamma_0}\cup \Sigma_{\gamma_0'}$ and
$\Sigma_1=\Sigma_{\gamma_1}\cup \Sigma_{\gamma_1'}$ for a non-compact message adversary. They are not closed in $\Comega$ and may have
distance 0; common limit points (like for $\Sigma_{\gamma_0}$ and $\Sigma_{\gamma_1}$, marked by $\times$) must hence be excluded by
Corollary~\ref{cor:consensusseparation}.}
\label{fig:noncompactMA}
\end{figure}

These topological findings are of course in accordance with the results
on non-limit closed message adversaries we are aware of.
In particular, the binary consensus algorithm for $n=2$ by Fevat and Godard~\cite{FG11} assumes that the algorithm 
knows a fair execution or a pair of  unfair executions according to 
\cref{def:fairunfair}
a priori, which effectively partition the execution space into two connected 
components.\footnote{Note that there are uncountably many choices for separating $\Sigma_0$ and $\Sigma_1$ here, however.} Such a limit exclusion is also exploited in the counterexample
to consensus task solvability for $n=2$ via a decision map that is
not continuous~\cite{GKM14:PODC}, which has been suggested by Godard and Perdereau \cite{GP20:MSCS}: It excludes just the unfair execution $\alpha$
based on $\{\leftrightarrow,\leftarrow,\leftarrow,\dots\}$, but not the unfair
execution $\beta$ caused by $\{\rightarrow,\leftarrow,\leftarrow,\dots\}$, which satisfies $d_p(\alpha,\beta)=0$ for the right process $p$ and hence makes consensus impossible.

The $(D+1)$-VSRC message adversary $\BStable(D+1)$ \cite{WSS19:DC}
generates executions that are based on single-rooted communication 
graphs in every round, with the additional guarantee that, eventually,
a \emph{$D+1$-vertex-stable root component} ($D+1$-VSRC) occurs.
Herein, a root component is a strongly connected component without in-edges from
outside the component, and a $x$-VSRC is a root component made up
of the same \emph{set} of processes in $x$
consecutive rounds. $D\leq n-1$ is the dynamic diameter of a VSRC,
which guarantees that all root members reach all processes. 
It has been proved~\cite{WSS19:DC}
that consensus is impossible with $\BStable(x)$ for $x\leq D$, whereas
an algorithm exists for $\BStable(D+1)$.
Obviously, $\BStable(D+1)$ effectively excludes all communication patterns
without any 
$D+1$-VSRC. And indeed, the choice $x=D+1$ renders the connected components of $\Sigma$
broadcastable by definition, which is in accordance with \cref{thm:consensusallMA}. 

We also introduced and proved correct an explicit
labeling algorithm for $\BStable(n)$ in \cite{WSN21:FCT}, which effectively operationalizes
the universal consensus algorithm of \cref{thm:equivuniform}: By assigning
a (persistent) \emph{label} $\myFunc(\sigma'|_r)$ to the $r$-prefixes of 
$\sigma \in \Sigma$, it effectively assigns a corresponding unique decision value
$v \in \V$ to $\sigma$, which in turn specifies the strong decision set $\Sigma_v$ containing $\sigma$.
In the language of \cite{WSN21:FCT}, the requirement of every $\Sigma_v$
being open (and closed) in \cref{thm:equivuniform} translates into
a matching assumption on this labeling function as follows
(herein, $\MA$ corresponds to $\Sigma$, $\sigma|_r$ denotes the
$r$-round prefix of execution $\sigma$, and $\sim$ is the transitive
closure over all processes $p$ of the prefix indistinguishability relation $\sim_p$):

\begin{assumption}[{\cite[Assumpt.~1]{WSN21:FCT}}]
\label{ass:indistDelta}
$\forall \sigma \in \MA \enspace \exists r \in \mathbb{N} \enspace \forall \sigma'\in \MA \colon \sigma'|_r \sim \sigma|_r \Rightarrow \myFunc(\sigma'|_r) = \myFunc(\sigma|_r) \ne \emptyset \enspace.$
\end{assumption}

For $\BStable(n)$, it has been proved~\cite[Thm.~12]{WSN21:FCT} that
the given labeling algorithm satisfies this assumption for $r=r_{stab}+4n$,
where $r_{stab}$ is the round where the (first) $D+1$-VSRC in $\sigma$
starts. Consensus is hence solvable by a suitable instantiation of the universal 
consensus algorithm of \cref{thm:equivuniform}.

\subsection{Consensus in systems with an eventually timely $f$-source}
\label{sec:WTL}

It is well-known~\cite{DDS87} that consensus cannot be solved in
distributed systems of $n\geq 2f+1$ (partially) synchronous processes, 
up to which $f$ may crash, which are connected by reliable 
\emph{asynchronous} communication links. For solving consensus, the system model 
has been strengthened by a \emph{weak timely link} (WTL) assumption~\cite{ADGFT04,HMSZ08:TDSC}: there has to be at least one correct process 
$p$ that eventually sends timely to a sufficiently large subset of the processes.

In previous work~\cite{ADGFT04}, at least one \emph{eventually timely $f$-source} $p$ was assumed: After some 
unknown initial period where all end-to-end message delays are arbitrary, every broadcast of $p$
is received by a fixed subset $P\subseteq \Pi$ with $p\in P,|P|\geq f+1$ within some possibly unknown
maximum end-to-end delay $\Theta$. The authors showed how to build the $\Omega$ failure detector
in such a system, which, in conjunction with any $\Omega$-based consensus algorithm (like the one by Most{\'e}faoui and Raynal~\cite{MR01}) can be used
to solve uniform consensus. 

Their $\Omega$ implementation lets every process broadcast a \emph{heartbeat message} every
$\eta$ steps, which forms partially synchronized rounds, and maintains an \emph{accusation
counter} for every process $q$ that counts the number of rounds the heartbeats
of which were not received timely by more than $f$ processes. This is done by letting every
process who does not receive $q$'s broadcast within $\Theta$ send an \emph{accusation message}
for $q$, and incrementing the accusation counter for $q$ if more than $f$ such accusation
messages from different receivers came in. It is not difficult to see that the accusation
counter of a process that crashes grows unboundedly, whereas the accusation counter of
every timely $f$-source eventually stops being incremented. Since the accusation counters
of all processes are exchanged and agreed-upon as well, choosing the process with
the smallest accusation counter (with ties broken by process ids) is a legitimate choice
for the output of $\Omega$. 

This WTL model was further relaxed~\cite{HMSZ08:TDSC}, which 
allows the set $P(k)$ of witnessing receivers of every \emph{eventually moving timely $f$-source} to 
depend on the sending round $k$. The price to be paid for this relaxation is the need to incorporate
the sender's round number in the heartbeat and accusation messages.

In this subsection, we will use our \cref{thm:equivuniform} to prove topologically that 
consensus with strong validity and independent arbitrary input assignments 
can indeed be solved in the WTL model: We will give and prove 
correct an explicit labeling algorithm \cref{alg:lambda}, which assigns a decision value 
$v \in \V$ to every execution $\sigma$ that specifies the decision set $\Sigma_v$ 
containing $\sigma$. Applying our universal algorithm to these decision sets hence allows 
to solve consensus in this model. Obviously, unlike the existing algorithms, our algorithm does 
not rely on an implementation of $\Omega$. 

\medskip

We assume a (slightly simplified)
WTL model with synchronous processes and asynchronous links that are reliable and FIFO, 
with known $\Theta$ for timely links. Whereas we will use the time $t=0, 1, 2, \dots$ our
synchronous processes take their steps as global time, we note that we do not have 
communication-closed rounds here, i.e., have to deal with general executions
according to \cref{def:PTGs} in the appendix.
In an admissible execution $\sigma$, we denote by $F(\sigma)$ the set of up to $f$ processes
that crash in $\sigma$, and $C(\sigma)=\ob(\sigma)=\Pi\setminus F(\sigma)$ the set of correct
processes. For an eventual timely $f$-source $p$, we will denote with
$\rST{p}$ the \emph{stabilization round}, by which it has already started to send timely: a message sent
in round $t \geq \rST{p}$ is received by every $q\in P(t)$ no later than in round 
$t+\Theta-1$, hence is present in $q$'s state at time $t+\Theta-1$. 
Note carefully that this condition is automatically satisfied when
$q$ has crashed by that round. 
We again assume that the processes execute a full-information
protocol, i.e., send their whole state in every round. For keeping the relation
to the existing algorithms, we consider the state message sent by $p$ in
round $t$ to be its $\heartbeat(t)$. Moreover, if the state of
process $q$ at time $t+\Theta-1$ does not contain the reception $\heartbeat(t)$ from process 
$p$, we will say that $q$ broadcasts an \emph{accusation message} $\accusation(p,t)$ for round $t$ of $p$ 
in round $t+\Theta$ (which is of course just part of $q$'s state sent in this round).
If $q$ crashes before round $t+\Theta$, it will never broadcast $\accusation(p,t)$. If $q$ crashes
exactly in round $t+\Theta$, we can nevertheless assume that it either manages to eventually communicate
$\accusation(p,t)$ to all correct processes in the system, or to none: In our full information
protocol, every process that receives $\accusation(p,t)$ will forward also this message
to all other processes when it broadcasts its own state later on.

\begin{definition}[WTL elementary state predicates and variables]\label{def:WTLpredicates}
For process $s$ at time $r\geq 1$, i.e., the end of round $r$, we define the following predicates and state
variables:
\begin{itemize}
\item $\accuse_s^r(p)=\true$ if and only if $s$ did not receive $\heartbeat(r-\Theta)$ from $p$ by time $r$ and thus sent $\accusation(p,t)$.
\item $\nottimelyrec_s^r(q,p,t) = \true$ if and only if $s$ recorded the reception of $\accusation(p,t)$ from $q$ by time $r$.
\item $\nottimely_s^r(p,t) = \true$ if and only if $\nottimelyrec_s^r(q,p,t)=\true$ for at least $n-f$ different $q\in\Pi$.
\item $\accusationcounter_s^r(p) = (|\{ k\leq r: \nottimely_s^r(p,k)=\true \}|,p)$.
\item $\hearedof_s^r(p) = |\{k\leq r : \mbox{$s$ received $\heartbeat(k)$ from $p$ (directly or indirectly) by time $r$}\}|$.
\end{itemize}
\end{definition}
Note that a process $q$ that crashes before time $t+\Theta$ causes $\nottimelyrec_s^r(q,p,t) = \false$ for all 
$r$, and that $p$ is appended in $\accusationcounter_s^r(p)$ for tie-breaking purposes only. 
For every eventually timely $f$-source $p$, the implicit forwarding of accusation messages ensures
that $\accusationcounter_s^r(p)$ will eventually be the same at every correct process $s$ in the limit 
$r\to\infty$.

We now define some predicates that require knowledge of the execution $\sigma$. 
Whereas they cannot be computed locally by the processes in the execution, they
can be used in the labeling algorithm.

\begin{definition}[WTL extended state predicates and variables]\label{def:WTLextendedpredicates}
Given an execution $\sigma$, let the \emph{dominant} eventual
timely $f$-source $p_\sigma$ be the one that leads to the unique smallest value of 
$\accusationcounter_s^\infty(p_\sigma)$, which is the same at every process $s\in\Pi\setminus{F(\sigma)}$.
With $\rST{\sigma}=\rST{p_{\sigma}}$ denoting the stabilization time of the dominant eventual
timely $f$-source in $\sigma$ and $F(\sigma|_r)\subseteq F(\sigma)$ the set of processes that
crashed by time $r$, we also define
\begin{itemize}
\item $\minhearedof_s(\sigma,r) = \min_{p\in\Pi\setminus{F(\sigma|_r)}} \hearedof_s^r(p)$,

\item $\oldenough(\sigma,r) = \true$ if and only if $\forall s \in \Pi\setminus{F(\sigma|_r)}$, both
(i) $\minhearedof_s(\sigma,r) \geq \rST{\sigma}+\Theta$ and (ii) $\forall p\in\Pi\setminus{p_\sigma}: \; \accusationcounter_s^r(p_\sigma) < \accusationcounter_s^r(p)$.

\item $\mature(\sigma,r) = \true$ if and only if $\exists r_0 < r$ such that both
(i) $\oldenough(\sigma,r_0) = \true$ and
(ii) $\forall s \in \Pi\setminus{F(\sigma|_r)} : \; \minhearedof_s(\sigma,r) \geq r_0$.

\end{itemize}
\end{definition}
Note that it may occur that another eventual timely $f$-source $p'\neq p_\sigma$ in $\sigma$ has
a smaller stabilization time $\rST{p'}<\rST{p_{\sigma}}$ than the dominant one, which
happens if $p'$ causes more accusations than $p_\sigma$ before stabilization in total.

The following properties are almost immediate from the definitions:

\begin{lemma}[Properties of $\oldenough$ and $\mature$]\label{lem:propertiesoldenough}
The following properties hold for $\oldenough$:
\begin{enumerate}
\item[(i)] If $\oldenough(\sigma,r)=\true$, then 
$\accusationcounter_s^t(p_\sigma)=\accusationcounter_s^r(p_\sigma)$ for every $s$ that did
not crash by time $t\geq r$.
\item[(ii)] $\oldenough(\sigma,r)$ is stable, i.e., $\oldenough(\sigma,r)=\true \Rightarrow 
\oldenough(\sigma,t)=\true$ for $t\geq r$.
\item[(iii)] (i) and (ii) also hold for $\mature(\sigma,r)$, and $\mature(\sigma,r)=\true 
\Rightarrow \oldenough(\sigma,r)=\true$.
\end{enumerate}
\end{lemma}
\begin{proof}
Since $\oldenough(\sigma,r)=\true$ entails that every process 
$s \in \Pi\setminus{F(\sigma|_r)}$ has received the accusation messages for all rounds
up to $\rST{\sigma}$ since  $\minhearedof_s(\sigma,r) \geq \rST{\sigma}+\Theta$ according to
\cref{def:WTLextendedpredicates}, (i) follows. This also implies (ii), since the
accusation counter of every process $p\neq p_\sigma$ can at most increase after
time $r$. That these properties carry over to $\mature$ is obvious from the definition.
\end{proof}

The following lemma proves that two executions $\sigma$ and $\rho$
with indistinguishable prefixes $\sigma|_r \sim_s \rho|_r$, i.e., 
$(\sigma|_r)^t \sim_s (\rho|_r)^t$ for $0 \leq t \leq r$,
cannot both satisfy $\oldenough(\sigma,r)$ resp.\ $\oldenough(\rho,r)$, 
and, hence, $\mature(\sigma,r)$ resp.\ $\mature(\rho,r)$, except when the dominant eventual 
timely $f$-source is the same in $\sigma$ and $\rho$:

\begin{lemma}\label{lem:notbotholdenough}
Consider two executions $\sigma$ and $\rho$ with $\sigma|_r \sim_s \rho|_r$
for some process $s$ that is not faulty by round $r$ in both $\sigma$ and $\rho$. Then, 
\[
(\oldenough(\sigma,r)= \true \wedge \oldenough(\rho,r)=\true) \Rightarrow p_\sigma = p_\rho.
\]
\end{lemma}
\begin{proof}
As $\oldenough(\sigma,r)= \true$, \cref{def:WTLextendedpredicates} implies
$\forall p\in\Pi\setminus{p_\sigma}: \accusationcounter_s^r(p_\sigma) <
\accusationcounter_s^r(p)$, and similarly
$\forall p\in\Pi\setminus{p_\rho}: \accusationcounter_s^r(p_\rho) <
\accusationcounter_s^r(p)$. Since $\sigma|_r \sim_s \rho|_r$, this is only possible if $p_\sigma = p_\rho$.
\end{proof}

Finally, we need the following technical lemmas:

\begin{lemma}[Indistinguishability precondition]\label{lem:indistprecond}
Suppose $\tau|_{r'} \sim_{s'} \sigma|_{r'}$ is such that $s'$ received a message from $s\neq s'$ 
containing its state in the sending round $r_0' \leq r'-1$ by round $r'$ in $\sigma|_{r'}$ and
hence also in $\tau|_{r'}$. Analogously, suppose $\sigma|_{r} \sim_{s} \rho|_{r}$ is such that 
$s$ received a message from $s'$ containing its state in the sending round $r_0 \leq r-1$ 
by round $r$ in $\sigma|_{r}$ and hence also in $\rho|_{r}$. Then,
\begin{enumerate}
\item[(i)] $\tau|_{r_0'} \sim_s \sigma|_{r_0'}$,
\item[(ii)] $\tau|_{\min\{r_0',r\}} \sim_s \rho|_{\min\{r_0',r\}}$,
\item[(iii)] $\sigma|_{r_0} \sim_{s'} \rho|_{r_0}$,
\item[(iv)] $\tau|_{\min\{r_0,r'\}} \sim_{s'} \rho|_{\min\{r_0,r'\}}$.
\end{enumerate}
\end{lemma}
\begin{proof}
If (i) would not hold, since $s$ sends a message containing its state in round $r_0'$ to
$s'$ both in $\tau|_{r'}$ and in $\sigma|_{r'}$, these two states would be distinguishable
for $s$, which contradicts our assumption. The analogous argument proves (iii). Statement (ii)
follows from combining (i) with $\sigma|_{r} \sim_{s} \rho|_{r}$, (iv) follows
from combining (iii) with $\tau|_{r'} \sim_{s'} \sigma|_{r'}$.
\end{proof}

\begin{lemma}[Heardof inheritance]\label{lem:HOinheritance}
Suppose $\sigma|_r \sim_s \rho|_r$ and $\minhearedof_s(\rho,r)\geq r_0$ for some $1\leq r_0 < r$, 
as it arises in $\mature(\rho,r)=\true$, for example. Then, $\forall p \in \Pi\setminus{F(\rho|_r)}$,
it also holds in $\sigma|_r$ that $\hearedof_s^r(p)\geq r_0$, but not necessarily
$\hearedof_{s}^r(p')\geq r_0$ for $p' \in (\Pi\setminus{F(\sigma|_r)}) \cap F(\rho|_r)$. Consequently,
it may happen that $\minhearedof_s(\sigma,r) < r_0$.
\end{lemma}
\begin{proof}
Since the state of $s$ is the same in $\sigma|_r$ and $\rho|_r$, but the sets $F(\rho|_r)$ and
$F(\sigma|_r)$ may be different, the lemma follows trivially.
\end{proof}

With the abbreviation $C(\sigma|_r)=\Pi\setminus{F(\sigma|_r)}$ for all non-faulty processes
in $\sigma|_r$, and $\sigma|_r \sim_Q \rho|_r$ for $\forall q\in Q:\; \sigma|_r \sim_q \rho|_r$, 
we define the short-hand notation $\sigma|_r \simmaj \rho|_r$ to express indistinguishability
for a majority of (correct) processes, defined by $\exists Q \subseteq C(\sigma|_r)\cap C(\rho|_r), 
|Q|\geq n-f$ such that $\forall q\in Q: \; \sigma|_r \sim_{q} \rho|_r$.


The following lemma guarantees that prefixes that are indistinguishable only for
strictly less than $n-f$ processes are eventually distinguishable for all processes:

\begin{lemma}[Vanishing minority indistinguishability]\label{lem:vanishingminority}
Given $\rho|_{r_0}$, there is a round $r$, $r_0\leq r <\infty$, such that for every 
$\sigma|_{r_0}$ with $\rho|_{r_0} \not\simmaj \sigma|_{r_0}$, it holds that $\rho|_r 
\not\sim \sigma|_r$.
\end{lemma}

\begin{proof}
Due to our reliable link assumption, for every process $s$ that does not fail in $\rho$, there is a 
round $r > r_0$ where $\minhearedof_s(\rho,r)\geq r_0$. Now assume that
there is some $\sigma|_{r_0}$ with $\rho|_{r_0} \sim_Q \sigma|_{r_0}$ for a maximal set $Q$ with
$1\leq |Q| < n-f$, but $\rho|_r \sim_s \sigma|_r$ for some process $s$. Since $s$ receives round-$r_0$ messages from 
$|C(\rho|_n)|\geq n-f$ processes in $\rho|_r$, and $\rho|_r \sim_s \sigma|_r$, process $s$ must receive 
exactly the same messages also in $\sigma|_r$. As at most $|Q| < n-f$ of those messages
may be sent by processes that cannot distinguish $\rho|_{r_0} \sim_Q \sigma|_{r_0}$, 
at least one such message must originate in a process $q'$ with $\rho|_{r_0} \not\sim_{q'} \sigma|_{r_0}$.
In this case, \cref{lem:indistprecond}.(iii) prohibits $\rho|_r \sim_s \sigma|_r$, however, which provides the required contradiction.
\end{proof}

The following lemma finally shows that majority indistinguishability in conjunction
with mature prefixes entails strong indistinguishability properties in earlier
rounds:

\begin{lemma}[Majority indistinguishability precondition]\label{lem:majindistprecond}
Suppose $\tau|_{r} \simmaj \sigma|_{r} \simmaj \rho|_r$ and $\mature(\rho,r)=\true$. Then,
for the round $r_0$ imposed by the latter, it holds that $\tau|_{r_0} \sim_{C(\rho|_r)} 
\sigma|_{r_0} \sim_{C(\rho|_r)} \rho|_{r_0}$, and hence also $\tau|_{r_0} \sim_{C(\rho|_r)} \rho|_{r_0}$.
\end{lemma}
\begin{proof}
Let $S$ resp.\ $Q$ be the set of at least $n-f$ processes causing 
$\sigma|_{r}\simmaj \rho|_{r}$ resp.\ $\sigma|_{r}\simmaj \tau|_{r}$. 
Since $Q \cap S \neq \emptyset$ by the pigeonhole principle, let 
$s\in Q \cap S$. Clearly, $\tau|_{r} \sim_s \sigma|_{r} \sim_s \rho|_{r}$, 
and hence also $\tau|_{r} \sim_s \rho|_{r}$.
Since $\mature(\rho,r)=\true$, \cref{lem:indistprecond}.(i)
in conjunction with \cref{lem:HOinheritance} implies $\rho|_{r_0} \sim_{C(\rho|_{r})} 
\sigma|_{r_0}$, as well as $\rho|_{r_0} \sim_{C(\rho|_{r})} 
\tau|_{r_0}$, and hence also $\sigma|_{r_0} \sim_{C(\rho|_{r})} 
\tau|_{r_0}$ as asserted.
\end{proof}

With these preparations, we can define an explicit labeling algorithm \cref{alg:lambda}
for the WTL model, i.e., an algorithm that computes a label $\myFunc(\sigma|_r)$ 
for every $r$-prefix $\sigma|_r$ of an admissible execution $\sigma$ in our WTL model.
A label can either be $\emptyset$ (still undefined) or else denote a single process 
$p$ (which will turn out to be a broadcaster), and will be persistent in $\sigma$ in the sense that $\myFunc(\sigma|_r)=p \Rightarrow
\myFunc(\sigma|_{r+k})=p$ for every $k\geq 0$. Note that we can hence uniquely also
assign a label $\myFunc(\sigma)$ to an infinite execution. Note that, for defining our
decision sets, we will assign $\sigma$ to $\Sigma_{I_p}$, where $I_p$ is the initial
value of $p=\myFunc(\sigma)$ in $\sigma$.

Informally, our labeling algorithm 
works as follows: If there is some unlabeled 
mature prefix $\rho|_r$, it is labeled either (i) with the label of some already
labeled but not yet mature $\sigma|_r$ if the latter got its label early enough,
namely, by the round $r_0$ where $\oldenough(\rho,r_0)=\true$, or else (ii) with 
its dominant $p_\rho$.

\begin{algorithm}
\caption{Computing $\myFunc$ for each $r$-prefix $\sigma|_r$ in the WTL model.}\label{alg:lambda}
\DontPrintSemicolon
	Initially, let $\myFunc(\sigma|_0) = \emptyset$. \label{line:initWTL} \;
\For{$r = 1, 2, \ldots$}{
\lForEach{$\sigma|_r$}{
$\myFunc(\sigma|_r) \gets \myFunc(\sigma|_{r-1})$} \label{line:monotonyWTL}
\ForEach{$\rho|_r$ with $\myFunc(\rho|_r) = \emptyset$}{
\If{$\exists \sigma|_r \simmaj \rho|_r$ with $\myFunc(\sigma|_r) = p \ne \emptyset$ and
  $\mature(\sigma,r)=\true$}{
  $\myFunc(\rho|_r) \gets p$ \label{line:assignOtherWTL} \;
}
}
\ForEach{$\rho|_r$ with $\myFunc(\rho|_r) = \emptyset$ and $\mature(\rho,r)=\true$}{
 \If{$\exists \sigma|_r \simmaj \rho|_r$ with $\myFunc(\sigma|_{r_0}) = p \ne \emptyset$ for $r_0$ satisfying $\oldenough(\rho,r_0)=\true$}{
  $\myFunc(\rho|_r) \gets p$ \label{line:assignOwnfromOldWTL} \;
}
}
\ForEach{$\rho|_r$ with $\myFunc(\rho|_r) = \emptyset$ and $\mature(\rho,r)=\true$}{
\eIf{$\exists \sigma|_r \simmaj \rho|_r$ with $\myFunc(\sigma|_r) = p \ne \emptyset$ and
  $\mature(\sigma,r)=\true$}{
	$\myFunc(\rho|_r) \gets p$    // Only happens when $\sigma|_r$ got its label in line~\ref{line:assignOwnfromOldWTL}\label{line:assignOtherAfterWTL}\;
  }{
	$\myFunc(\rho|_r) \gets p_\rho$ \label{line:assignOwnWTL}\;
  }
}
}
\end{algorithm}

The following \cref{thm:decsetWTL} in conjunction with \cref{lem:DeltabroadcastWTL}shows that \algref{alg:lambda} computes
labels, which result in strong decision sets that are compatible with the needs of
\cref{thm:equivuniform}. Strong consensus in the WTL model can hence be solved by
means of our universal algorithm.

\begin{theorem}[Strong decision sets for WTL algorithm]\label{thm:decsetWTL}
The set $\Sigma(p) = \{ \sigma \mid \myFunc(\sigma) = p \}$ is open in
the uniform topology, and so is the strong decision set 
$\Sigma_v = \{ \sigma \mid (\myFunc(\sigma) = p) \wedge (I_p = v)\}$.
\end{theorem}

\begin{proof}
We show that, if $\sigma$ is assigned to the partition set
$\Sigma(p)$, then $B_{2^{-(i+D(\sigma))}}(\sigma) \subseteq \Sigma(p)$, 
where $i$ is the smallest round where $\mature(\sigma,i)=\true$
and $D(\sigma)$ is the maximum number of rounds required for
a minority indistinguishability in $\sigma_i$ to go away 
($D(\sigma)=r-r_0$ in the notation of \lemmaref{lem:vanishingminority}),
which implies openness of $\Sigma(p)$.
Note that the corresponding property obviously also holds for the decision set 
$\Sigma_v = \{ \sigma \mid (\myFunc(\sigma) = p) \wedge (I_p = v)\}$.

First of all, in \cref{alg:lambda}, $\myFunc(\sigma|_i)$ gets initialized to $\emptyset$
in line~\ref{line:initWTL} and assigns a label $\ne \emptyset$ at the latest
when $\mature(\sigma,i)=\true$.
Once assigned, this value is never modified again as each assignment,
except the one in line~\ref{line:monotonyWTL}, may only be performed
if the label was still $\emptyset$.

For an unlabeled prefix $\sigma|_i$ that is indistinguishable
to a mature labeled prefix $\rho|_i$, there are two possibilities: Either, its
indistinguishability is a majority one, in which case $\sigma|_i$ gets its label from
$\rho|_i$ in line~\ref{line:assignOtherWTL}, or else the minority indistinguishability
will go away within $D(\sigma)$ rounds.
It thus suffices to show that if a label $\myFunc(\rho|_r) \gets \{ p \}$ is
assigned to a round $r$ prefix $\rho|_r$, then every majority-indistinguishable 
prefix $\sigma|_r \simmaj \rho|_r$ has
either $\myFunc(\rho|_r) = \myFunc(\sigma|_r)$ or $\myFunc(\sigma|_r) = \emptyset$.

We prove  this by induction on $r = 0, 1, \ldots$.
The base for $r=0$ follows directly from line~\ref{line:initWTL}.
For the step from $r-1$ to $r$, assume by hypothesis that, for all round $r-1$
prefixes that already had $\{ p \}$ assigned, all their majority-indistinguishable prefixes
have label $\{ p \}$ or $\emptyset$.
For the purpose of deriving a contradiction, suppose that a label
$\myFunc(\rho|_{r}) \ne \emptyset$ is
assigned to a round $r$-prefix $\rho|_{r}$ in iteration $r$ and there exists some $\sigma|_r$ with
$\sigma|_{r} \simmaj \rho|_{r}$ and $\emptyset \ne \myFunc(\sigma|_{r}) \ne \myFunc(\rho|_{r})$. Let $S$
be the set of involved processes, i.e., $\sigma|_r \sim_s \rho|_r$ for $s\in S$ with $|S|\geq n-f$.

We need to distinguish all the different ways of assigning labels to $\rho|_r$.

Suppose $\sigma|_r$ nor $\rho|_r$ get their labels in round $r$, but not in line~\ref{line:assignOtherWTL}. 
Since both $\mature(\sigma,r)=\true$ and $\mature(\rho,r)=\true$,
\cref{lem:propertiesoldenough}.(iii) in conjunction with \cref{lem:notbotholdenough} reveals
that $p_\sigma=p_\rho$ since $\sigma|_{r} \simmaj \rho|_{r}$. In all cases
except for the one where both $\rho|_r$ and $\sigma|_r$ get their labels
in line~\ref{line:assignOwnfromOldWTL}, we immediately get a contradiction 
since $\myFunc(\rho|_r)=\myFunc(\sigma|_r)$ in any case. Finally,
if $\rho|_r$ and $\sigma|_r$ get their labels in line~\ref{line:assignOwnfromOldWTL}, there is some $\tau|_r \simmaj \rho|_{r}$ with $\mature(\tau,r)=\false$ but $\myFunc(\tau|_{r_0})\neq \emptyset$, where $r_0$ is such that $\oldenough(\rho,r_0)=\true$, and some $\omega_r \simmaj \sigma|_{r}$ with 
the analogous properties in round $r_0'$. Let $Q'$ resp.\ $Q''$ be the sets of at least $n-f$ processes involved in 
$\tau|_r \simmaj \rho|_{r}$ resp.\ $\omega_r \simmaj \sigma|_{r}$.
Since $\mature(\rho,r)=\true$, \cref{lem:majindistprecond} implies $\rho|_{r_0} \sim_{C(\rho|_{r})} \sigma|_{r_0}
\sim_{C(\rho|_{r})} \tau|_{r_0}$ and also $\rho|_{r_0'} \sim_{C(\rho|_{r})} \sigma|_{r_0'} \sim_{C(\rho|_{r})} \omega_{r_0}$, which establishes $\omega_{r_0} \sim_{C(\rho|_{r})} \tau|_{r_0}$. Since, by the induction hypothesis, $\myFunc(\omega_{r_0})=\myFunc(\tau|_{r_0})$,
we again end up with $\myFunc(\rho|_r)=\myFunc(\sigma|_r)$, which provides the required contradiction.

However, we also need to make sure that inconsistent labels cannot be assigned
in line~\ref{line:assignOtherWTL} and any of the other lines, possibly
in different rounds. For a contradiction, we assume a ``generic'' setting that
can be fit to all cases: We assume that $\sigma|_{r'}$ got its label $\myFunc(\sigma|_{r'})=\myFunc(\tau|_{r'})\neq\emptyset$ 
assigned in iteration $r'\leq r$
in line~\ref{line:assignOtherWTL} or line~\ref{line:assignOtherAfterWTL},
since there was some already labeled $\tau|_{r'}\simmaj \sigma|_{r'}$ with $\mature(\tau,r')=\true$ but
$\mature(\sigma,r')=\false$. Moreover, we assume that $\rho|_r$ gets assigned its label 
$\myFunc(\sigma|_r) \neq \myFunc(\rho|_{r})=\myFunc(\omega_r)\neq\emptyset$ in iteration~$r\geq r' > \rST{\tau}+\Theta$
also in line~\ref{line:assignOtherWTL} or in line~\ref{line:assignOtherAfterWTL}, since there is some already labeled $\omega_{r}\simmaj \rho|_{r}$ 
with $\mature(\omega,r)=\true$ but $\mature(\rho,r)=\false$. Note carefully that we can rule out
the possibility that there are two different, say, $\sigma|_{r'}$ and $\sigma'|_{r'}$, with
inconsistent labels, which both match the condition of line~\ref{line:assignOtherWTL} or line~\ref{line:assignOtherAfterWTL}: This is prohibited by the induction hypothesis, except in the case of $r'=r$, 
where the above generic scenario applies.

To also cover the cases where $\rho|_r$ gets it label assigned in the other lines, we can set $\rho|_r=\omega_r$
in our considerations below. Note that the induction hypothesis again rules out the possibility that there 
are two different, say, $\sigma|_{r_0}$ and $\sigma'|_{r_0}$, with inconsistent labels, which both match the 
condition of line~\ref{line:assignOwnfromOldWTL} here, since $r_0<r$. 

Let $Q'\subseteq C(\tau|_{r'})$ be the set of at least $n-f$ processes causing 
$\tau|_{r'}\simmaj \sigma|_{r'}$, and $Q''\subseteq C(\omega_{r})$ be the set 
of at least $n-f$ non-faulty processes causing $\omega_{r}\simmaj \rho|_{r}$.
Since $\mature(\tau,r')=\true$ and $\mature(\omega,r)=\true$, \cref{lem:majindistprecond}
implies 
\begin{align}
&\tau|_{r_0'}  \sim_{C(\tau|_{r'})}  \sigma|_{r_0'}  \sim_{C(\tau|_{r'})}  \rho|_{r_0'}\label{eq:rnullprime}\\ 
&\sigma|_{r_0} \sim_{C(\omega_{r})} \rho|_{r_0}  \sim_{C(\omega_{r})}  \omega_{r_0}\label{eq:rnull}
\end{align}

We first consider the case $r_0' \leq r_0 \leq r'$:
Since $Q'\subseteq C(\tau|_{r'})$, $\tau|_{r'}\sim_{Q'} \sigma|_{r'}$ also 
implies $\tau|_{r_0}\sim_{Q'} \sigma|_{r_0}$. As $\oldenough(\tau,r_0')=\true$,
\cref{lem:propertiesoldenough}.(ii) also ensures $\oldenough(\tau,r_0)=\true$.
Moreover, since obviously $Q' \cap C(\omega_r) \neq \emptyset$ as well, we finally
observe that actually $\tau|_{r_0} \sim_{Q' \cap C(\omega_r)} \omega_{r_0}$. 
By \cref{lem:notbotholdenough}, we hence find that $p_\omega = p_\tau$.
Now there are two possibilities: If actually $\tau|_{r_0} \simmaj \omega_{r_0}$
holds, line~\ref{line:assignOwnfromOldWTL} implies that 
$\myFunc(\omega_r)=\myFunc(\tau|_{r_0})$.
Otherwise, every process will eventually be able to distinguish $\tau|_{r}$ and $\omega_{r}$
and, hence, $\rho|_r$ and $\sigma|_r$ by \cref{lem:vanishingminority}.
Both are contradictions to one of our assumptions $\myFunc(\omega_r)\neq
\myFunc(\tau|_{r_0})$ and $\rho|_r \simmaj \sigma|_r$.

To handle the case $r_0' > r_0$, we note that we can repeat exactly the same
arguments as above if we exchange the roles of $\omega_r$ and $\tau|_{r'}$
and $\sigma|_r$ and $\rho|_r$. 
In the only possible case of $r_0 \leq r_0' \leq r$,
since $Q''\subseteq C(\omega_{r})$, $\omega_{r}\sim_{Q''} \rho|_{r}$ also 
implies $\omega_{r_0'}\sim_{Q''} \rho|_{r_0'}$. As $\oldenough(\omega,r_0)=\true$,
\cref{lem:propertiesoldenough}.(ii) also ensures $\oldenough(\omega,r_0')=\true$.
Moreover, since obviously $Q'' \cap C(\tau|_{r'}) \neq \emptyset$ as well, 
we finally observe that actually $\omega_{r_0'} \sim_{Q'' \cap C(\tau|_{r'})} \tau|_{r_0'}$. By
\cref{lem:notbotholdenough}, we hence find again that $p_\omega = p_\tau$.
The same arguments as used in the previous paragraph establish the required
contradictions.

In the remaining case $r_0' \leq r_0$ but $r_0 > r'$, we have the situation where
$\sigma|_{r'}$ has already assigned its label \emph{before} round $r_0$, where
$\oldenough(\rho,r_0)=\true$. In general, every process may be able to 
distinguish $\rho$ and $\sigma$ (not to speak of $\tau$ and $\omega$) after
$r_0$, and usually $p_\tau\neq p_\omega$, so nothing would prevent $\myFunc(\sigma|_r)\neq \myFunc(\rho|_r)$ if the labeling algorithm would not have taken
special care, namely, in line~\ref{line:assignOwnfromOldWTL}: 
Rather than just assigning $\myFunc(\rho|_r)= \{ p_\omega \}$, it uses the
label of $\sigma|_{r_0}$ and therefore trivially avoids inconsistent labels. 
Note carefully that doing this is well-defined: If there were two different 
eligible $\sigma|_{r_0}$ and $\sigma'|_{r_0}$ available in 
line~\ref{line:assignOwnfromOldWTL}, \eqref{eq:rnullprime} reveals that
$\sigma|_{r_0} \simmaj \sigma'|_{r_0}$, such that their labels must be the
same by the induction hypothesis.

This completes the proof of our theorem.
\end{proof}

The following \cref{lem:DeltabroadcastWTL} finally confirms 
that a non-empty label $p$
assigned to some prefix $\sigma|_r$ is indeed a broadcaster:

\begin{lemma}\label{lem:DeltabroadcastWTL}
If $\myFunc(\sigma|_r) = \{p\}$ is computed by \cref{alg:lambda}, then $(p,0,I_p(\sigma))$
is contained in the view $V_q(\sigma|_r)$ of every process $q\in\Pi\setminus{F(\sigma|_r)}$ that 
has not crashed in $\sigma|_r$.
\end{lemma}

\begin{proof}
We distinguish the two essential cases where
$\rho|_r \in \Sigma_p$ can get its label $\{ p \}$: If $\myFunc(\rho|_r)$ was assigned via 
line~\ref{line:assignOwnWTL}, the dominant $p_\rho$ must indeed have reached all correct
processes in the system according to \cref{def:WTLextendedpredicates}
of $\oldenough(\rho,r_0)$, which is incorporated in $\mature(\rho,r)$.
In all other cases, $\myFunc(\rho|_r)$ was assigned since there is some 
$\sigma|_{r'}\sim_{s'}\rho|_{r'}$, $r' \leq r$, 
with at least $\oldenough(\sigma,r')=\true$. By the same argument as before,
the dominant $p_\sigma$
must have reached every correct process in $\sigma|_{r'}$ already.
As $\minhearedof_{s'}(\sigma,r') \geq \rST{\sigma}+\Theta$ according to the definition 
of $\oldenough(\sigma,r')$ implies also $\minhearedof_{s'}(\rho,r') \geq \rST{\sigma}+\Theta$
since $\sigma|_{r'}\sim_{s'}\rho|_{r'}$, it follows that $p_\sigma$ has also reached all
correct processes in $\rho|_{r'}$ already.
\end{proof}

\section{Conclusions}
\label{sec:conclusions}

We provided a complete characterization of both uniform and non-uniform deterministic consensus solvability in distributed systems with benign process and communication failures using 
point-set topology. Consensus can only be solved when the space of admissible 
executions can be partitioned into disjoint decision sets that are both
closed and open in our topologies. We also showed that this requires exclusion of certain (fair and unfair) limit sequences, which limit broadcastability and 
happen to coincide with the forever bivalent executions constructed
in bivalence and bipotence proofs. The utility and wide applicability of
our characterization was demonstrated by applying it to several different 
distributed computing models.

Part of our future work will be devoted to a generalization of our topological framework to other decision problems.
Since the initial publication of our results, this generalized study has been started by Attiya, Casta\~neda, and Nowak~\cite{ACN23}.
Another very interesting area of future research is to study the homology
of non-compact message adversaries, i.e., a more detailed topological structure
of the space of admissible executions.

\section*{Acknowledgments}
We gratefully acknowledge the suggestions of the reviewers, which stimulated
the inclusion of several additional results and pointed out many ways to improve 
our paper.

\bibliography{lit}
\bibliographystyle{ACM-Reference-Format}

\appendix

\section{Process-Time Graphs} 
\label{sec:model}

In the main body of our paper,
we have formalized our topological results in terms of admissible
executions
in the generic system model introduced in \cref{sec:general:model}. In this
section, we will show that they also hold a topological space consisting
of other objects, namely, \emph{process-time graphs~\cite{BM14:JACM}}. In a nutshell, a process-time
graph describes the process scheduling and all communication occurring
in a run, along with the
set of initial values.

Actually, since we consider deterministic algorithms only, a process-time
graph corresponds to a \emph{unique} execution (and vice
versa). This equivalence, which actually results from a
\emph{transition function} that is continuous in all our topologies
(see \cref{lem:tau:is:cont}), will eventually allow us to use our topological
reasoning in either space alike.

In order to define process-time graphs as generic as possible, we will resort
to an intermediate \emph{operational system model} that is essentially
equivalent to the very flexible general system model from Moses and
Rajsbaum~\cite{MR02}. 
Crucially, it will also instantiate the weak
clock functions $\chi_p(C^t)$ stipulated in our generic model in
\cref{sec:general:model}, which must satisfy $\chi_p(C^t)\leq t$
in every admissible execution $(C^t)_{t\geq 0} \in \Sigma$.
Since $t$ represents some global notion of time here (called \emph{global
real time} in the sequel), ensuring this property is sometimes not
trivial. More concretely,
whereas $t$ is inherently known at every process in the case of
lock-step synchronous
systems like dynamic networks under message adversaries~\cite{WSS19:DC},
for example, this is not the case for purely asynchronous systems~\cite{FLP85}.

\subsection{Basic operational system model}
\label{sec:basicoperationalmodel}

Following Moses and Rajsbaum~\cite{MR02}, we consider message passing or shared memory distributed systems made up of a set $\Pi$ of~$n\geq 2$ 
processes. We stipulate a global discrete clock with values taken from
$\N_0=\N \cup \{0\}$, which represents global real time in multiples of some arbitrary unit
time. Depending on the particular distributed computing model, this global clock may 
or may not be accessible to the processes.

Processes are modeled as communicating state machines that encode
a deterministic distributed algorithm (protocol) $\P$. At every real time
time $t \in \N_0$, process $p$ is in some \emph{local state} 
$L_p^t \in \L_p \cup \{\bot_p\}$, where $\bot_p\not\in \L_p$ is a special state representing
that process $p$ has failed.\footnote{This failed state $\bot_p$ is the only essential
difference to the model of Moses and Rajsbaum~\cite{MR02}, where faults are implicitly caused by
a deviation from the protocol. 
This assumption makes sense for constructing ``permutation layers'',
for example, where it is not the environment that crashes a process at will,
but rather the layer construction, which implies that some process takes
only finitely many steps. Such a process just remains in the local state reached 
after its last computing step. In our setting, however, the fault state of all 
processes is solely controlled by the omniscient environment. Hence, we can safely
use a failed state $\bot_p$ to gain simplicity
without losing expressive power.} 
Local state transitions of $p$ are caused by local \emph{actions} taken from the
set $\ACT_p$, which may be internal bookkeeping operations and/or the
initiation of shared memory operations resp.\ of sending messages; their
exact semantics may vary from model to model. 
Note that a single action may consist of finitely
many non-zero time operations, which are initiated simultaneously but may complete at different
times.
The deterministic protocol $\P_p: \L_p \to \ACT_p$, representing $p$'s part in
$\P$, is a function that specifies the local action $p$ is ready to perform when 
in state $L_p \in \L_p$. 
We do not restrict the actions $p$ can perform when in state $\bot_p$.

In addition, there is an additional non-deterministic state machine called
the \emph{environment} $\epsilon$, which represents the adversary that is responsible 
for actions outside the sphere of control of the processes' protocols. 
It 
controls things like the completion of shared memory operations initiated earlier resp.\ 
the delivery of previously sent messages, the occurrence of process and 
communication failures, and (optionally) the occurrence of \emph{external environment events}
that can be used for modeling oracle inputs like failure detectors~\cite{CT96}. 
Let $\act_\epsilon$ be the set of all possible combinations of such
\emph{environment actions} (also called \emph{events} for conciseness later on). 
We assume that the environment keeps track of pending shared-memory 
operations resp.\ sent messages in its \emph{environment state} $L_\epsilon \in \L_\epsilon$. 
The environment is also in charge of process \emph{scheduling}, i.e., 
determines when a process performs a state transition, which will be
referred to as \emph{taking a step}. Formally, we assume that
the set $\ACT_\epsilon$ of all possible environment actions consists of all pairs 
$(\Sched,e)$, made up of the set of processes $\Sched \subseteq \Pi$ that 
take a step and some $e\in\act_\epsilon$ (which may both be empty as well). 
The non-deterministic \emph{environment protocol} $\P_\epsilon \subseteq \G \times
(\ACT_\epsilon \times \L_\epsilon)$ is an arbitrary relation that, given the
current global state $G\in\G$ (defined below, which also contains the current
environment state $L_\epsilon \in \L_\epsilon)$, chooses the next environment action 
$E=(\Sched,e)\in\ACT_\epsilon$ and the successor environment state 
$L_\epsilon' \in \L_\epsilon$. Note carefully that we assume that only $E$ is actually
chosen non-deterministically by $\P_\epsilon$, whereas $L_\epsilon'$ is
determined by a transition function $\tau_\epsilon: \G \times \ACT_\epsilon \to 
\L_\epsilon$ according to $L_\epsilon'=\tau_\epsilon(G,E)$.

Finally, a \emph{global state} of our system (simply called state) is an element of 
$\G=\L_\epsilon\times \L_1 \times \cdots \times \L_n$. Given a global state $G \in \G$,
$G_i$ denotes the local state of process $i$ in $G$, and $G_\epsilon$ denotes the state
of the environment in $G$. Recall that it is $G_\epsilon$ that keeps track of in-transit
(i.e., just sent) 
messages, pending shared-memory operations 
etc.\footnote{A different, but equivalent, conceptual
model would be to assume that the state of a processor consists of a visible state
and, in the case of message passing, message buffers that hold in-transit messages.} 
We also write $G=(G_\epsilon,C)$, where the vector of the
local states $C=(C_1,\dots,C_n)=(G_1,\dots,G_n)$ of all the processes is called \emph{configuration}.
Given $C$, the component $C_i$ denotes the local state of process $i$ in $C$, and the
set of all possible configurations is denoted as $\C$. Note carefully that there may
be global configurations $G\neq G'$ where the corresponding configurations 
satisfy $C = C'$, e.g., in the case of different in-transit messages.

A \emph{joint action} is
a pair $(E,A)$, where $E=(\Sched,e)\in\ACT_\epsilon$, and $A$ is a
vector
with index set $\Sched$ such that $A_p \in \ACT_p$ for
$p \in \Sched$. When the joint action $E$ is applied to global state $G$ 
where process $p$ is in local state $G_p$, then $A_p=\P_p(G_p)$ is the
action prescribed by $p$'s protocol. Note that some environment actions,
like message receptions at process $p$ require $p \in \Sched$, i.e.,
``wake-up'' the process.  
For example, a joint action $(E,A)$ that causes $p$ to send a message $m$ to
$q$ and process $r$ to receive a message $m'$ sent to it by process $s$ 
earlier, typically works as follows: (i) $p$ is caused to take a step, where
its protocol $\P_p$ initiates the sending of $m$; (ii) the environment adds 
$m$ to the send buffer of the communication channel from $p$ to $q$ (maintained in the environment state $L_\epsilon$); (iii) the environment moves $m'$ from the send 
buffer of the communication channel from $s$ to $r$ (maintained in the environment 
state $L_\epsilon$) to the receive buffer (maintained in the local state of $r$), 
and (iv) causes $r$ to take a step. 
It follows that the local state $L_r$ of process $r$ reflects the content of message $m'$ 
immediately after the step scheduled along with the message reception.

With $\ACT$ denoting the set of all possible joint actions,
the \emph{transition function} $\tau: \G \times \ACT \to \G$ describes the
evolution of the global state $G$ after application of the joint action $(E,A)$,
which results in the successor state $G'= \tau(G,(E,A))$.
A \emph{run} of $\P$ is an infinite sequence of global states $G^0,G^1,G^2,\dots$ generated by an infinite sequence of joint actions. 
In order to guarantee a stable global state at integer times,
we assume for simplicity that the joint actions occur atomically and instantaneously at times $0.5,1.5,2.5,\dots$,
i.e., that $G^{t+1}=\tau(G^t,(E^{t.5},A^{t.5}))$. $G^0$ is the \emph{initial state}
of the run, taken from the set of possible initial states $\G^0$. Finally, $\Psi$
denotes the subset of all \emph{admissible runs} of our system. $\Psi$ is typically
used for enforcing liveness conditions like ``every message sent to a correct process is 
eventually delivered'' or ``every correct process takes infinitely many steps''.

Unlike Moses and Rajsbaum~\cite{MR02}, we handle process failures explicitly in the state of the
processes, i.e., via the transition function: If some joint action 
$(E^{t.5},A^{t.5})$ contains $E^{t.5}=(\Sched,e)$, 
where $e$ requests some process $p$ to fail, this will force 
$G_p^{t+1}=\bot_p$ in the successor state $G^{t+1}= \tau(G^t,(E^{t.5},A^{t.5}))$, 
irrespective of any other
operations in $e$ (like the delivery of a message) that would otherwise affect $p$. 
All process failures are persistent, that is, we require that
all subsequent environment actions $E^{t'.5}$ for $t'\geq t$ also request $p$ to
fail. As a convention, we consider every $E^{t'.5}$ where $p$ fails as $p$
taking a step as well. Depending on the type of process failure, 
failing may cause $p$ to stop its protocol-compliant internal 
computations, to drop all incoming messages, and/or to stop sending further 
messages. In the case of crash failures,
for example, the process may send a subset of the outgoing messages demanded
by $\P_p$ in the very first failing step and does not perform any protocol-compliant
actions in future steps. A send omission-faulty process does the same, except that
it may send protocol-compliant messages to some processes also in future steps. 
A receive omission-faulty process may omit to process some of its received messages 
in every step where it fails, but sends protocol-compliant messages to every
receiver. A general omission-faulty process combines the possible behaviors of
send and receive omissions. Note that message loss can also be modeled in a different
way in our setting: Rather than attributing an omission failure to the sender or
receiver process, it can also be considered a communication failures caused by
the environment. The involved sender process $p$ resp.\ receiver process $q$ continue to 
act according to its protocol in this case, i.e., would not enter the fault state 
$\bot_p$ resp.\ $\bot_q$ here.

Since we only consider deterministic protocols, a run $G^0,G^1,G^2,\dots$
is uniquely determined by the initial configuration $C^0$ and the sequence 
of tuples $(L_\epsilon^0,E^{0.5}), (L_\epsilon^1,E^{1.5}), \dots$ consisting 
of tuples $(L_\epsilon^t,E^{t.5})$ of environment state and environment actions  for 
$t\geq 0$. Let $\Gomega$ resp.\ $\Comega$ be the set of all infinite \emph{runs} resp.\ \emph{executions} (configuration sequences), with $\Psi \subseteq \Gomega$ resp.\ $\Sigma\subseteq \Comega$ 
denoting the set of \emph{admissible} runs resp.\ executions
that result from admissible environment action sequences
$E^{0.5},E^{1.5},\dots$; after all, they may be required to satisfy
liveness constraints like fairness that cannot be expressed via the
transition function.

Our assumptions on the
environment protocol, namely, $L_\epsilon^{t+1}=\tau_\epsilon(G^t,E^{t.5})$, 
actually imply that a run $G^0,G^1,G^2,\dots$, and thus 
also the corresponding execution $C^0,C^1,C^2,\dots$, is already
uniquely determined by the initial state $G^0=(L_\epsilon^0,C^0)$ and the sequence
of chosen environment actions $E^{0.5},E^{1.5},\dots$.
Since $L_\epsilon^0$ is fixed and the environment actions abstract away almost all of the
internal workings of the protocols and their complex internal states, it should be
possible to uniquely describe the evolution of a run/execution just by means of the sequence
$E^{0.5},E^{1.5},\dots$. In the following, we will show that this is indeed the case.

\subsection{Implementing global time satisfying the weak clock property}\label{sec:cc}

Our topological framework crucially relies on the ability
to distinguish/not distinguish two local states 
$\alpha_p^t$ and $\beta_p^t$ in two executions $\alpha$ and 
$\beta$ at global real time $t$. 
Clearly, this is easy for an omniscent observer
who knows the corresponding global states and can thus
verify that $\alpha_p^t$ and $\beta_p^t$ arise from the same global
time $t$. Processes cannot do that in asynchronous systems, however, since
$t$ is not available to the processes and hence cannot be included
in $\alpha_p^t$ and $\beta_p^t$. Consequently, two \emph{different} sequences of environment actions (called \emph{events} in the sequel for conciseness)
$E_\alpha^{0.5},E_\alpha^{1.5},\dots,E_\alpha^{(t-1).5}$ and
$E_\beta^{0.5},E_\beta^{1.5},\dots,E_\beta^{(t'-1).5}$, applied to
the same initial state, may produce the \emph{same} state $\alpha_p^t=\beta_p^{t'}$. This
happens when they are causal shuffles of each other, i.e., 
reorderings of the steps of the processes that are in accordance 
with the happens-before relation~\cite{Lam78}. Hence, the (in)distinguishability of
configurations does not necessarily match the (in)distinguishability
of the corresponding event sequences. 

Whereas our generic system model does not actually
require processes to have a common notion 
of time, it does require that the weak clock functions $\chi_p$ do
not progress faster than global real time. We will accomplish this
in our operational system model by defining some alternative notion
of global time that \emph{is} accessible to the processes. Doing this
will also rule out the problem spotted above, i.e., ensure that
runs (event sequences) uniquely determine executions (configuration
sequences).

There are many conceivable ways for defining global time, including 
the following possibilities:

(i) In the case of lock-step synchronous distributed systems, like
dynamic networks under message adversaries~\cite{NSW19:PODC,WSM19:OPODIS,
WSN21:FCT}, nothing needs to be done here since all processes inherently
know global real time $t$.

(ii) In the case of asynchronous systems with a
majority of correct processes, the arguably
most popular approach for message-passing systems
(see e.g. \cite{MR01,ADGFT04,HMSZ08:TDSC}) is the simulation
of \emph{asynchronous communication-closed rounds}: Processes organize rounds
$r=1,2,\dots$ by locally waiting until $n-f$ messages sent in the current
round $r$ have been received. These $n-f$ messages are then processed, which
defines both the local state at the beginning of the next round $r+1$
and the message sent to everybody in this next round. Late messages
are discarded, and early messages are buffered locally (in the state of
the environment) until the appropriate round is reached. The very same
approach can also be used in shared-memory systems with 
immediate snapshots \cite{AADG93:JACM}, where a process can safely
wait until it sees $n-f$ entries in a snapshot.
Just using the round numbers
as global time, i.e., choosing $t=r$, is all that is needed for defining
global time in such a model.

(iii) In models without communication-closed rounds~\cite{FLP85,RS10:TCS},
a suitable notion of global time can be derived from
other\footnote{We note that both synchronous and asynchronous
communication-closed rounds, as well as the executions $\Comega$
defined in our generic system model in \cref{sec:general:model},
are of course also sequences of consistent cuts.} definitions of
\emph{consistent cuts}~\cite{Mat89}. We will show how this can be done
in our operational system model based on Mattern's \emph{vector clocks}.
Our construction will exploit the fact that a local state 
transition of a process happens only when it takes a step in our model:
In between
the $\ell$\textsuperscript{th} and $(\ell+1)$\textsuperscript{th} step of any fixed process $p$, 
which happens at time $(t_p(\ell)-1).5$ and $(t_p(\ell+1)-1).5$, respectively,
only environment actions (external environment events, message deliveries, shared memory completions), 
if any, can happen, which change the state of the environment but not the local state of $p$.

We will start out from the sequence of arbitrary \emph{cuts}~\cite{Mat89}
$IC^0,IC^1,IC^2,\dots$ (indexed by an integer \emph{index} $k\geq 0$) occurring 
in a given run $G^0,G^1,G^2,\dots$ (which itself is indexed by the global real time $t$), 
where the \emph{frontier} $IF^k$ of $IC^k$ is 
formed by the local states of the processes after they have 
taken their $k$\textsuperscript{th} step, i.e., $IF^0=IC^0=C^0$ and $IF^k=(G_1^{t_1(k)},\dots,G_n^{t_n(k)})$
for $k\geq 1$, with $(t_p(k)-1).5$ being the time when process $p$
takes its $k$\textsuperscript{th} step. Note that the latter is applied to $p$'s state $IF_p^{k-1}$ in
the frontier $IF^{k-1}$ of $IC^{k-1}$ and processes all the external environment events and
all the messages received/shared
memory operations completed since then. Recall the convention that every 
environment action where process $q$ fails is also considered as $q$ taking a step. 

Clearly, except in lock-step synchronous
systems, $t_p(k)\neq t_q(k)$, so $IC^0,IC^1,IC^2,\dots$ 
can be viewed as the result of applying a trivial ``synchronic layering'' in terms 
of Moses and Rajsbaum~\cite{MR02}. Unfortunately, though, any $IC^k$ may be an
\emph{inconsistent} cut, as messages sent by a fast process $p$ in its $(k+1)$\textsuperscript{th} step
may have been received by a slow process $q$ by its $k$\textsuperscript{th} step. $IC^k$ would violate 
causality in this case, i.e., would not be
left-closed w.r.t.\ Lamport's happens-before relation~\cite{Lam78}.

Recall that we restricted our attention to consensus algorithms using 
full-information protocols, where
every message sent contains the entire state transition history
of the sender. As a consequence, we do not significantly lose applicability
of our results by further restricting the protocol and the supported distributed 
computing models as follows:

\begin{enumerate}
\item[(i)] In a single state transition of $\P_p$, process $p$, can
\begin{itemize}
\item actually receive all messages delivered to it since its last step,
\item initiate the sending of at most one message to every process, resp., 
\item initiate at most one single-writer multiple-reader shared memory operation 
in the shared memory owned by some other process (but no restriction on 
operations in its own shared memory portion).
\end{itemize}
\item[(ii)] In addition to (optional) external environment events, the environment protocol only provides 
\begin{itemize}
\item $\fail(q) \in \act_\epsilon$, which tells process $q$ to fail,
\item $\recv(q,p,t_k)\in \act_\epsilon$, which identifies the message 
$m$ to be delivered to process $q$ (for reception in its next step) by the pair $(p,t_k)$, where $p$ 
is the sending process and $t_k.5$ is the time when the sending of $m$ has been initiated, resp.,
\item $\done(q,p,t_\ell,t_k) \in \act_\epsilon$, which identifies the
completed shared memory operation (to be processed in its next step), 
in the shared memory owned by $p$,
as the one initiated by process $q\neq p$ in its step at 
time $t_\ell.5$; in a read-type operation, it will return to $q$ the shared memory 
content based on $p$'s state at time $t_k$, with $t_\ell \leq t_k$.
\end{itemize}
\end{enumerate}

In such a system, given any cut $IC^k$, it is possible to determine
the unique largest \emph{consistent} cut $CC^k \subseteq IC^k$ \cite{Mat89}.
By construction, $CC^0=IC^0$, and the frontier $CF^k$ of $CC^k$, $k\geq 1$, 
consists of the local states of all processes $q\in \Pi$ reached by having taken some 
$\ell(q)$\textsuperscript{th} step, $0\leq \ell(q)\leq k$, with at least one process $p$ having taken its
$k$\textsuperscript{th} step, i.e., $\ell(p)=k$ and thus $CF^k_p=IF^k_p$, and $CF^k_q = IF_q^{\ell(q)}$
with $0\leq \ell(q)\leq k$ for all processes $q$. Note carefully that $\ell(q)<k$
happens when, in $IC^k$, process $q$ receives some message/data initiated 
at some step $>k$ at or before its own $k$\textsuperscript{th} step  but after its 
$\ell(q)$\textsuperscript{th} step. 

Whereas the environment protocol could of course determine all 
the consistent cuts $CC^0,CC^1,CC^2,\dots$ based on the corresponding sequence
of global configurations, this is typically not 
the case for the processes (unless in the special case of a 
synchronous system). However, in distributed systems adhering to the above constraints,
processes can obtain this knowledge (that is to say, their local share of a consistent cut)
via \emph{vector clocks}~\cite{Mat89}.
More specifically, it is possible to implement a vector clock
$k_p=(k_p^1,\dots,k_p^n)$ at process $p$, where $k_p^p$ counts
the number of steps taken by $p$ itself so far, and $k_p^q$, $q\neq p$,
gives the number of steps that $p$ knows that $q$ has taken so far.
Vector clocks are maintained as follows: Initially, $k_p=(0,\dots,0)$, and every message
sent resp.\ every shared memory operation data written by $p$ gets
$k_p$ as piggybacked information (after advancing $k_p^p$). 
At every local state transition in $p$'s protocol $P_p$, 
$k_p^p$ is advanced by 1. Moreover, when a previously received 
message/previously read data value (containing the originating process
$q$'s vector clock value $\hat{k}_q$) is to be processed in the step, 
$k_p$ is adjusted to the maximum of its previous value and $\hat{k}_q$ 
component-wise, i.e., $k_p^q=\max\{k_p^q,\hat{k}_q^q\}$ for $q\neq p$.
Obviously, all this can be implemented transparently atop of any protocol $\P$ 
running in the system.

Now, given the sequence of global states $AC^0,AC^1,AC^2,\dots$ of the processes 
running the so augmented protocol in some run $G^0,G^1,G^2,\dots$, there is
a well-known algorithm for computing the maximal consistent cut $ACC^k$ 
for the non-consistent cut $AIC^k$ formed by the frontier $AIF^k$ of
the local states of the processes after every process has taken 
its $k$\textsuperscript{th} step: Starting from $\ell:=k$, process $p$ searches for
the sought $\ell(p)$ by checking the vector clock value $k_p(\ell)$ 
of the state after its own $\ell$\textsuperscript{th} step. It stops searching and sets $\ell(p):=\ell$
iff $k_p(\ell)$ is less or equal to $(k,\dots,k)$ component-wise.
The state $AIF_p^{\ell(p)}$ is then process $p$'s contribution in the 
frontier $ACF^k$ of the consistent cut $ACC^k$. Clearly, from $ACC^0,ACC^1,ACC^2,\dots$,
the sought sequence of the consistent cuts $CC^0,CC^1,CC^2,\dots$ can be obtained trivially by discarding all
vector clock information. Therefore, even the processes can compute their
share, i.e., their local state, in $CC^k$ for every $k$. 

By construction, the sequence of consistent cuts $CC^0, CC^1, CC^2,\dots$,
and hence the sequence of its frontiers $CF^0,CF^1,CF^2,\dots$,
completely describe the evolution of the local states of the processes 
in a run $G^0,G^1,G^2,\dots$. In our operational model, we will hence 
just use the indices $k$ of $CC^k$ as global time for specifying 
executions: Starting from the initial state $CC^0$, 
we consider $CC^k$ as the result of applying \emph{round} $k\geq 1$ 
to $CC^{k-1}$ (as we did in the case of lock-step rounds).

\subsection{Defining process-time graphs}\label{sec:pt}

No matter how consistent cuts, i.e., global time, is implemented,
from now on, we just overload the notation used so far and denote by
$C^k$ the frontier $CF^k$ in the consistent cut at global time $k$.
So given an infinite 
execution $\alpha$, we again denote by $\alpha^t$ the
$t$\textsuperscript{th} configuration (= the consistent cut with index $t$) 
in $\alpha$. 

Clearly, by construction, every $C^k$ is \emph{uniquely} determined by $C^0$ and all the events that
cause the steps leading to $C^k$. In particular, we can define 
a vector of events $E^k$, where $E_p^k$ is a set containing all the events that must
be applied to $C_p^{k-1}$ in order to arrive at $C_p^k$. Note carefully
that  a process $p$ that does not make a step, i.e., is not scheduled
in $E^k$ and thus has the same non-$\bot_p$ state in $C^{k-1}$ and $C^k$, 
does not have any event 
$\recv(p,*,*)\in E_p^k$ (resp.\ $\done(p,*,*)\in E_p^k$) 
by construction, i.e., $E_p^k=\emptyset$. 
Otherwise, $E_p^k$ contains a ``make a step'' event, all (optional) external
environment events, 
and $\recv(p,*,*)$ for all messages that have been sent to $p$ in steps
within $C^{k-1}$ and are delivered to $p$ after its previous step but 
before or at its $k$\textsuperscript{th} step (resp.\ $\done(p,*,*,*)$ for all completed
shared memory operation initiated by $p$ in steps within $CC^{k-1}$ and 
completed after $p$'s previous step but before or at its $k$\textsuperscript{th} step).
Note that $E_p^1$ cannot contain any
$\recv(p,*,*)$, as no messages have been sent before (resp.\ no 
$\done(p,*,*,*)$, as no shared memory operations have been initiated before).

As a consequence of our construction, the mismatch problem spotted at the beginning of \sectionref{sec:cc} 
no longer exists, and we can reason about executions and the corresponding event sequences 
alike.  

\medskip

Rather than considering $C^0$ in conjunction with $E^1,\dots,E^k$, however, 
we will consider the corresponding
\emph{process-time graph $k$-prefix} $PTG^k$~\cite{BM14:JACM} instead, which we will now 
define. Since we are only interested in 
consensus algorithms here, we assume that every process has a dedicated
initial state for every possible initial value~$v$, taken from a finite input
domain~$\V$. For every assignment of
initial values $x\in \V^n$ to the~$n$ processes in the initial configuration $C^0$, 
we inductively construct the following sequence of process-time graph prefixes $PTG^t$:

\begin{definition}[Process-time graph prefixes]\label{def:PTGs}
For every $k\geq 0$, the \emph{process-time graph $k$-prefix} $PTG^k$ of a given run
is defined as follows:
\begin{itemize}
\item
The process-time graph $0$-prefix~$PTG^0$ contains the nodes $(p,0,I_p)$ for 
all processes $p\in\Pi$, with initial value $I_p\in\V$, and no edges.

\item
The process-time graph $1$-prefix~$PTG^1$ contains the nodes $(p,0,I_p)$ and $(p,1,f)$ 
for all processes $p\in\Pi$, where $f=\bot$ if $\fail(p)\in E^1$ (which models the
case of an initially dead process~\cite{FLP85}), and $f=*$ otherwise, where
$*$ is some encoding (e.g., some failure detector output) of the external environment 
events $\in E^1$.
It contains a (local) edge from $(p,0,I_p)$ to $(p,1,f)$ and no other edges.

\item
The process-time graph $k$-prefix~$PTG^k$, $k\geq 2$, contains $PTG^{k-1}$
and the nodes $(p,k,f)$ for all processes $p\in\Pi\setminus\{q \mid E_q^k=\emptyset\}$, 
where $f=\bot$ if $\fail(p)\in E^k$, and $f=*$ otherwise. It contains a (local) edge from
$(p,\ell,f_\ell)$ to $(p,k,f)$ (if the latter node is present at all, 
i.e., when $E_p^k\neq\emptyset$), where $\ell$ is maximal among all nodes $(p,*,*)$ in $PTG^{k-1}$.
For message passing systems, it also contains an edge from $(p,s,f_s)$, $1 \leq s < k$, to $(q,k,f)$ 
iff $\recv(q,p,s) \in E^{k}$. For shared memory systems, it contains an
edge from $(p,\ell,f_\ell)$, $1 \leq \ell < k$, to $(q,k,f)$  if and only if $\done(q,p,s,\ell) \in E^{k}$;
this reflects the fact that the returned data originate from $p$'s step $\ell$ and
not from the step $s$ where $q$ has initiated the shared memory operation.
\end{itemize}
The \emph{round-$\ell$ process-time graph} $PT^\ell$, for $0\leq \ell \leq k$,
which represents the contribution of round $\ell$ to $PTG^k$, is defined
as (i) $PT^0=PTG^0$ and the set of vertices $PT^{\ell}=PTG^\ell\setminus PTG^{\ell-1}$ 
along with all their incoming edges (which all originate in $PTG^{\ell-1}$).
\end{definition}

Figure~\ref{fig:ptlockstep} shows an example of a process-time graph prefix
occuring in a run with lock-step synchronous or asynchronous rounds.
The nodes are horizontally aligned according to global time, progressing
along the vertical axis.

\begin{figure}
\begin{tikzpicture}[>=latex']
\node[black!30!green, very thick, draw, circle,label=left:{$(1,0,1)$}] (C10) at (-3, 0)  {};
\node[black!30!green, very thick, draw, circle,label=left:{$(2,0,0)$}] (C20) at ( 0, 0)  {};
\node[black!30!green, very thick, draw, circle,label=left:{$(3,0,1)$}] (C30) at ( 3, 0)  {};

\node[black!30!green, very thick, draw, circle,label=left:{$(1,1,*)$}] (C11) at (-3, 2)  {};
\node[black!30!green, very thick, draw, circle,label=left:{$(2,1,*)$}] (C21) at ( 0, 2)  {};
\node[draw, circle,            label=left:{$(3,1,*)$}] (C31) at ( 3, 2)  {};

\node[black!30!green, very thick, draw, circle,label=left:{$(1,2,*)$}] (C12) at (-3, 4)  {};
\node[draw, circle,            label=left:{$(2,2,*)$}] (C22) at ( 0, 4)  {};
\node[draw, circle,            label=left:{$(3,2,*)$}] (C32) at ( 3, 4)  {};

\node[black!30!green, very thick, draw, circle,label=left:{$(1,3,*)$}] (C13) at (-3, 6)  {};
\node[draw, circle,            label=left:{$(2,3,*)$}] (C23) at ( 0, 6)  {};
\node[draw, circle,            label=left:{$(3,3,*)$}] (C33) at ( 3, 6)  {};

\draw[black!30!green, very thick,->] (C10) -- (C11);
\draw[black!30!green, very thick,->] (C20) -- (C21);
\draw[->] (C30) -- (C31);

\draw[black!30!green, very thick,->] (C11) -- (C12);
\draw[->] (C21) -- (C22);
\draw[->] (C31) -- (C32);
\draw[black!30!green, very thick,->] (C21) -- (C12);
\draw[->] (C31) -- (C22);
\draw[->] (C11) -- (C22);
\draw[->] (C11) -- (C32);

\draw[->] (C12) -- (C13);
\draw[->] (C22) -- (C23);
\draw[->] (C32) -- (C33);
\draw[->] (C22) -- (C13);
\draw[->] (C22) -- (C33);

\draw[thick,blue] plot [smooth] coordinates { (C10) (C20) (C30) } -- +(1,0) node[anchor=west] {$PTG^0$};
\draw[thick,blue] plot [smooth] coordinates { (C11) (C21) (C31) } -- +(1,0) node[anchor=west] {$PTG^1$};
\draw[thick,blue] plot [smooth] coordinates { (C12) (C22) (C32) } -- +(1,0) node[anchor=west] {$PTG^2$};
\draw[thick,blue] plot [smooth] coordinates { (C13) (C23) (C33) } -- +(1,0) node[anchor=west] {$PTG^3$};

\end{tikzpicture}
\caption{Example of a process-time graph prefix $PTG^3$ of a lock-step
  execution at time $t=3$, for $n=3$ processes and initial values $x=(1,0,1)$.
Process~$1$'s view $V_{1}(PT^2)$ is highlighted in bold green.}
\label{fig:ptlockstep}
\end{figure}

Figure~\ref{fig:pt} shows an example of a process-time graph prefix occuring
in a run with processes that do not execute in lock-step rounds and may crash.
Nodes are again horizontally aligned according to global time, progressing
along the vertical axis. The
frontier $C^k$ of the $k$\textsuperscript{th} consistent cut, reached at the end of round $k$, 
is made up of $C_p^k=\{(p,\ell_p(k),*) \in PTG^k \mid \mbox{$0\leq\ell_p(k)\leq k$ is maximal}\}$. That is, starting
from the (possibly inconsistent) cut made up of the nodes $(p,k,*)$ of all 
processes, one has to go down for process $p$ until the first node is reached where 
no edge originating in a node with time $>k$ has been received.

\begin{figure}
\begin{tikzpicture}[>=latex']
\node[black!30!green, very thick, draw, circle,label=left:{$(1,0,1)$}] (C1I) at (-3, 0)  {};
\node[black!30!green, very thick, draw, circle,label=left:{$(2,0,0)$}] (C2I) at ( 0, 0)  {};
\node[black!30!green, very thick, draw, circle,label=left:{$(3,0,1)$}] (C3I) at ( 3, 0)  {};

\node[black!30!green, very thick, draw, circle,label=left:{$(1,1,*)$}] (C10) at (-3, 2)  {};
\node[black!30!green, very thick, draw, circle,label=left:{$(2,1,*)$}] (C20) at ( 0, 2)  {};
\node[black!30!green, very thick, draw, fill, circle,label=left:{$(3,1,\bot)$}] (C30) at ( 3, 2)  {};

\node[black!30!green, very thick, draw, circle,label=left:{$(1,2,*)$}] (C11) at (-3, 4)  {};
\node[draw, circle, fill,           label=left:{$(3,2,\bot)$}] (C31) at ( 3, 4)  {};

\node[black!30!green, very thick, draw, circle,label=left:{$(1,3,*)$}] (C12) at (-3, 6)  {};
\node[draw, circle, fill,           label=left:{$(3,3,\bot)$}] (C32) at ( 3, 6)  {};

\node[draw, circle,label=left:{$(1,4,*)$}] (C13) at (-3, 8)  {};
\node[draw, circle,            label=left:{$(2,2,*)$}] (C23) at ( 0, 8)  {};
\node[draw, circle, fill,           label=left:{$(3,4,\bot)$}] (C33) at ( 3, 8)  {};

\node[draw, circle,label=left:{$(1,5,*)$}] (C14) at (-3, 10)  {};
\node[draw, circle,            label=left:{$(2,3,*)$}] (C24) at ( 0, 10)  {};
\node[draw, circle, fill,           label=left:{$(3,5,\bot)$}] (C34) at ( 3, 10)  {};

\draw[black!30!green, very thick,->] (C1I) -- (C10);
\draw[black!30!green, very thick,->] (C2I) -- (C20);
\draw[black!30!green, very thick,->] (C3I) -- (C30);

\draw[thick,blue] plot [smooth] coordinates { (C1I) (C2I) (C3I) } -- +(1,0) node[anchor=west] {$PTG^0$};

\draw[black!30!green, very thick,->] (C10) -- (C11);
\draw[->] (C20) -- (C23);
\draw[->] (C30) -- (C31);
\draw[black!30!green, very thick,->] (C20) -- (C11);
\draw[->] (C11) -- (C33);
\draw[->] (C10) -- (C32);
\draw[->] (C20) -- (C31);
\draw[black!30!green, very thick,->] (C30) -- (C12);

\draw[thick,blue] plot [smooth] coordinates { (C10) (C20) (C30) } -- +(1,0) node[anchor=west] {$PTG^1$};

\draw[black!30!green, very thick,->] (C11) -- (C12);
\draw[->] (C31) -- (C32);
\draw[->] (C12) -- (C23);
\draw[->] (C12) -- (C33);

\draw[thick,blue] plot [smooth] coordinates { (C11) (C20) (C31) } -- +(1,0) node[anchor=west] {$PTG^2$};

\draw[->] (C12) -- (C13);
\draw[->] (C32) -- (C33);

\draw[thick,blue] plot [smooth] coordinates { (C12) (C23) (C32) } -- +(1,0) node[anchor=west] {$PTG^3$};

\draw[->] (C13) -- (C14);
\draw[->] (C23) -- (C24);
\draw[->] (C33) -- (C34);
\draw[->] (C13) -- (C24);

\end{tikzpicture}
\caption{Example of a process-time graph prefix in a non-lockstep execution of a system of $n=3$ processes 
with initial values $x=(2,0,1)$, where process $p_3$ crashes in its step at time~$1$, in
round~1. 
The vertical axis is the global time axis, and nodes
at the same horizontal level occur at the same global time. 
The length of the edges represent end-to-end delay of
a message resp.\ the access latency of a shared memory operation. Process~$1$'s local view $V_{1}(PTG^3)$ in
$PTG^3$ is highlighted in bold green.}
\label{fig:pt}
\end{figure}

\medskip

Let $\PTG^t$ be the set of all possible process-time graph $t$-prefixes, and $\PTG^\omega$ be the set of all posible infinite process-time graphs, for all possible runs of our system.
Note carefully that $\PTG^t$, as well every set $\P^\ell$ of round-$\ell$ process-time graphs 
for finite $\ell$, is necessarily \emph{finite} (provided the encoding
($*$) for external environment events has a finite domain, which we assume). Clearly, 
$\PTG^t$ resp.\ $\PTG^\omega$ can be expressed as a finite resp.\ infinite 
sequence $(P^0,\dots,P^t) \in \P^0 \times \P^1 \times \dots \times \P^t = \PTG^t$ 
resp.\ $(P^0,P^1,\dots) \in \P^0 \times \P^1 \times \dots = \PTomega$ of 
round-$\ell$ process time graphs.\footnote{Note that we slightly 
abuse the notation $\PTomega$ here, which normally represents 
$\PT \times\PT\times \dots$.} 

We will denote by $PS\subseteq \PTomega$
the set of all admissible process-time graphs in the given model,
and by $\Sigma \subseteq \C^\omega$ the corresponding set of admissible
executions.
Note carefully that process-time graphs are \emph{independent} of
the (decision function of the) consensus algorithm, albeit they do depend on the initial values.

Due to the one-to-one of process-time graphs and executions established before, the topological machinery developed in \cref{sec:structure:executions}--\cref{sec:consensus} 
for $\Sigma\subseteq \Comega$ can also be applied to $PS \subseteq \PTomega$. Since, in sharp contrast 
to the set of configurations~$\C$, the set of process-time graphs $\PT^t$ is 
finite for any time~$t$ and hence compact in the discrete topology,
Tychonoff's theorem\footnote{Tychonoff's theorem states that any product of
compact spaces is compact (with respect to the product topology).} implies 
compactness of the $p$-view topology on $\PTomega$. 

Whereas this is not necessarily the case for~$\Comega$, we can prove compactness
of the image of $\PTomega$ under an appropriately defined operational
transition function:
Given the original transition function $\tau_\epsilon: \G \times \ACT_\epsilon \to 
\L_\epsilon$, it is possible to define a PTG transition function
$\otau: \PTomega \to \Comega$ that just provides the (unique) execution for a
given process-time graph. The following \cref{lem:tau:is:cont}
shows that $\otau$ is continuous in any of our topologies.

\begin{lemmarep}[Continuity of $\otau$]\label{lem:tau:is:cont}
For every $p\in \Pi$, the PTG transition function $\otau : \PTomega \to \Comega$ 
is continuous when both $\PTomega$ and $\Comega$ are endowed with any of
$d_p$, $p\in \Pi$, $\dunif$, $\dnonunif$.
\end{lemmarep}
\begin{proof}
Let $U\subseteq \Comega$ be open with respect to~$d_p$, and let 
$a \in \otau^{-1}[U]$.
Since~$U$ is open and $\otau(a)\in U$, there exists some $\varepsilon > 0$ such that
$B_\varepsilon\big( \otau(a) \big) \subseteq U$.
Let $t\in\N$ such that $2^{-t} \leq \varepsilon$.
We will show that $B_{2^{-t}}(a) \subseteq \otau^{-1}[U]$.
For this, it suffices to show that $\otau\big[B_{2^{-t}}(a)\big] \subseteq U$.
By the equivalence of process-time graph prefixes and the corresponding consistent cuts, which
is ensured by construction, it follows for the views of process $p$ that $V_p(a^t) = V_p(b^t)$ implies
$V_p(\otau(a)^t) = V_p(\otau(b)^t)$.
Using this in \cref{eq:Pviewpseudometric} implies
\[
\otau\big[B_{2^{-t}}(a)\big] 
\subseteq
B_{2^{-t}}(\otau(a))
\subseteq
B_{\varepsilon}(\otau(a))
\subseteq
U
\enspace,
\]
which proves that $\otau^{-1}[U]$ is open as needed. 

The proof for $\dunif$ resp.\ $\dnonunif$ is analogous, except that
\cref{eq:Pviewpseudometric} must be replaced by \cref{eq:dunif} resp.\
\cref{eq:dnonunif}.
\end{proof}

Since the image of a compact space under a continuous function is compact, 
it hence follows that the set $\otau[\PTomega] \subseteq \Comega$ of admissible
executions is a compact subspace of $\Comega$. 
The common structure of $\PTomega$ and its image under the PTG transition
function~$\otau$, implied by the continuity
of $\tau$, hence allows us to reason in either of these spaces.
In particular, with \cref{def:vvalent}, the analog of
\cref{thm:char:unif} and \cref{thm:char:nonunif} read as follows:

\begin{definition}[$v$-valent process-time graph]\label{def:vvalent}
We call a process-time graph $z_v$, for $v\in\V$, $v$-valent, if it
starts from an initial configuration where all processes $p\in\Pi$ have the same 
initial value $I_p=v$.
\end{definition}

\begin{theorem}[Characterization of uniform consensus]\label{thm:char:unif:PS}
Uniform consensus is solvable if and only if there exists
a partition of the set~$PS$ of admissible process-time graphs
into sets $PS_v$, $v\in\V$, such that
the following holds:
\begin{enumerate}
\item Every $PS_v$ is an open set in~$PS$ with respect to the
uniform topology induced by~$\dunif$.
\item If process-time graph $a \in PS$ is $v$-valent, then $a \in PS_v$.
\end{enumerate}
\end{theorem}

\begin{theorem}[Characterization of non-uniform consensus]\label{thm:char:nonunif:PS}
Non-uniform consensus is solvable if and only if there exists
a partition of the set~$PS$ of admissible process-time graphs
into sets $PS_v$, $v\in\V$, such that
the following holds:
\begin{enumerate}
\item Every $PS_v$ is an open  set in~$PS$ with respect to the
non-uniform topology induced by~$\dnonunif$.
\item If process-time graph $a \in PS$ is $v$-valent, then $a \in PS_v$.
\end{enumerate}
\end{theorem}

\end{document}